\documentclass{article}
\usepackage[colorlinks=true,linkcolor=blue]{hyperref}
\usepackage{amsmath}
\usepackage{amsfonts}
\usepackage{amssymb}
\usepackage{amsthm}
\usepackage[utf8]{inputenc}
\usepackage{graphicx}
\usepackage{url}
\usepackage{array,multirow,makecell}
\usepackage[T1]{fontenc}
\usepackage{setspace}
\usepackage{accents}
\usepackage{color}
\usepackage[11pt]{extsizes}
\usepackage[top=2cm, bottom=2cm, left=2cm, right=2cm]{geometry}
\usepackage{systeme}
\usepackage{comment}
\usepackage{bm}
\usepackage{appendix}
\usepackage{pgfplots}
\usepackage{authblk}
\usepackage{enumitem,refcount}

\newtheorem{theorem}{Theorem}[section]

\newtheorem{lemma}[theorem]{Lemma}
\newtheorem{Proposition}[theorem]{Proposition}

\theoremstyle{definition}
\newtheorem{definition}[theorem]{Definition}
\newtheorem{remark}[theorem]{Remark}
\newtheorem{example}{Example}

\makeatletter
\newenvironment{countlist}[2][]
  {\begin{enumerate}[#1]
     \setcounter{countlist}{0}%
     \def\countname{#2}%
     \let\olditem\item
     \renewcommand{\item}{\stepcounter{countlist}\olditem}}
  {  \renewcommand{\@currentlabel}{\arabic{countlist}}%
     \label{\countname}%
   \end{enumerate}}
\makeatother


\newcommand{\q}[1]{``#1''}

\DeclareMathOperator{\diag}{diag}
\DeclareMathOperator{\sign}{sgn}
\DeclareMathOperator\supp{supp}

\newcounter{countlist}

\title{Convex ordering for stochastic control: the (path dependent) swing contracts case.}

\author[1]{Gilles Pagès}
\author[1,2]{Christian Yeo}
\affil[1]{\footnotesize Sorbonne Université, Laboratoire de Probabilités, Statistique et Modélisation, UMR 8001, case 158, 4, pl. Jussieu,
F-75252 Paris Cedex 5, France}
\affil[2]{\footnotesize Engie Global Markets, 1 place Samuel Champlain, 92400 Courbevoie, France}

\date{}

\newcommand{\introthmname}{}
\newtheorem{introthminn}{\introthmname}
\newenvironment{introthm}[1]
  {\renewcommand{\introthmname}{#1}\begin{introthminn}}
  {\end{introthminn}}

\swapnumbers


\pgfplotsset{width=8cm,compat=1.14}
\begin{document}
\maketitle

\begin{abstract}
\noindent
We investigate propagation of convexity and convex ordering on a typical discrete-time stochastic optimal control problem, namely the pricing of swing option. The dynamics of the underlying asset is modelled by the Euler scheme of a Brownian diffusion with affine drift, and convex volatility. We prove that the value function associated to the stochastic optimal control problem is a convex function of the underlying asset price. We also introduce a domination criterion offering insights into the functional monotonicity of the value function with respect to parameters of the underlying dynamics. We particularly focus on the one-dimensional setting where, by means of Stein's formula and regularization techniques, we show that the convexity assumption for the volatility dynamics can be relaxed with a semi-convexity assumption. Finally, to validate our results, we also conduct numerical illustrations.
\end{abstract}

\textit{\textbf{Keywords} - swing option, convex order, convexity propagation, stochastic optimal control.}

\vspace{0.2cm}
\textit{\textbf{AMS subject classification (2020)}.} 60E15, 91G15, 90C39, 93E20.

\section*{Introduction}
This paper explores theoretical properties of the value function in a context of stochastic optimal control and in connection with convexity. To this end, we rely on \emph{convex ordering theory}. Denote by $\mathbb{L}^1_{\mathbb{R}^d}\big(\mathbb{P}\big)$ the space of $\mathbb{R}^d$-valued $\mathbb{P}$-integrable random vectors for some probability measure $\mathbb{P}$ and let $U, V \in \mathbb{L}^1_{\mathbb{R}^d}\big(\mathbb{P} \big)$ with respective distributions $\mu, \nu$. We say that \emph{$U$ is dominated for the convex order by $V$}, denoted $U \preceq_{cvx} V$, if, for every convex function $f : \mathbb{R}^d \to \mathbb{R}$, one has
\begin{equation}
\label{cvx_ord_def}
\mathbb{E}f(U) \le \mathbb{E}f(V)
\end{equation}
or, equivalently, that $\mu$ is dominated for the convex order by $\nu$ (denoted $\mu \preceq_{cvx} \nu$) if, for every convex function $f : \mathbb{R}^d \to \mathbb{R}$,
\begin{equation}
\label{cvx_ord}
\int_{\mathbb{R}^d}^{} f(\xi) \,d\mu(\xi)  \le \int_{\mathbb{R}^d}^{} f(\xi) \,d\nu(\xi).
\end{equation}
It can be shown that, in the preceding definition, we may restrict to Lipschitz continuous and convex functions (see Lemma \ref{rq_carac_ord_cvx}) or to convex functions with at most linear growth (see Lemma A.1 in \cite{10.1214/19-AIHP1014}). Besides, it is to note that the convex order definition implies that both distributions have the same expectation thanks to the convexity of functions $f(\xi)= \pm \xi_i$, $i=1:d$.

If $d=1$ and \eqref{cvx_ord_def} or \eqref{cvx_ord} only holds for every \emph{non-decreasing} (resp. \emph{non-increasing}) \emph{convex functions}, we speak of domination for the \emph{non-decreasing} (resp. \emph{non-increasing}) \emph{convex order} denoted by $U \preceq_{ivx} V$ (resp. $U \preceq_{dvx} V$). Besides, the preceding definition of convex ordering is consistent in the sense that, for any integrable $\mathbb{R}^d$-valued random vector $U$ and for any convex function $f : \mathbb{R}^d \to \mathbb{R}$, one has $\mathbb{E}f(U) \in (-\infty, +\infty]$ (see Appendix \eqref{rq_consistency} for a proof).

Convex ordering-based approaches have recently been used to compare (possibly path-dependent) European option prices \cite{Bergenthu} or American option prices in local volatility models \cite{Pages_cvx_ord_path_dep}. Similar results have been obtained for the expectation of Mckean Vlasov processes \cite{liu2022functional, LIU2022312} (with an application to mean-field games) and more recently to compare solutions of Volterra equations \cite{jourdain:hal-03862241} with possibly singular kernels. Our paper builds upon and follows the numerous fairly recent studies on the usage of convex ordering theory. In our case, we intend to extend the latter type of results to Stochastic Optimal Control problems. To this end, we focus on \emph{swing contracts}. This commodity derivative product enables its holder to purchase amounts of energy $q_k$, at predetermined exercise dates, $t_k = \frac{kT}{n}$ ($k = 0, \ldots, n-1$), until the contract maturity at time $t_n = T$. The purchase price, or \emph{strike price}, at each exercise date, is denoted by $K_k$ and can be either a constant value (i.e., $K_k = K$, where $k = 0, \ldots, n-1$) or indexed on either the same commodity or another commodity past/future prices. The holder of the swing contract ought to purchase, at every time $t_k$, a quantity $q_k$ of the commodity subject to \emph{local constraints} i.e.
$$\underline{q} \le q_{k} \le \overline{q}, \quad k \in \{0,\ldots,n-1\}.$$
There also exists a \emph{global constraint} meaning that, at the maturity of the contract:
    \begin{equation}
    Q_{n} = \sum_{k = 0}^{n-1} q_{k} \in \big[\underline{Q}, \overline{Q}\big], \quad \text{with} \quad Q_0 = 0 \quad \text{and} \quad 0 \le \underline{Q} \le \overline{Q} < +\infty.
    \end{equation}
Such a swing contract is called a \emph{take-or-pay} contract, where all constraints are firm. Besides, there exists an alternative setting where the holder of the contract has to respect local constraints but in case of violation of global constraints, she has to pay a penalty, at the expiry date $t_n = T$, that is proportional to the excess/deficit of consumption.

For the two volume constraint settings aforementioned, at each exercise date $t_k$, the reachable cumulative consumption $Q_k = \sum_{\ell = 0}^{k-1} q_\ell$ (with $Q_0  =0$) lies within the set $\mathcal{Q}_m(t_k)$, for $m \in \{\emph{firm}, \emph{pen}\}$, defined by (we assume $\underline{q} = 0$ since one may always be reduced to this case \cite{Bardou2009OptimalQF})
\begin{equation}
\mathcal{Q}_{\emph{firm}}(t_k) := \Big[\underbrace{\max\big(0, \underline{Q} - (n-k) \cdot \overline{q} \big)}_{Q^d(t_{k})}, \underbrace{\min\big(k \cdot \overline{q}, \overline{Q}\big)}_{Q^u(t_{k})}\Big] \quad \text{and} \quad\mathcal{Q}_{\emph{pen}}(t_k) := \big[0, k \cdot \overline{q}\big].
\end{equation}

The pricing of swing contracts appears as a (discrete time) stochastic optimal control problem with constraints, where the sequence $(q_k)_{0 \le k \le n-1}$ represents the control. As such, this contract can be evaluated through the \emph{Backward Dynamic Programming Equation}. To fix the probabilistic framework, we assume that the underlying asset (generally a forward contract) has a price $F_{t_k}$ at time $t_k$ that can be expressed as:
\begin{equation}
\label{def_fwd_gen}
F_{t_k} = f\big(t_k, X_{t_k}\big), \quad k \in \{0,\ldots,n\},
\end{equation}
where $f:[0, T] \times \mathbb{R}^d \to \mathbb{R}_{+}$ is a Borel function and $\big(X_{t_k}\big)_{0 \le k \le n}$ is an $\mathbb{R}^d$-valued Markov process. We also consider a filtered probability space $\big(\Omega, \{\mathcal{F}_{t_k}^X, 0 \le k \le n\}, \mathbb{P}\big)$, where $\big(\mathcal{F}_{t_k}^X, 0 \le k \le n\big)$ is the natural (completed) filtration of $(X_{t_k})_{0 \le k \le n}$ satisfying usual conditions. Then, we assume that the decision process $(q_{k})_{0 \le k \le n-1}$ is defined on the same probability space and is $\mathcal{F}_{t_k}^X$-adapted.

At each time $t_k$, by purchasing an amount $q \ge 0$, the holder of the swing contract makes an algebraic profit given by the following payoff function:
\begin{align}
\label{gen_payoff}
\Psi_k \colon [0, T] \times \mathbb{R}_{+} \times (\mathbb{R}^d)^{k+1}  &\to \mathbb{R} \nonumber \\
	(t_k, q, x_{0:k}) &\mapsto \Psi_k\big(t_k, q, x_{0:k}\big),
\end{align}
where we used the notation $x_{0:k} = (x_0, \ldots, x_k) \in (\mathbb{R}^d)^{k+1}$ which, at the same time, emphasizes the fact that we enlarge the class of payoff functions to include path-dependency as it could be for actual swing contracts. Then, under mild assumptions, for $k \in \{0,\ldots,n\}$, $c \in \{\emph{firm}, \emph{pen}\}$ and $Q_k \in \mathcal{Q}_c(t_k)$, one may prove (see \cite{Bardou2007WhenAS, BarreraEsteve2006NumericalMF}) that the swing price is given by the resolution of the following backward equation:
\begin{equation}
    \left\{
    \begin{array}{ll}
        v_k\big(x_{0:k}, Q_k\big) = \underset{q \in \mathbb{A}_c(t_k, Q_k)}{\sup} \hspace{0.1cm}\Bigg[ \Psi_k\big(t_{k}, q, x_{0:k}\big) + \mathbb{E}\Big(v_{k+1}\big(x_{0:k}, X_{t_{k+1}}, Q_k+q \big) \big\rvert X_{t_{0:k}} = x_{0:k} \Big)  \Bigg],\\
        \vspace{0.2cm}
        v_{n}\big(x_{0:n}, Q_{n}\big) = P_c\big(t_n, x_{0:n}, Q_n\big),
    \end{array}
\right.
\label{eq_dp_swing_generic}
\end{equation}
where $X_{t_{0:k}} := \big(X_{t_0},\ldots,X_{t_k}\big)$ and $\mathbb{A}_c(t_k, Q_k)$ for $c \in \{\emph{firm}, \emph{pen}\}$ is the set of admissible controls at time $t_k$ depending on the cumulative consumption $Q_k$ up to time $t_{k-1}$. As already mentioned, the index $c$ designates the constraint type, so that the preceding set is defined by:
\begin{equation}
    \left\{
    \begin{array}{ll}
       \mathbb{A}_{\emph{firm}}(t_k, Q_k) := \big[\max\big(0, Q^{d}({t_{k+1}}) - Q_k\big), \min\big(\overline{q}, Q^{u}({t_{k+1}}) - Q_k\big) \big],\\
        \\
        \mathbb{A}_{\emph{pen}}(t_k, Q_k) := [0, \overline{q}].
    \end{array}
\right.
\label{set_adm}
\end{equation}
In Equation \eqref{eq_dp_swing_generic}, the function $P_c$, defined on $[0, T] \times (\mathbb{R}^d)^{n+1} \times \mathbb{R}_{+}$, corresponds to the penalty function:
\begin{equation}
    \left\{
    \begin{array}{ll}
       P_{\emph{firm}}\big(t_n, x_{0:n}, Q_n\big) = 0,\\
        P_{\emph{pen}}\big(t_n, x_{0:n}, Q_n\big) := -f\big(t_n, x_{0:n}\big) \cdot \Big(A \cdot \big(Q_n - \underline{Q}\big)_{-} + B \cdot \big(Q_n - \overline{Q}\big)_{+}  \Big),
    \end{array}
\right.
\label{penalty_func}
\end{equation}
where $A$ and $B$ are positive real constants.

Our stochastic optimal control problem is discrete by nature. In this context, we suppose that $\big(X_{t_k}\big)_{0 \le k \le n}$ are the values at times $t_k$ of the $\mathbb{R}^d$-valued Brownian diffusion process with affine drift, solution of the following stochastic differential equation:
\begin{equation}
    \label{general_diffusion_X}
    X_t^x = x + \int_{0}^{t} \kappa(s) \big(X_s^x - \zeta\big) \,ds + \int_{0}^{t} \sigma_s(X_s^x) \,dW_s, \quad x \in \mathbb{R}^d,
\end{equation}
where $(W_t)_{t \in [0, T]}$ is a $q$-dimensional Brownian motion, $\kappa : [0,T] \to \mathbb{R}$ is a continuous function, and $\zeta \in \mathbb{R}^d$. Besides, we also adopt the notation $\sigma_t$ for the volatility field to mean:
$$\sigma_t := [0, T] \times \mathbb{R}^d \ni (t, x) \mapsto \sigma(t, x) \in \mathbb{M}_{d,q}(\mathbb{R})$$
and we suppose that it satisfies the standard space Lipschitz assumption, uniform in time:
\begin{equation}
    \label{hyp_strong_sol}
    \exists K> 0, \quad \forall t \in [0, T], \forall x, y \in \mathbb{R}^d, \quad  \|\sigma_t(x) - \sigma_t(y)\|\le K |x-y|
\end{equation}
and
\begin{equation}
    \underset{t \in [0, T]}{\sup} \hspace{0.1cm} \|\sigma_t(0)\| < + \infty,
\end{equation}
where for convenience, we assume that $\|\cdot \|$ is the Frobenius norm on $\mathbb{M}_{d, q}(\mathbb{R})$ (see notations further on).

With $X_{t_k}$ given by \eqref{general_diffusion_X} at exercise date $t_k$, we still consider the \emph{BDPP} \eqref{eq_dp_swing_generic}. Then, denote by $v_k^{[\sigma]}$ the swing value function in the \emph{BDPP} at exercise date $t_k$, where the superscript $[\sigma]$ is to emphasize the volatility of the underlying process $X_{t_k}$. To state typical results established in this paper, we need to introduce the following pre-order on $\mathbb{M}_{d, q}(\mathbb{R})$, namely
$$A \preceq B \hspace{0.2cm} \text{if} \hspace{0.2cm} BB^\top-AA^\top \hspace{0.3cm} \text{is a non-negative matrix.}$$ 
We then prove the following results.

\begin{introthm}{Theorem}
\label{main_thm_intro}
    Suppose that for any $c \in \{firm, pen\}$, $k \in \{0,\ldots,n-1\}$, $Q_n \in \mathcal{Q}_c(t_n)$ and $q \in [0, \overline{q}]$, the functions $(\mathbb{R}^d)^{k+1} \ni x_{0:k} \mapsto \Psi_k(t_k, q, x_{0:k})$, $(\mathbb{R}^d)^{n+1} \ni x_{0:n} \mapsto P_c(t_n, x_{0:n}, Q_n)$ are convex and Lipschitz and continuous. Then, one has the following two results.
    
\vspace{0.1cm}
\noindent
\textbf{(P1). [Convexity propagation]} If for all $t \in [0,T]$, $\sigma_{t}$ is \q{convex} (where the convexity should be understood in the sense of Definition \ref{def_cvx_matrix} further on, and linked to the pre-order defined above), then, at any time $t_k$, the swing price function $(\mathbb{R}^d)^{k+1} \ni x_{0:k} \mapsto v_k(x_{0:k}, Q_k)$ is convex.

\vspace{0.2cm}
\noindent
\textbf{(P2). [Domination criterion]} Consider two volatility functions $\sigma_{t}, \theta_{t}: \mathbb{R}^d \to \mathbb{R}_{+}$ for $t \in [0, T]$. If either $\sigma_{t}$ or $\theta_{t}$ is \q{convex} for all $t \in [0, T]$, then one has the following comparison result:
$$\big(\forall t \in [0, T], \hspace{0.1cm} \forall x \in \mathbb{R}^d, \quad \sigma_{t}(x)  \preceq \theta_{t}(x) \big)  \implies \Big(\forall x \in \mathbb{R}^d, k \in \{0,\ldots,n\}, Q_k \in \mathcal{Q}_c(t_k), \quad v_k^{[\sigma]}(x, Q_k) \le v_k^{[\theta]}(x, Q_k) \Big).$$
\end{introthm}

Our aim is to prove this theorem in a constructive way using an approach based on simulable approximations of the underlying state diffusion process $(X_t)_{t \in [0, T]}$ so that the stochastic control problem associated to these approximations also share the same properties as in Theorem \ref{main_thm_intro}. For options like European options for instance, the propagation of convexity mentioned in Theorem \ref{main_thm_intro} could follow owing to the partial differential equation (\emph{PDE}) based approach in \cite{LIONS2006915}. But here, in our discrete-time problem, we propose a probabilistic proof of the propagation of convexity for swing prices as well as for their numerical approximation schemes. Besides, to the best of our knowledge it does not exist a probabilistic proof of the \emph{PDE} approach in \cite{LIONS2006915}. For practitioners, the numerical scheme aspect, we described above, is crucial to derive efficient and robust numerical methods for the computation of the swing option prices. In particular, it is not clear at all that the characterization of convexity propagation provided in \cite{LIONS2006915} applies to these approximation schemes and anyway they only cover the \q{vanilla} case of marginal functionals of the form $\mathbb{E}f\big(X_t\big), t \in [0, T]$.

Both results in the preceding theorem are derived using the \emph{Backward Dynamic Programming Principle} (\emph{BDPP}) presented in \eqref{eq_dp_swing_generic}. Convex ordering theory will intervene in the propagation of convexity through what is often called the \emph{continuation value} which is represented by the conditional expectation involved in the \emph{BDPP}. Besides, an important part of this paper will be dedicated to the one-dimensional setting ($d=q=1$), where we show that the convexity of the volatility function $\sigma_{t}$ may be relaxed with a semi-convexity assumption.

\vspace{0.4cm}
\noindent
\textbf{\textsl{Notations}.}
$\bullet$ $\mathbb{R}^d$ is equipped with the canonical Euclidean norm denoted by $|\cdot|$. 

\noindent
$\bullet$ For all $x=(x_1, \ldots, x_d), y=(y_1,\ldots,y_d) \in \mathbb{R}^d$, $\langle x, y \rangle = \sum_{i = 1}^d x_iy_i$ denotes the canonical inner (Euclidean) product. 

\noindent
$\bullet$ $x_{0:k}$ denotes the vector $(x_0, \ldots, x_k)$. Then for any $k\in \mathbb{N}$, $(\mathbb{R}^d)^{k+1}$ is equipped with the norm:
$$\forall x_{0:k} \in (\mathbb{R}^d)^{k+1}, \quad \big|x_{0:k}\big|_{(k)} := \max\big(|x_0|, \ldots, |x_k|\big).$$
$\bullet$ $\mathbb{M}_{d,q}\big(\mathbb{R}\big)$ denotes the space of real-valued matrix with $d$ rows and $q$ columns and is equipped with either the classical Fröbenius norm or the operator norm defined respectively by,
$$\|A\|_F = \sqrt{Tr(AA^\top)} = \sqrt{\sum_{i = 1}^{d} \sum_{j = 1}^{q} a_{i, j}^2} \quad \text{and} \quad \|A\|_{op} = \underset{|x| = 1}{\sup} \hspace{0.1cm} \big|Ax\big|.$$
$\bullet$ $\mathcal{B}(0, R)$ denotes the closed ball with radius $R \ge 1$ defined by $\{x \in \mathbb{R}^d : |x| \le R\}$. We also define the uniform norm on this ball: for any continuous function $f : \mathbb{R}^d \to \mathbb{R}$:
$$\big\|f \big\|_{\mathcal{B}(0, R)} := \underset{x \in \mathcal{B}(0, R)}{\sup} \hspace{0.1cm} |f(x)|.$$

\noindent
$\bullet$ $\mathcal{S}^{+}\big(q, \mathbb{R}\big)$ and $\mathcal{O}\big(q, \mathbb{R}\big)$ denote respectively the subsets of $\mathbb{M}_{q, q}\big(\mathbb{R}\big)$ of symmetric positive semi-definite and orthogonal matrices with real entries. 

\noindent
$\bullet$ For a countable set $E$, $|E|$ denotes its cardinality. 

\noindent
$\bullet$ $\mathcal{C}^2\big(\mathbb{R}\big)$ will denotes the set of real-valued twice continuously differentiable functions.

\noindent
$\bullet$ $f \ast g$ is the convolution product on $\mathbb{R}$ between two functions $f, g : \mathbb{K} \to \mathbb{K}$, defined by
$$\big(f \ast g \big)(x) := \int_{\mathbb{K}}^{} f(t) \cdot g(x-t) \, \mathrm{d}t = \int_{\mathbb{K}}^{} f(x-t) \cdot g(t) \, \mathrm{d}t.$$

\noindent
$\bullet$ $\mathcal{P}_p\big(\mathbb{R}^d\big)$ denotes the set of probability distributions on $\mathbb{R}^d$ with $p^{th}$ finite moment. 

\noindent
$\bullet$ For all $x \in \mathbb{R}$, $\sign(x) := \mathbf{1}_{x > 0} - \mathbf{1}_{x < 0}$ and $\overline{\sign}(x) := \mathbf{1}_{x \ge 0} - \mathbf{1}_{x < 0}$.

\numberwithin{equation}{section}

\section{Ordering of swing contract values}
\label{swing_cont}

\subsection{Short background on convex ordering}
We start this section by some preliminaries following the definition of convex ordering. This will serve as basis to establish our main results. We refer the reader to Appendix \ref{appendix_proof_some_prop} for a proof.

\begin{Proposition}
\label{prop_first_prop_cvx_ord}
\begin{countlist}[label={(\alph*)}]{otherlist}
    \item \label{coherent_cvx_ord_varriance} Let $U, V \in \mathbb{L}^2_{\mathbb{R}^d}\big( \mathbb{P} \big)$ and define $\mathbb{V}ar(U) := \mathbb{E}|U|^2 - |\mathbb{E}U|^2$. Then, by setting $f : x \mapsto x_i^2$, $i=1:d$, one has,
	$$U \preceq_{cvx} V \implies \mathbb{V}ar(U) \le \mathbb{V}ar(V).$$  
  
  \item \label{comp_gauss_cvx_ord} (\textbf{Gaussian distributions (centered)}): Let $Z \sim \mathcal{N}(0,\mathbf{I}_q)$ on $\mathbb{R}^q$ and let $A, B \in \mathbb{M}_{d, q}\big(\mathbb{R}\big)$, then
$$BB^\top - AA^\top \in \mathcal{S}^{+}\big(d, \mathbb{R}\big) \implies AZ \preceq_{cvx} BZ$$

	\noindent
	or equivalently $\mathcal{N}(0,AA^\top) \preceq_{cvx} \mathcal{N}(0,BB^\top)$. In particular, if $d = q = 1$, one has
	$$|\sigma| \le |\vartheta| \implies \mathcal{N}(0,\sigma^2) \preceq_{cvx} \mathcal{N}(0,\vartheta^2).$$
\end{countlist}
\end{Proposition}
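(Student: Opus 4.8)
The plan is to prove the two parts separately, with part \ref{comp_gauss_cvx_ord} carrying the substantive content.

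For part \ref{coherent_cvx_ord_varriance}, I would first exploit the remark made right after \eqref{cvx_ord}: since each map $\xi \mapsto \pm \xi_i$ is convex and $U, V \in \mathbb{L}^2_{\mathbb{R}^d}(\mathbb{P}) \subset \mathbb{L}^1_{\mathbb{R}^d}(\mathbb{P})$, applying \eqref{cvx_ord_def} to $f(\xi) = \xi_i$ and to $f(\xi) = -\xi_i$ forces $\mathbb{E}U_i = \mathbb{E}V_i$ for every $i=1:d$, hence $\mathbb{E}U = \mathbb{E}V$ and $|\mathbb{E}U|^2 = |\mathbb{E}V|^2$. Next, each coordinate map $\xi \mapsto \xi_i^2$ is convex, and it is integrable against the laws of $U$ and $V$ precisely because these vectors are square-integrable; applying \eqref{cvx_ord_def} to $f(\xi)=\xi_i^2$ and summing over $i=1:d$ gives
$$\mathbb{E}|U|^2 = \sum_{i=1}^d \mathbb{E}U_i^2 \le \sum_{i=1}^d \mathbb{E}V_i^2 = \mathbb{E}|V|^2 .$$
Subtracting the equal squared-mean terms yields $\mathbb{V}ar(U) = \mathbb{E}|U|^2 - |\mathbb{E}U|^2 \le \mathbb{E}|V|^2 - |\mathbb{E}V|^2 = \mathbb{V}ar(V)$.

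For part \ref{comp_gauss_cvx_ord}, the key structural observation is that $\Delta := BB^\top - AA^\top \in \mathcal{S}^{+}(d,\mathbb{R})$ admits a factorization $\Delta = CC^\top$, for instance through its symmetric PSD square root $C = \Delta^{1/2} \in \mathbb{M}_{d,d}(\mathbb{R})$. I would then introduce two \emph{independent} standard Gaussian vectors $Z_1 \sim \mathcal{N}(0, I_q)$ and $Z_2 \sim \mathcal{N}(0, I_d)$ and observe the covariance bookkeeping
$$AZ_1 + CZ_2 \sim \mathcal{N}\big(0, AA^\top + CC^\top\big) = \mathcal{N}\big(0, BB^\top\big),$$
so that $AZ_1 + CZ_2$ has the same law as $BZ$, while $AZ_1 \sim AZ$. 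Hence it suffices to compare $\mathbb{E}f(AZ_1)$ with $\mathbb{E}f(AZ_1 + CZ_2)$ for an arbitrary convex $f:\mathbb{R}^d\to\mathbb{R}$. By Lemma \ref{rq_carac_ord_cvx} I may assume $f$ Lipschitz, which guarantees that $f(AZ_1+CZ_2)$ and $f(AZ_1)$ are integrable against the Gaussian law. Conditioning on $Z_1$ and applying Jensen's inequality to the convex $f$, together with $\mathbb{E}[Z_2 \mid Z_1] = \mathbb{E}Z_2 = 0$ (by independence and centering), gives
$$\mathbb{E}\big[f(AZ_1 + CZ_2) \mid Z_1\big] \ge f\big(AZ_1 + C\,\mathbb{E}[Z_2\mid Z_1]\big) = f(AZ_1).$$
Taking expectations yields $\mathbb{E}f(AZ) \le \mathbb{E}f(BZ)$, i.e. $AZ \preceq_{cvx} BZ$, equivalently $\mathcal{N}(0,AA^\top) \preceq_{cvx} \mathcal{N}(0,BB^\top)$. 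The one-dimensional statement is the specialization $d=q=1$, where $|\sigma| \le |\vartheta|$ is exactly $\vartheta^2 - \sigma^2 \ge 0$, i.e. $\vartheta^2-\sigma^2 \in \mathcal{S}^{+}(1,\mathbb{R})$.

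The step requiring the most care, and the place where the argument could go wrong if handled loosely, is the conditional Jensen construction in part \ref{comp_gauss_cvx_ord}: one must secure the PSD factorization and the \emph{independence} of the auxiliary Gaussian so that the identity $AA^\top + CC^\top = BB^\top$ holds exactly, and one must justify integrability so that the conditional expectation and Jensen's inequality are legitimate for the (possibly non-integrable) convex $f$. Restricting to Lipschitz convex test functions via Lemma \ref{rq_carac_ord_cvx} is precisely what makes this integrability transparent; everything else reduces to routine linear algebra and moment bookkeeping.
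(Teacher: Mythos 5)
Your proposal is correct and takes essentially the same approach as the paper: for part \ref{comp_gauss_cvx_ord} the paper likewise sets $U = AZ_1$ and $V = U + \big(BB^\top - AA^\top\big)^{1/2}Z_2$ with $Z_1, Z_2$ independent standard Gaussians and concludes via conditional Jensen's inequality, phrased there as the martingale-coupling identity $U = \mathbb{E}(V \mid U)$, which is exactly your conditioning-on-$Z_1$ step. Your two additions --- the explicit proof of part \ref{coherent_cvx_ord_varriance}, which the paper omits, and the reduction to Lipschitz convex $f$ via Lemma \ref{rq_carac_ord_cvx} to secure integrability --- are minor refinements of the same argument.
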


\begin{remark}[Convex ordering and risk measure]
Convex ordering has been widely used in actuarial science and quantitative finance to quantify or compare risk through the notion of risk measure \cite{BURGERT2006289, BAUERLE2006132, Gupta_cvx_actuarial_science}. In particular, one may show that convex ordering is consistent with certain risk measures on certain probability spaces \cite{BAUERLE2006132}. The latter means that for two integrable random variables $X, Y$ (which may represent potential losses) and an appropriate risk measure $\rho : \mathbb{L}_{\mathbb{R}}^1\big(\Omega, \mathcal{A}, \mathbb{P}\big) \to \mathbb{R}$ for some probability space $\big(\Omega, \mathcal{A}, \mathbb{P}\big)$, one has,
$$X \preceq_{cvx} Y \quad\text{or} \quad X  \preceq_{icv} Y \implies \rho(X) \le \rho(Y).$$
\noindent
Following this definition of the coherence, one may notice that, convex ordering is coherent with standard deviation, as a deviation risk measure \cite{article_Rockafellar_deviation_risk_meas}, based on result \ref{coherent_cvx_ord_varriance}, and with the classic Conditional Value at Risk CVaR assuming $X, Y$ (which is straightforward using the Rockafeller-Uryasev representation \cite{ROCKAFELLAR20021443}). 
\end{remark}

It is worth noting that, in the preceding proposition, claim \ref{comp_gauss_cvx_ord} admits a generalization for \emph{radial distributions}. This is stated in the following proposition and we refer the reader to \cite{jourdain:hal-02304190} for a proof.

\begin{Proposition}[Radial distributions (generalization)]
\label{gen_radial_distri}
Let $Z \in \mathbb{R}^q$ be a $q$-dimensional random vector having a radial distribution in the sense that
$$\forall \quad O \in \mathcal{O}\big(q, \mathbb{R}\big), \quad OZ \sim Z.$$
Let $A, B \in \mathbb{M}_{d, q}\big(\mathbb{R}\big)$. Then, we have the following equivalence
\begin{equation}
\label{equiv_order}
BB^\top - AA^\top \in \mathcal{S}^{+}\big(d, \mathbb{R}\big) \iff AZ \preceq_{cvx} BZ.
\end{equation}
\end{Proposition}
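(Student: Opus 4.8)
The plan is to prove the two implications in \eqref{equiv_order} separately, using throughout the reduction (legitimate by Lemma \ref{rq_carac_ord_cvx}) to \emph{Lipschitz continuous} convex test functions, so that every expectation below is finite as soon as $Z \in \mathbb{L}^1_{\mathbb{R}^q}(\mathbb{P})$. I will also use the basic structural fact that a radial $Z$ is isotropic: whenever $Z \in \mathbb{L}^2$ one has $\mathbb{E}\big[ZZ^\top\big] = c\,I_q$ with $c = \tfrac{1}{q}\mathbb{E}|Z|^2$, and that $Z$ is invariant under every coordinate reflection $Z_i \mapsto -Z_i$ (these being elements of $\mathcal{O}(q,\mathbb{R})$).

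\textbf{Reverse implication.} Assuming $AZ \preceq_{cvx} BZ$ (with $Z$ nondegenerate and square integrable, so $c>0$), I would test against the convex functions $f_u(x) = \langle u, x\rangle^2$ for arbitrary $u \in \mathbb{R}^d$. Since $\mathbb{E}\langle u, AZ\rangle^2 = c\,u^\top AA^\top u$ and likewise for $B$, the convex ordering gives $u^\top(BB^\top - AA^\top)u \ge 0$ for every $u$, i.e. $BB^\top - AA^\top \in \mathcal{S}^+(d,\mathbb{R})$. For heavy-tailed radial $Z$ one first tests against the Huber-type truncations of $f_u$ (convex, with linear growth) and passes to the limit by monotone convergence. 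This direction is comparatively soft; the only delicate point is the moment and nondegeneracy assumption on $Z$ needed for the quadratic probes to carry information.

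\textbf{Forward implication.} Assume $BB^\top - AA^\top \in \mathcal{S}^+(d,\mathbb{R})$. I would proceed in two steps. First, I reduce the statement to a contraction lemma: the inequality $AA^\top \preceq BB^\top$ forces $\mathrm{ran}(A) \subseteq \mathrm{ran}(B)$, so by Douglas' factorization lemma there exists $M \in \mathbb{M}_{q,q}(\mathbb{R})$ with $A = BM$ and $\|M\|_{op} \le 1$ (when $B$ is invertible one simply takes $M = B^{-1}A$ and checks $MM^\top \preceq I_q$; the general rectangular/singular case can alternatively be reached by the approximation $BB^\top \rightsquigarrow BB^\top + \varepsilon I_d$ and $\varepsilon \downarrow 0$, the Lipschitz reduction providing the needed continuity). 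Granting the contraction lemma $MZ \preceq_{cvx} Z$ below, for convex Lipschitz $f$ the function $g := f \circ B$ is again convex and $\mathbb{E}f(AZ) = \mathbb{E}g(MZ) \le \mathbb{E}g(Z) = \mathbb{E}f(BZ)$, which is exactly $AZ \preceq_{cvx} BZ$.

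\textbf{The contraction lemma.} It remains to show $\|M\|_{op} \le 1 \implies MZ \preceq_{cvx} Z$. Writing the singular value decomposition $M = O_1 \Sigma O_2^\top$ with $O_1, O_2 \in \mathcal{O}(q,\mathbb{R})$ and $\Sigma = \diag(s_1,\dots,s_q)$, $s_i \in [0,1]$, radial invariance ($O_2^\top Z \sim Z$ and $O_1 Z \sim Z$) turns the comparison into $\mathbb{E}\tilde g(\Sigma Z) \le \mathbb{E}\tilde g(Z)$ for the convex function $\tilde g := g \circ O_1$. I would establish this coordinate by coordinate, replacing $Z_i$ by $s_i Z_i$ one index at a time and conditioning on the remaining coordinates: the reflection invariance of the radial law makes the conditional distribution of $Z_i$ symmetric about $0$, hence centered, and combined with the convexity of $\tilde g$ in that single variable and the elementary fact that $sX \preceq_{cvx} X$ for any centered $X$ and $s \in [0,1]$ (a one-line Jensen argument, $\mathbb{E}\phi(sX) \le s\,\mathbb{E}\phi(X) + (1-s)\phi(0) \le \mathbb{E}\phi(X)$), each replacement can only decrease the expectation. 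The main obstacle lies precisely here — in making the coordinatewise conditioning rigorous via regular conditional distributions and the reflection symmetry of radial laws, and in disposing cleanly of the non-invertible/rectangular case in the factorization $A = BM$; the convolution trick available for the Gaussian case of Proposition \ref{prop_first_prop_cvx_ord}\ref{comp_gauss_cvx_ord} does not survive here, since a sum of independent radial vectors need not be radial, which is why the SVD-plus-conditioning route replaces it.
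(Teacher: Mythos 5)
There is no in-paper proof to compare against: the paper states Proposition \ref{gen_radial_distri} and defers its proof to \cite{jourdain:hal-02304190}, so your argument must be judged on its own merits. The forward implication is correct and follows what is essentially the standard route. Douglas' factorization (valid in finite dimension exactly as you describe) gives $A=BM$ with $\|M\|_{op}\le 1$; composing a convex Lipschitz $f$ with $B$ preserves convexity; and the contraction lemma is reduced by SVD and two applications of radial invariance to showing $\mathbb{E}\tilde g(\Sigma Z)\le\mathbb{E}\tilde g(Z)$ for $\Sigma=\diag(s_1,\ldots,s_q)$, $s_i\in[0,1]$. Your coordinatewise conditioning does work — reflection invariance makes the conditional law of $Z_i$ given $Z_{-i}$ symmetric, hence centered, and your one-line Jensen bound applies conditionally — but it is heavier than needed: since each $s_i\in[-1,1]$, one can write $\Sigma=\sum_{\epsilon\in\{-1,1\}^q}\lambda_\epsilon D_\epsilon$ with $\lambda_\epsilon=\prod_{i=1}^q\tfrac{1+\epsilon_i s_i}{2}$ and $D_\epsilon=\diag(\epsilon_1,\ldots,\epsilon_q)$, i.e. as a convex combination of diagonal sign matrices; each $D_\epsilon$ is orthogonal, so $D_\epsilon Z\sim Z$, and Jensen's inequality gives $\mathbb{E}\tilde g(\Sigma Z)\le\sum_\epsilon\lambda_\epsilon\,\mathbb{E}\tilde g(D_\epsilon Z)=\mathbb{E}\tilde g(Z)$ with no regular conditional distributions at all.

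The genuine gap is in the reverse implication when $Z\in\mathbb{L}^1\setminus\mathbb{L}^2$. Your fallback — Huber truncations of $f_u(x)=\langle u,x\rangle^2$ followed by monotone convergence — only produces the inequality $\mathbb{E}\langle u,AZ\rangle^2\le\mathbb{E}\langle u,BZ\rangle^2$ as an inequality in $[0,+\infty]$; by radiality $\mathbb{E}\langle u,AZ\rangle^2=|A^\top u|^2\,\mathbb{E}Z_1^2$, so when $\mathbb{E}Z_1^2=+\infty$ both sides are infinite for every $u$ with $A^\top u\neq 0\neq B^\top u$, and nothing about $AA^\top$ versus $BB^\top$ can be extracted. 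Since the proposition assumes nothing beyond the integrability needed for convex order to make sense, this case must be covered. The fix is to probe with first moments instead of second moments: take $f_u(x):=|\langle u,x\rangle|$; radial invariance gives $\langle u,AZ\rangle=\langle A^\top u,Z\rangle\sim|A^\top u|\,Z_1$, so the assumed convex order yields $|A^\top u|\,\mathbb{E}|Z_1|\le|B^\top u|\,\mathbb{E}|Z_1|$ for every $u\in\mathbb{R}^d$, and non-degeneracy ($\mathbb{E}|Z_1|>0$) gives $BB^\top-AA^\top\in\mathcal{S}^{+}(d,\mathbb{R})$. You were right to flag non-degeneracy: it is not cosmetic, since for $Z=0$ a.s. the right-hand side of \eqref{equiv_order} holds for all $A,B$ while the left-hand side can fail, so it must be read as an implicit hypothesis of the proposition.
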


Then, to establish our main results, we will need a notion of convexity for matrix fields.

\begin{definition}
\label{def_cvx_matrix}
\begin{enumerate}[label=\roman*.]
	\item \label{pre_order} \textbf{Preorder}. Let $A, B \in \mathbb{M}_{d, q}\big(\mathbb{R}\big)$. We define the following preorder on $\mathbb{M}_{d, q}\big(\mathbb{R}\big)$
	\begin{equation}
	\label{special_mat_pre_order}
	A \preceq B \quad \text{if} \quad BB^\top - AA^\top \in \mathcal{S}^{+}\big(d, \mathbb{R}\big).
	\end{equation}

	\item \label{conv_mat_vol_def} $\bm{\preceq}$\textbf{-Convexity}. A matrix-valued function $\sigma : \mathbb{R}^d \to \mathbb{M}_{d, q}\big(\mathbb{R}\big)$ is $\bm{\preceq}$-convex if for every $x, y \in \mathbb{R}^d$ and every $\lambda \in [0,1]$, there exist $O_{\lambda, x}, O_{\lambda, y} \in \mathcal{O}\big(q, \mathbb{R} \big)$ such that
	\begin{equation}
	\label{cond_matrix_conv}
	\sigma\big(\lambda x + (1-\lambda)y  \big) \preceq \lambda \sigma(x)O_{\lambda, x} + (1-\lambda)\sigma(y)O_{\lambda, y}
	\end{equation}
 i.e.
$$\big(\lambda \sigma(x)O_{\lambda, x} + (1-\lambda)\sigma(y)O_{\lambda, y}\big)\big(\lambda \sigma(x)O_{\lambda, x} + (1-\lambda)\sigma(y)O_{\lambda, y}\big)^\top - \sigma \sigma^\top\big(\lambda x + (1-\lambda)y \big) \in \mathcal{S}^{+}\big(d, \mathbb{R}\big).$$
\end{enumerate}
\end{definition}

\begin{remark}
In particular, when $d=q=1$, by setting $O_{\lambda, x} = \sign\big(\sigma(x)\big)$ one checks that condition \eqref{cond_matrix_conv} is equivalent to the convexity of the real-valued function $|\sigma|$.
\end{remark}

\begin{remark}[$\preceq$-convexity: a quite general example]
\label{gen_ex_cnv_matrix}
Condition \eqref{cond_matrix_conv} for the $\preceq$-convexity may appear difficult to verify at first given a matrix-valued function. However, there exists a quite general class of matrix which satisfies this condition.

Let $\lambda_k : \mathbb{R} \to \mathbb{R}, k = 1, \ldots, q$ such that all $|\lambda_k|$ are convex. Set:
$$\sigma(x) := A \cdot \diag\big(\lambda_1(x), \ldots, \lambda_q(x)  \big) \cdot O , \hspace{0.2cm} A \in \mathbb{M}_{d, q}\big(\mathbb{R}\big), O \in \mathcal{O}\big(q, \mathbb{R}\big).$$
Then $\sigma$ is $\bm{\preceq}$-convex. We refer the reader to Lemma \ref{proof_lemme_ex_gen_mat_cvx} for a proof.
\end{remark}

\subsection{Main results}
For the rest of the paper, we assume that:

\vspace{0.3cm}
$\bm{(\mathcal{H})}$: $\overline{q}, \underline{Q}, \overline{Q} \in \mathbb{N}$, and $\overline{Q} - \underline{Q}$ is a multiple of $\overline{q}$. In particular, this implies that the supremum in the \emph{BDPP} is a maximum.

\vspace{0.3cm}
We establish two key results concerning the convexity of the swing price with respect to the underlying asset price and the functional \q{monotonicity} (also called \emph{domination}) with respect to the underlying volatility function. To this end, we consider the Brownian diffusion \eqref{general_diffusion_X} only at exercise dates which we sometimes denote by $\big(X_{t_k}^{[\sigma]}\big)_{0 \le k \le n}$ to emphasize the dependence of the process on the matrix-valued volatility function $\sigma : \mathbb{R}^d \to \mathbb{M}_{d, q}\big(\mathbb{R} \big)$.

Our swing pricing problem being a discrete-time problem, for $m \in \mathbb{N}^{*}$, we consider the Euler scheme, with step $\frac{T}{mn}$, of the diffusion \eqref{general_diffusion_X} at discretization dates $t_{\ell}^{(mn)} := \frac{\ell T}{m n}$ ($\ell \in \{0,\ldots,mn\}$):
\begin{equation}
    \label{euler_diffusion_X}
    \overline{X}^{x}_{t_{\ell+1}^{(mn)}} \sim \mathcal{E}_\ell^{(mn)}\Big(\overline{X}^{x}_{t_{\ell}^{(mn)}}, Z_{\ell+1}\Big), \quad \overline{X}_{0} = x \in \mathbb{R}^d,
\end{equation}
where
$$\mathcal{E}_\ell^{(mn)}(x, z) := x + h \cdot \kappa(t_{\ell}^{(mn)}) (x-\zeta) + \sqrt{h} \cdot \sigma_{t_{\ell}^{(mn)}}(x)z.$$
Note that the discretization dates, when $\ell = km$, coincide with the actual exercise dates of the swing contract i.e. $t_{km}^{(mn)} = t_k$ for $k=0,\ldots,n$. Besides, $(Z_{\ell})_\ell$ are \emph{i.i.d.} copies of $Z \sim \mathcal{N}(0, \mathbf{I}_q)$.

In the notation $\overline{X}^{x}_{t_{\ell}^{(mn)}}$, we opted not to highlight the step $h$ of the Euler scheme. We made this choice because, in this paper, only this Euler scheme with step $h$ will be used. Thus, there is no ambiguity and this choice allows to ease notations.

For a bounded or non-negative Borel function $f:\mathbb{R}^d \to \mathbb{R}$, we define the transition:
\begin{equation}
    \label{markov_transition}
    \forall x \in \mathbb{R}^d, \quad \mathrm{P}^{(mn)}_\ell(f)(x) := \mathbb{E}f\big(\mathcal{E}_\ell^{(mn)}(x, Z)\big), \quad 0 \le \ell \le mn-1
\end{equation}
as well as its composition i.e. for $i <j \in \{0,\ldots,mn-1\}$:
\begin{equation}
    \label{compo_transition_markov}
    \mathrm{P}^{(mn)}_{i:j} := \mathrm{P}^{(mn)}_i \circ \cdots \circ \mathrm{P}^{(mn)}_{j-1} \quad \text{and} \quad \mathrm{P}^{(mn)}_{\ell:\ell}(f) := f.
\end{equation}
Then, the \emph{BDPP} \eqref{eq_dp_swing_generic} reads
\begin{equation}
    \left\{
    \begin{array}{ll}
        v_{k}^{(m)}\big(x_{0:k}, Q\big) = \underset{q \in \mathbb{A}_c(t_k, Q)}{\sup} \hspace{0.1cm} \Big[\Psi_k(t_{k}, q, x_{0:k}) + \mathrm{P}^{(mn)}_{km:(k+1)m}\big(v_{k+1}^{(m)}(x_{0:k},\cdot, Q+q) \big)(x_k) \Big],\\
        \vspace{0.2cm}
        v_{n}^{(m)}(x_{0:n}, Q) = P_c(t_n, x_{0:n}, Q).
    \end{array}
\right.
\label{bdpp_with_markov_transition}
\end{equation}

Our aim is two-fold: we want to establish both the convexity of swing option prices and their \q{monotonicity} (comparison principle or \emph{domination}) with respect to the volatility function using a tractable (i.e. simulable) numerical approximation scheme for which the corresponding pricing problem shares the same properties. This is of course a crucial property for practitioners. Therefore, we will introduce some Euler schemes of the diffusion process and then rely on limit theorems to transfer the two above mentioned properties to the continuous-time dynamics.

We start with the following proposition for the propagation of convexity through the continuation value. We refer the reader to Appendix \ref{appendix_proof_some_prop} for a proof.

\begin{Proposition}
\label{useful_properties_cvx_ord}
Let $\ell \in \{0,\ldots, mn\}$ and $f : \mathbb{R}^d \to \mathbb{R}$ be a convex function with linear growth. We have the following results on $\mathrm{P}^{(mn)}_\ell$ (defined in \eqref{markov_transition}):
\begin{enumerate}[label=\roman*.]
\item \label{tf_convex} If $\sigma_{t_{\ell}^{(mn)}}$ is $\preceq$-convex then $\mathrm{P}^{(mn)}_\ell(f)(\cdot)$ is convex.

\item \label{croiss_mat} For $A \in \mathbb{M}_{d, q}(\mathbb{R})$, define:
\begin{equation}
    \label{op_transition_specific_sigma}
    \forall x \in \mathbb{R}^d, \quad \mathrm{P}^{(mn)}_{\ell, \sigma}(f)(x) := \mathbb{E}f\big(\mathcal{E}_{\ell, \sigma}^{(mn)}(x, Z)\big) := \mathbb{E}f\big(x + h \cdot \kappa(t_{\ell}^{(mn)}) (x-\zeta) + \sqrt{h} \cdot \sigma(x)Z \big).
\end{equation}
Then, for all $x \in \mathbb{R}^d$, $\mathbb{M}_{d, q}(\mathbb{R}) \ni \sigma \mapsto \mathrm{P}^{(mn)}_{\ell, \sigma}(f)(x)$ is $\preceq$-non-decreasing i.e. for any $A, B \in \mathbb{M}_{d, q}(\mathbb{R})$, 
$$A \preceq B \implies \mathrm{P}^{(mn)}_{\ell, A} (f)(x) \le \mathrm{P}^{(mn)}_{\ell, B}(f)(x).$$
\end{enumerate}
\end{Proposition}

\vspace{0.3cm}
Let us now consider the following assumptions for $c \in \{\emph{firm}, \emph{pen}\}$:

\vspace{0.3cm}
\noindent
$\bm{(\mathcal{H}_1^{c})}$: For all $k \in \{0, \ldots, n-1\}$, $q \in [0, \overline{q}]$ and $Q \in \mathcal{Q}_c(t_n)$, the payoff function
$$\big(\mathbb{R}^d\big)^{k+1} \ni x_{0:k} \mapsto \Psi_k\big(t_k, q, x_{0:k} \big)$$
and the penalty function,
$$ \big(\mathbb{R}^d\big)^{n+1} \ni x_{0:n} \mapsto P_c\big(t_n, x_{0:n}, Q)$$ 
are convex with at most linear growth for the $|\cdot|_{\sup}$-norm.

\vspace{0.4cm}
\noindent
$\bm{(\mathcal{H}_2^{[\sigma]})}$: For all $t \in [0, T]$, $\sigma_{t}\hspace{0.1cm} \text{is} \hspace{0.1cm} \bm{\preceq}$-convex.

\vspace{0.2cm}

\begin{theorem}[Convexity propagation: Euler scheme]
\label{prop_cvx_euler_cas_gen}
Let $c \in \{firm, pen\}$. Under assumptions $\bm{(\mathcal{H}_1^{c})}$ and $\bm{(\mathcal{H}_2^{[\sigma]})}$, for any $k \in \{0,\ldots,n\}$ and $Q \in \mathcal{Q}_c(t_k)$,
$$ (\mathbb{R}^d)^{k+1} \ni x_{0:k} \mapsto v_k^{(m)}\big(x_{0:k}, Q\big) \quad \text{is convex.}$$
\end{theorem}

\begin{proof}
We proceed by backward induction on $k$. Owing to Assumption $\bm{(\mathcal{H}_1^{c})}$, $v_{n}^{(m)}\big(\cdot, Q\big)$ is convex for any $Q \in \mathcal{Q}_c(t_n)$. Let us assume that the proposition holds for $k+1$. Let $x_{0:k}, y_{0:k} \in (\mathbb{R}^d)^{k+1}, \lambda \in [0,1]$, and define $\overline{x_k y_k}^{\lambda} := \lambda x_k + (1-\lambda) y_k, \overline{x_{0:k} y_{0:k}}^{\lambda} := \lambda x_{0:k} + (1-\lambda) y_{0:k}$. For any $Q \in \mathcal{Q}_c(t_k)$, we have,
$$v_k^{(m)}\big(\overline{x_{0:k} y_{0:k}}^{\lambda}, Q\big) = \underset{q \in \mathbb{A}_c(t_k, Q)}{\sup} \hspace{0.1cm} \Big[\Psi_k\big(t_k, q, \overline{x_{0:k} y_{0:k}}^{\lambda}\big) + \mathrm{P}^{(mn)}_{km:(k+1)m}\big(v_{k+1}^{(m)}(\overline{x_{0:k} y_{0:k}}^{\lambda}, \cdot, Q+q) \big)\big(\overline{x_k y_k}^{\lambda}\big) \Big].$$
By the induction assumption, $v_{k+1}^{(m)}(\overline{x_{0:k} y_{0:k}}^{\lambda},\cdot, Q)$ is convex for any $Q \in \mathcal{Q}_c(t_{k+1})$. The latter combined with assumption $\bm{(\mathcal{H}_2^{[\sigma]})}$, claim \ref{tf_convex} (in Proposition \ref{useful_properties_cvx_ord}), and a straightforward induction yield:
$$\mathbb{R}^d \ni x \mapsto \mathrm{P}^{(mn)}_{km:(k+1)m}\big(v_{k+1}^{(m)}(\overline{x_{0:k} y_{0:k}}^{\lambda},\cdot, Q+q) \big)(x) \quad \text{is convex}.$$
Then, using the convexity Assumption $\bm{(\mathcal{H}_1^{c})}$ of the payoff function, we get:
\begin{align*}
v_k^{(m)}\big(\overline{x_{0:k} y_{0:k}}^{\lambda} , Q\big) &\le \lambda \underset{q \in \mathbb{A}_c(t_k, Q)}{\sup} \hspace{0.1cm} \Big[\Psi_k\big(t_{k}, q, x_{0:k}\big) +  \mathrm{P}^{(mn)}_{km:(k+1)m}\big(v_{k+1}^{(m)}(x_{0:k}, \cdot, Q+q) \big)(x_k) \Big]\\
& \quad + (1 - \lambda) \underset{q \in \mathbb{A}_c(t_k, Q)}{\sup} \hspace{0.1cm} \Big[\Psi_k\big(t_{k}, q, y_{0:k}\big) +  \mathrm{P}^{(mn)}_{km:(k+1)m}\big(v_{k+1}^{(m)}(y_{0:k},\cdot, Q+q) \big)(y_k) \Big]\\
&= \lambda v_k^{(m)}\big(x_{0:k}, Q\big)  + (1 - \lambda) v_k^{(m)}\big(y_{0:k}, Q\big).
\end{align*}
This completes the proof.
\end{proof}

\begin{remark}[Path dependent payoff]
The convexity result has been established for the generic payoff functions $\Psi_k$ defined in \eqref{gen_payoff}. This includes possibly path dependent payoffs. Among others the most traded swing contracts are,

\begin{itemize}
	\item \textit{\textbf{Fixed strike swing contract}}. In this case, at each exercise date $t_k$, the holder of the swing contract receives, per unit of exercised volume, the difference between the forward price $F_{t_k}$ (e.g. gas delivery contract) and a fixed amount $K$ decided at the conclusion of the contract. That is,
	\begin{equation}
	\label{payoff_swing_fixed_strike}
	\Psi_k\big(t_k, q, x_{0:k}\big) := q \cdot \Big(f\big(t_k, x_{k} \big) - K \Big).
	\end{equation}
	which may appear as a \q{vanilla} swing contract following the equity terminology.

	\item \textit{\textbf{Indexed strike swing contract}}. This case is the same as the previous one except that the fixed amount $K$ is replaced by an average of past prices of the same commodity. Namely,
	\begin{equation}
	\label{payoff_swing_ind_strike}
	\Psi_k\big(t_k, q, x_{0:k}\big) := q \cdot \Big(f\big(t_k, x_{k} \big)  - \frac{1}{|I_k|} \sum_{i \in I_k}^{} f\big(t_{i}, x_{i} \big)  \Big),
	\end{equation}
	where for all $1 \le k \le n$, we have $I_k \subseteq \{0, \ldots, k-1\}$. One could also consider a case where the average is computed on past prices (still at exercise dates) of a different commodity asset (oil for instance). In this case, up to an enlargement of the filtration, one could achieve the same type of results proved in this paper.

    \item \textit{\textbf{Strip of call contract}}: We can consider a call version of the standard swing payoff\eqref{payoff_swing_fixed_strike}:
    \begin{equation}
        \label{call_payoff}
        \Psi_k\big(t_k, q, x_{0:k}\big) := q \cdot \Big(f\big(t_k, x_{k} \big) - K \Big)_{+}.
    \end{equation}
\end{itemize}
\end{remark}

The propagation of convexity for the actual value function \eqref{eq_dp_swing_generic} follows by sending $m$ to $+ \infty$. This follows from the next two propositions.

\begin{Proposition}
\label{lispc_func_val}
Assume the following properties.

\vspace{0.2cm}
\noindent
$\bm{(\mathcal{H}_3)}$ : For all $k \in \{0,\ldots, n-1\}$, $\Psi_k\big(t_k, q, \cdot\big)$ is Lipschitz continuous uniformly in $q \in [0, \overline{q}]$. Denote by $[\Psi_k]_{Lip}$ the Lipschitz coefficient.

\vspace{0.2cm}
\noindent
$\bm{(\mathcal{H}_4^c)}$ : For $c \in \{firm, pen\}$, $P_c\big(t_n, \cdot, Q)$ is Lipschitz continuous uniformly in $Q \in \mathcal{Q}_c(t_n)$. Denote by $[P_{c, n}]$ its Lipschitz coefficient. Note that $\big[P_{firm, n}\big]_{\text{Lip}} = 0$.

\vspace{0.2cm}
\noindent
Then, for all $k \in \{0,\ldots, n\}$, the swing value functions $v_k^{(m)}(\cdot, Q)$ are Lipschitz continuous uniformly in $Q \in \mathcal{Q}_c(t_k)$ with a Lipschitz coefficient $\big[v_k^{(m)}\big]_{\text{Lip}}$ satisfying:
$$\big[v_k^{(m)}\big]_{\text{Lip}} := \underset{Q \in \mathcal{Q}_c(t_k)}{\sup} \hspace{0.2cm} \underset{x_{0:k} \neq y_{0:k}}{\sup} \hspace{0.2cm} \frac{\big|v_k^{(m)}\big(x_{0:k}, Q\big) - v_k^{(m)}\big(y_{0:k}, Q\big)\big|}{|x_{0:k} - y_{0:k}|_{(k)}} \le  \sum_{i = k}^{n-1} C_{h, \beta, \sigma}^{mi} \big[\Psi_{i}\big]_{\text{Lip}} + C_{h, \beta, \sigma}^{mn} \big[P_{c,n} \big]_{\text{Lip}},$$
where $C_{h, \kappa, \sigma} := 1+hC_{\kappa, \sigma} := 1 + h\Big([\kappa]_{\infty} + \frac{[\sigma]_{Lip}^2}{2}\Big)$.
\end{Proposition}

\begin{remark}
\label{bounded_lip_coeff_vk}
\begin{itemize}
    \item Note that $[v_k^{(m)}]_{Lip}$ is bounded uniformly in $m$ by the constant $e^{t_i^{(n)}C_{\beta, \sigma}}$ where $t_i^{(n)} := \frac{Ti}{n}$ for $i \in \{0,\ldots, n\}$. Indeed, it suffices to notice that, using the classic inequality $1 + x \le e^x$ and the fact that $h = \frac{T}{mn}$, one has for all $i \in \{0,\ldots, n\}$:
    $$C_{h, \kappa, \sigma}^{mi} \le e^{mihC_{\kappa, \sigma}} = e^{t_i^{(n)}C_{\kappa, \sigma}}.$$

    \item Under Assumptions $\bm{(\mathcal{H}_3)}$ and using the Lipschitz property of $\sigma_{t_\ell^{(mn)}}$ (uniformly in $t_\ell^{(mn)}$), one may also show that the value function given by \eqref{eq_dp_swing_generic} is Lipschitz continuous. Indeed, by a straightforward backward induction, one may notice that:
    $$\big|v_k(x_{0:k}, Q) - v_k(y_{0:k}, Q)\big| \le [\Psi_k]_{Lip} \cdot \big|x_{0:k}-y_{0:k}\big|_{(k)} + [v_{k+1}]_{Lip} \cdot \Big\| X_{t_{k+1}}^{x_k, t_k} - X_{t_{k+1}}^{y_k, t_k} \Big\|_2.$$
   One completes the proof by noticing (see Theorem 7.10 in \cite{Pages2018}) that the flow $\mathbb{R}^d \ni x \mapsto X_{t_{k+1}}^{x, t_k}$ is Lipschitz continuous in $\mathbb{L}^2_{\mathbb{R}^d}(\mathbb{P})$ which is classic background.
\end{itemize}
\end{remark}

\begin{proof}[Proof of Proposition \ref{lispc_func_val}]
We rely on a backward induction on $k$. Assumption $\bm{(\mathcal{H}_4^c)}$ implies that the result holds for $k = n$. Assume now that the result holds for $k + 1$. Then, it follows from the triangle inequality that
\begin{align*}
\big|v_{k}^{(m)}(x_{0:k}, Q) &- v_{k}^{(m)}(y_{0:k}, Q) \big| \\
&\le \underset{q \in \mathbb{A}_c(t_k, Q)}{\sup} \hspace{0.1cm} \big|\Psi_{k}\big(t_{k}, q, x_{0:k}\big) - \Psi_{k}\big(t_{k}, q, y_{0:k}\big)\big| \\
& \quad + \underset{q \in \mathbb{A}_c(t_k, Q)}{\sup} \hspace{0.1cm} \Big|\mathrm{P}_{km:(k+1)m}^{(mn)}\big(v_{k+1}^{(m)}(x_{0:k},\cdot, Q + q)\big)(x_k) - \mathrm{P}_{km:(k+1)m}^{(mn)}\big(v_{k+1}^{(m)}(y_{0:k},\cdot, Q + q)\big)(y_k)\Big|.
\end{align*}

But note that, for any Lipschitz continuous function $f : \mathbb{R}^d \to \mathbb{R}$, one has for $\ell \in \{0,\ldots, mn-1\}$:
\begin{align*}
    \big|\mathrm{P}_\ell^{(mn)}(f)(x) -& \mathrm{P}_\ell^{(mn)}(f)(y) \big|\\
    &\le [f]_{Lip} \cdot \mathbb{E}\Big|x-y + h \cdot \kappa(t_\ell^{(mn)}) (x-y) + \sqrt{h}\big(\sigma_{t_\ell^{(mn)}}(x) - \sigma_{t_\ell^{(mn)}}(y)) Z\Big|\\
    &\le [f]_{Lip} \cdot \Big\|x-y + h \cdot \kappa(t_\ell^{(mn)}) (x-y) + \sqrt{h}\big(\sigma_{t_\ell^{(mn)}}(x) - \sigma_{t_\ell^{(mn)}}(y)) Z \Big\|_2\\
    &\le [f]_{Lip} \cdot |x-y| \cdot \Big(1 + h^2[\kappa]_{\infty}^2 + 2h[\kappa]_{\infty} + h[\sigma]_{Lip}^2  \Big)^{1/2}\\
    &\le [f]_{Lip} \cdot |x-y| \cdot \underbrace{\Bigg(1 + h\Big([\kappa]_{\infty} + \frac{[\sigma]_{Lip}^2}{2}\Big)\Bigg) }_{C_{h, \kappa, \sigma}}.
\end{align*}
This proves that $\mathbb{R}^d \ni x \mapsto \mathrm{P}(f)(x)$ is Lipschitz continuous. Thus, by a straightforward induction, one shows that for any $i < j \in \{0, \ldots, mn-1\}$, $\mathbb{R}^d \ni x \mapsto \mathrm{P}_{i:j}^{(mn)}(f)(x)$ is Lipschitz continuous and its Lipschitz coefficient satisfies:
$$\big[\mathrm{P}_{i:j}^{(mn)}(f)\big]_{Lip} \le [f]_{Lip} \cdot C_{h, \kappa, \sigma}^{j-i}.$$
Thus, we deduce that,
$$\big|v_{k}^{(m)}(x_{0:k}, Q) - v_{k}^{(m)}(y_{0:k}, Q) \big| \le \Big([\Psi_k]_{Lip} + [v_{k+1}^{(m)}]_{Lip} \cdot C_{h, \kappa, \sigma}^{m} \Big) \cdot \big|x_{0:k}-y_{0:k}\big|_{(k)}$$
so that,
$$[v_k^{(m)}]_{Lip} \le [\Psi_k]_{Lip} + [v_{k+1}^{(m)}]_{Lip} \cdot C_{h, \kappa, \sigma}^{m}.$$
Iterating this inequality yields the desired result.
\end{proof}

We then have the following limit result which allows the propagation of convexity to the actual swing value function.

\begin{Proposition}
\label{cvg_val_func_m}
Let $c \in \{firm, pen\}$ and consider Assumptions $\bm{(\mathcal{H}_3)}$, $\bm{(\mathcal{H}_4^{c})}$, . Then, for all $k \in \{0,\ldots, n\}$, and $Q \in \mathcal{Q}_c(t_k)$, and for every compact set $K \subset \big(\mathbb{R}^d\big)^{k+1}$, one has
$$ \lim_{m \to +\infty} \hspace{0.1cm} \underset{x_{0:k} \in K}{\sup} \hspace{0.1cm} \Big|v_k^{(m)}(x_{0:k}, Q)-v_k(x_{0:k}, Q)\Big| = 0.$$
\end{Proposition}

\begin{proof}
We proceed by a backward induction on $k$. The proposition clearly hold true for $k=n$ since both value functions coincide. Let us assume it holds for $k+1$. It follows from the classical inequality $\big|\underset{i \in I}{\sup} \hspace{0.1cm} a_i - \underset{i \in I}{\sup} \hspace{0.1cm} b_i \big| \le \underset{i \in I}{\sup} \hspace{0.1cm} \big|a_i-b_i\big|$, and the triangle inequality that:
    \begin{align*}
        \big|v_{k}^{(m)}(x_{0:k}, Q) - v_{k}(y_{0:k}, Q) \big| &\le \underset{Q_{k+1} \in \mathcal{Q}_c(t_{k+1})}{\sup} \hspace{0.1cm} \Big|\mathbb{E}v_{k+1}^{(m)}\big(x_{0:k}, \overline{X}^{x_k}_{t_{k+1}}, Q_{k+1}\big) - \mathbb{E}v_{k+1}^{(m)}\big(x_{0:k}, X^{x_k}_{t_{k+1}}, Q_{k+1}\big) \Big|\\
        &\quad + \underset{Q_{k+1} \in \mathcal{Q}_c(t_{k+1})}{\sup} \hspace{0.1cm} \Big|\mathbb{E}v_{k+1}^{(m)}\big(x_{0:k}, X^{x_k}_{t_{k+1}}, Q_{k+1}\big) - \mathbb{E}v_{k+1}\big(x_{0:k}, X^{x_k}_{t_{k+1}}, Q_{k+1}\big) \Big|.
    \end{align*}
Since Assumptions $\bm{(\mathcal{H}_3)}$ and $\bm{(\mathcal{H}_4^{c})}$ yield the Lipschitz property in Proposition \ref{lispc_func_val}, then:
\begin{align*}
    \underset{x_{0:k} \in K}{\sup} \hspace{0.1cm} \big|v_{k}^{(m)}(x_{0:k}, Q) - v_{k}(y_{0:k}, Q) \big| &\le \big[v_{k+1}^{(m)}\big]_{Lip} \cdot \underset{x_{0:k} \in K}{\sup} \hspace{0.1cm} \Big\|\overline{X}^{x_k}_{t_{k+1}} - X^{x_k}_{t_{k+1}}  \Big\|_1\\
    &\quad + \underset{Q_{k+1} \in \mathcal{Q}_c(t_{k+1}), x_{0:k} \in K}{\sup} \hspace{0.1cm} \mathbb{E}\Big|v_{k+1}^{(m)}\big(x_{0:k}, X^{x_k}_{t_{k+1}}, Q_{k+1}\big) - v_{k+1}\big(x_{0:k}, X^{x_k}_{t_{k+1}}, Q_{k+1}\big) \Big|.
\end{align*}
Let us deal with the r.h.s. sum term by term. For the first term, it converges toward 0 as $m \to +\infty$. This is a classic result on Euler scheme combined with the boundedness of $\big[v_{k+1}^{(m)}\big]_{Lip}$ in $m$ (see Remark \ref{bounded_lip_coeff_vk}). We recall that the (abusive) notation $\overline{X}^{x_k}_{t_{k+1}}$ of the Euler scheme hides the step $h$ which tends to 0 when $m \to +\infty$.

Owing to Assumption $\bm{\mathcal{H}}$, one can get rid of the supremum under $Q_{k+1} \in \mathcal{Q}_c(t_{k+1})$ in the second term. Besides, one has for any $R \ge 1$:
\begin{align*}
    \underset{x_{0:k} \in K}{\sup} \hspace{0.1cm} \mathbb{E}\Big|&v_{k+1}^{(m)}\big(x_{0:k}, X^{x_k}_{t_{k+1}}, Q_{k+1}\big) - v_{k+1}\big(x_{0:k}, X^{x_k}_{t_{k+1}}, Q_{k+1}\big) \Big|\\
    &\le \mathbb{E}\Bigg(\Big\|v_{k+1}^{(m)}\big(x_{0:k}, \cdot, Q_{k+1}\big) - v_{k+1}\big(x_{0:k}, \cdot, Q_{k+1}\big) \Big\|_{\mathcal{B}(0, R)}\Bigg)\\
    &\quad + \underset{x_{0:k} \in K}{\sup} \hspace{0.1cm} \mathbb{E}\Bigg(\Big|v_{k+1}^{(m)}\big(x_{0:k}, X^{x_k}_{t_{k+1}}, Q_{k+1}\big) - v_{k+1}\big(x_{0:k}, X^{x_k}_{t_{k+1}}, Q_{k+1}\big) \Big| \cdot \mathbf{1}_{\big|X^{x_k}_{t_{k+1}}\big| > R}\Bigg)\\
    &\le \mathbb{E}\Bigg(\Big\|v_{k+1}^{(m)}\big(x_{0:k}, \cdot, Q_{k+1}\big) - v_{k+1}\big(x_{0:k}, \cdot, Q_{k+1}\big) \Big\|_{\mathcal{B}(0, R)}\Bigg) + \frac{c_k}{R} \Big( 1 + \underset{x_{0:k} \in K}{\sup} \hspace{0.1cm} \big|x_{0:k}\big|_{(k)}^2 \Big),
\end{align*}
where the positive constant $c_k$ (only dependent on $k$) comes from the Lipschitz property of $v_{k+1}^{(m)}(\cdot, Q)$ (uniformly in $Q \in \mathcal{Q}_c(t_{k+1})$ and $m \in \mathbb{N}^{*}$), and that of $v_{k+1}(\cdot, Q)$ combined with the classic control of the moments of the flow $X_{t_k+1}^{x_k}$. Then, let us deal with the r.h.s. sum term by term.

For any $R \ge 1$, the first term converges toward 0 as $m \to +\infty$ using the induction assumption combined with the dominated convergence theorem which is justified by the Lipschitz property of $v_{k+1}^{(m)}(\cdot, Q)$ (uniformly in $Q \in \mathcal{Q}_c(t_{k+1})$ and $m \in \mathbb{N}^{*}$), and that of $v_{k+1}(\cdot, Q)$. Besides, the second term converges toward 0 as $R \to + \infty$. Thus taking successively the limit $m \to +\infty$, and $R \to +\infty$ ends the proof.
\end{proof}

The convexity result has some practical useful corollaries that we will discuss in the following remarks. Before that, it is worth noting that one can get rid of the drift in the affine setting.
\begin{remark}
\label{rq_null_drift}
    Define:
    $$\forall t \in [0,T], \quad \bm{\kappa}(t) :=  \int_{0}^{t} \kappa(s) \,ds \quad \text{and} \quad Y_t := e^{-\bm{\kappa}(t)} X_t + \zeta\big(1-e^{-\bm{\kappa}(t)} \big).$$
    Then, by Itô formula, one has:
    $$dY_t =  \widetilde{\sigma_t}\big(X_t\big)dW_t, \quad \widetilde{\sigma_t}(x) := e^{-\bm{\kappa}(t)} \cdot \sigma_t\Big(e^{\bm{\kappa}(t)} x  + \zeta\big(1-e^{\bm{\kappa}(t)}\big)\Big).$$
    Moreover, one checks that the $\preceq$-convexity of $\sigma_t$ is equivalent to that of $\widetilde{\sigma_t}$. Thus, in our proofs the affine drift may be canceled.
\end{remark}

\begin{remark}
The convexity property we have shown in the preceding proposition does not depend on volume constraints (assuming the space of constraints does not depend on the underlying price). Furthermore, we claim that, given a stochastic optimal control problem (either constrained or not) with its associated backward dynamic programming principle, where the set of constraints, if any, does not depend on the variable of interest, the same proof works.
\end{remark}

The convexity of the swing value function enables us to deduce our second main result concerning the functional monotonicity with respect to the matrix-valued volatility function $\sigma_t$. As discussed in Remark \ref{rq_null_drift}, one may simply consider an \emph{ARCH} martingale process instead of the general Euler scheme. This is what is done in what follows.

\begin{theorem}[Domination criterion]
\label{comparison}
Let $c \in \{firm, pen\}$. Consider assumption $\bm{(\mathcal{H}_1^{c})}$ and the following two \emph{ARCH} processes
\begin{equation}
\overline{X}_{t_{\ell + 1}^{(mn)}}^{[\vartheta]} = \overline{X}_{t_{\ell}^{(mn)}}^{[\vartheta]} + \vartheta_{t_{\ell}^{(mn)}}\Big(\overline{X}_{t_{\ell}^{(mn)}}^{[\vartheta]}\Big) Z_{\ell + 1}, \quad \overline{X}_{t_0}^{[\vartheta]} = x \in \mathbb{R}^d
\end{equation}
\begin{equation}
\overline{X}_{t_{\ell + 1}^{(mn)}}^{[\theta]} = \overline{X}_{t_{\ell}^{(mn)}}^{[\theta]} + \theta_{t_{\ell}^{(mn)}}\Big(\overline{X}_{t_{\ell}^{(mn)}}^{[\theta]}\Big) Z_{\ell + 1}, \quad \overline{X}_{t_0}^{[\theta]} = x \in \mathbb{R}^d,
\end{equation}
where $\theta_{t_{\ell}^{(mn)}}, \vartheta_{t_{\ell}^{(mn)}} : \mathbb{R}^d \to \mathbb{M}_{d, q}\big(\mathbb{R} \big)$ and $(Z_k)_k$ are \emph{i.i.d.} copies of $Z \in \mathcal{N}(0, \mathbf{I}_q)$. Assume that either assumption $\bm{(\mathcal{H}_2^{[\vartheta]})}$ or $\bm{(\mathcal{H}_2^{[\theta]})}$ holds as well as

\vspace{0.4cm}
\noindent
$\bm{(\mathcal{H}_3)}$ : For all $x \in \mathbb{R}^d, t \in [0, T]$, $\vartheta_t(x)\preceq \theta_t(x).$

\vspace{0.2cm}
\noindent
Then, for all $k \in \{0,\ldots, n\}$, $x_{0:k} \in \big(\mathbb{R}^d\big)^{k+1}$ and for all $Q_k \in \mathcal{Q}_c(t_k)$, one has
$$v_k^{(m), [\vartheta]}\big(x_{0:k}, Q_k \big) \le v_k^{(m), [\theta]}\big(x_{0:k}, Q_k \big),$$
where $v_k^{(m), [\vartheta]}$ and $v_k^{(m), [\theta]}$ are swing value functions defined by \eqref{bdpp_with_markov_transition} associated to Euler schemes $\overline{X}_{t_{\ell}^{(mn)}}^{[\vartheta]}$ and $\overline{X}_{t_{\ell}^{(mn)}}^{[\theta]}$.
\end{theorem}

\begin{proof}
We prove this proposition by a backward induction on $k$. The proposition holds for $k=n$ since $v_{n}^{[\vartheta]}\big(x_{0:n}, Q_{n} \big) = v_{n}^{[\theta]}\big(x_{0:n}, Q_{n} \big)$ for all $Q_n \in \mathcal{Q}_c(t_n)$. Assume now that it holds for $k+1$. For any $x_{0:k} \in (\mathbb{R}^d)^{k+1}$ and $Q_k \in \mathcal{Q}_c(t_k)$, we have:
$$v_{k}^{(m), [\vartheta]}\big(x_{0:k}, Q\big) = \underset{q \in \mathbb{A}_c(t_k, Q)}{\sup} \hspace{0.1cm} \Big[\Psi_k(t_{k}, q, x_{0:k}) + \mathrm{P}^{(mn), [\vartheta]}_{km:(k+1)m}\big(v_{k+1}^{(m), [\vartheta]}(x_{0:k},\cdot, Q+q) \big)(x_k) \Big],$$
where we denote by $\mathrm{P}^{(mn), [\vartheta]}_{km:(k+1)m}$ the composition of Markovian transition operators (see \eqref{compo_transition_markov}) associated to the volatility $\vartheta$. From Assumptions $\bm{(\mathcal{H}_1^{c})}$ and $\bm{(\mathcal{H}_2^{[\sigma]})}$ or $\bm{(\mathcal{H}_2^{[\theta]})}$, either $v_{k+1}^{(m), [\vartheta]}\big(\cdot, Q_k + q \big)$ or $v_{k+1}^{(m), [\vartheta]}\big(\cdot, Q_k + q \big)$ is a convex function as a consequence of Theorem \ref{prop_cvx_euler_cas_gen}. Then using claim \ref{croiss_mat} (in Proposition \ref{useful_properties_cvx_ord}) and Assumption $\bm{(\mathcal{H}_3)}$ yield
\begin{align*}
v_{k}^{(m), [\vartheta]}\big(x_{0:k}, Q_k \big) &\le \underset{q \in \mathbb{A}_c(t_k, Q)}{\sup} \hspace{0.1cm} \Big[\Psi_k(t_{k}, q, x_{0:k}) + \mathrm{P}^{(mn), [\theta]}_{km:(k+1)m}\big(v_{k}^{(m), [\vartheta]}(x_{0:k},\cdot, Q+q) \big)(x_k) \Big]\\
&\le \underset{q \in \mathbb{A}_c(t_k, Q)}{\sup} \hspace{0.1cm} \Big[\Psi_k(t_{k}, q, x_{0:k}) + \mathrm{P}^{(mn), [\theta]}_{km:(k+1)m}\big(v_{k}^{(m), [\theta]}(x_{0:k},\cdot, Q+q) \big)(x_k) \Big]\\
&= v_k^{(m), [\theta]}\big(x_{0:k}, Q_k \big),
\end{align*}
where we used the induction assumption in the second inequality. This completes the proof.
\end{proof}

The preceding \emph{domination criterion} also hold for the swing actual value function by sending $m$ to $+\infty$ which is allowed by Proposition \ref{cvg_val_func_m}. The \emph{domination criterion}, as a comparison principle, allows to compare swing option prices. To get convinced, let us take the following example.

\begin{example}[Domination with correlation]
    Let $\rho \in (-\frac{1}{q-1}, 1)$. For all $t \in [0, T]$, let us consider the following model for the volatility $\sigma_t$:
    $$\mathbb{R} \ni x \mapsto  \sigma_t(x) := \lambda_t(x)^\top L(\rho) \in \mathbb{M}_{1,q}(\mathbb{R}), \quad \lambda(x)^\top = \big(\lambda_{t,1}(x),\ldots,\lambda_{t,q}(x)\big) \in \mathbb{R}^q,$$
    where all functions $\mathbb{R} \ni x \mapsto \lambda_{t,i}(x)$ are non-negative and convex. $L(\rho)$ is the Cholesky decomposition of the correlation matrix $\Gamma(\rho) := \big[\rho + (1- \rho)\mathbf{1}_{i = j}]_{1 \le i, j \le q}$ which is definite positive. Then $\mathbb{R} \ni x \mapsto \sigma_t$ is $\preceq$-convex. Indeed, by simple algebra one has
    \begin{align*}
        &\big(\alpha \sigma_t(x) + (1-\alpha) \sigma_t(y)\big)\cdot \big(\alpha \sigma_t(x) + (1-\alpha) \sigma_t(y)\big)^\top\\
        &\quad - \sigma_t(\alpha x+ (1-\alpha)y)\sigma_t^\top(\alpha x+ (1-\alpha)y)\\
        &= \sum_{i, j = 1}^q \rho_{i, j} \Big[\alpha^2 \lambda_{t,i}(x)\lambda_{t,j}(x)+2\alpha(1-\alpha)\lambda_{t,i}(x)\lambda_{t,j}(y)+(1-\alpha)^2\lambda_{t,i}(y)\lambda_{t,j}(y) \\
        &\quad -\lambda_{t,i}(\alpha x + (1-\alpha)y)  \lambda_{t,j}(\alpha x + (1-\alpha)y)\Big] 
    \end{align*}
    and the r.h.s. is non-negative owing to the convexity of all non-negative functions $\lambda_{t,i}$. This implies the $\preceq$-convexity of volatility functions $\sigma_t$ as a straightforward application of Definition \eqref{cond_matrix_conv}. Besides, owing to Proposition \ref{mono_corr}, the monotonicity of $\sigma_t$ w.r.t. $\rho$ holds so that the domination criterion holds. This proves that, in this model, the swing price is increasing with the correlation parameter $\rho$.
\end{example}

\begin{remark}
In light of the proof of the domination criterion, one may notice that the terminal conditions, namely the penalty functions, do not need to be the same. Indeed, provided that:
$$v_n^{[\sigma]}(x_{0:n}, Q) = P_c^{[\sigma]}(t_n, x_{0:n}, Q), \quad v_n^{[\theta]}(x_{0:n}, Q) = P_c^{[\theta]}(t_n, x_{0:n}, Q)$$
with either penalty function $P_c^{[\sigma]}(t_n, \cdot, Q)$ or $P_c^{[\theta]}(t_n, \cdot, Q)$ being a convex function and such that $P_c^{[\sigma]}(t_n, \cdot, Q) \le P_c^{[\theta]}(t_n, \cdot, Q)$. Then, under either assumption $\bm{(\mathcal{H}_2^{[\sigma]})}$ (in the first case) or $\bm{(\mathcal{H}_2^{[\theta]})}$ (in the second case), the same result holds true.
\end{remark}

\section{One dimensional case: some refinements}
\label{refinements_oneD}
The \emph{domination criterion} and the propagation of convexity proved above rely on the convexity assumption $\bm{(\mathcal{H}_2^{[\sigma]})}$ for the matrix-valued volatility function $\sigma_t$. In full generality, this assumption cannot be reasonably relaxed. This section focuses on the one-dimensional setting i.e., $d=q=1$, where we prove that, in this specific case, it is possible to get rid of the convexity assumption.

From now on, in this section, we set $d=q=1$ and consider \q{vanilla} payoff functions, at time $t_k$, which only depend on the price at time $t_k$, i.e. of the form:
\begin{align}
\label{payoff_func_1d}
\Psi_k \colon [0, T] \times \mathbb{R}_{+} \times \mathbb{R}  &\to \mathbb{R} \nonumber \\
	(t_k, q_k, x_{k}) &\mapsto \Psi_k\big(t_k, q_k, x_{k}\big), \quad k \in \{0,\ldots,n-1\}.
\end{align}

Our aim is to relax the convexity assumption of the volatility function $\sigma_t$ with a semi-convexity assumption.

\vspace{0.2cm}
\noindent
\underline{Outline of the proof}: We first proceed as in \cite{jourdain2023convex} by truncating the Gaussian noise $Z$. That is, $Z$ is replaced by
\begin{equation}
    \label{Z_trunc}
    \tilde{Z}^h := Z \cdot \mathrm{1}_{\{|Z| \le s_h\}}
\end{equation}
for a positive threshold $s_h$ such that $s_h \to +\infty$ when $h \to 0$. We then show in Proposition \ref{approx_dim1_conv} that, for an appropriate choice of $s_h$, the truncated Euler $\mathcal{E}_\ell^{(mn)}(x, \tilde{Z}^h)$ propagates convexity i.e. for any convex function $f:\mathbb{R} \to \mathbb{R}$, the function $\mathbb{R} \ni x \mapsto \mathbb{E}f\big(\mathcal{E}_\ell^{(mn)}(x, \tilde{Z}^h)\big)$ is convex. This allows us to prove Proposition \ref{cnv_dim1_trunc} which states that convexity actually propagates through the \emph{truncated BDPP} (see \eqref{eq_dp_swing_trunc}) i.e. the \emph{BDPP} \eqref{bdpp_with_markov_transition} where the actual white noise $Z$ is replaced by its truncated version $\tilde{Z}^h$. To show that the preceding result still holds true for the \emph{BDPP} of interest given by Equation \eqref{eq_dp_swing_generic}, we send $h \to 0$.

Let us start by showing that the truncated Euler scheme $\mathcal{E}_\ell^{(mn)}(x, \tilde{Z}^h)$ propagates convexity.

\begin{Proposition}
\label{approx_dim1_conv}
Define:
$$\mathcal{E}^h\big(x, \tilde{Z}^h \big) := x + h \beta(x) + \sqrt{h} \sigma(x) \tilde{Z}^h.$$
Assume that the volatility function $\sigma : \mathbb{R} \to \mathbb{R}_{+}$ is Lipschitz continuous and semi-convex i.e.,
\begin{equation}
\label{semi_cvx_hyp}
a_{\sigma} := \inf \big\{a \ge 0 :  \mathbb{R} \ni x \mapsto \sigma^2(x) + ax^2 \hspace{0.2cm} \text{is convex} \big\} < + \infty.
\end{equation}
Furthermore, assume that the drift function $\beta : \mathbb{R} \to \mathbb{R}$ is convex and such that,
\begin{equation}
\label{cond_drit_dim1_refin}
c_{\beta} := \inf \big\{c \ge 0 : \mathbb{R} \ni x \mapsto \beta(x) + cx \hspace{0.2cm} \text{is non-decreasing}\big\} < + \infty.
\end{equation}
Let $h \in \big(0, \frac{1}{2 c_{\beta}}\big)$ (with the convention $\frac{1}{0} = + \infty$) and set $s_h = \frac{\lambda}{\sqrt{h \cdot \big([\sigma]_{Lip}^2 + a_{\sigma}  \big)}}$ for some $\lambda \in \big(0, \frac{1}{2 + \sqrt{2}} \big)$. Then, the following propositions hold true:
\begin{countlist}[label={(\alph*)}]{otherlist1}
	\item \label{euller_croissance} The random function $\mathbb{R} \ni x \mapsto \mathcal{E}^h\big(x, \tilde{Z}^h \big)$ is non-decreasing when $\mathbb{L}^1_{\mathbb{R}}(\mathbb{P})$ is equipped with the stochastic order.
	
	\item \label{euller_inc_cvx} The random function $\mathbb{R} \ni x \mapsto \mathcal{E}^h\big(x, \tilde{Z}^h \big)$ is non-increasing when $\mathbb{L}^1_{\mathbb{R}}(\mathbb{P})$ is equipped with the increasing convex order.
	
	\item \label{affine_cond_drift} If $\beta$ is affine, the random function $\mathbb{R} \ni x \mapsto \mathcal{E}^h\big(x, \tilde{Z}^h \big)$ is convex when $\mathbb{L}^1_{\mathbb{R}}(\mathbb{P})$ is equipped with the convex order.
\end{countlist}
\end{Proposition}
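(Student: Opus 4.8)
The plan is to handle the three assertions through one common reduction and a case split. Write $\mu(x):=x+h\beta(x)$ and $\rho(x):=\sqrt h\,\sigma(x)\ge0$, so that $\xi^h(x,z)=\mu(x)+\rho(x)z$, and exploit throughout that $\tilde Z^h$ is symmetric and bounded by $s_h$, with $\mathbb{E}\,\tilde Z^h=0$. Since a convex (resp. non-decreasing convex) test function is, up to affine terms, a positive mixture of call and put payoffs, and since by put–call parity puts reduce to calls once the mean is affine, the heart of (b) and (c) is the single statement: for every $K\in\mathbb{R}$ the map $x\mapsto\mathbb{E}\big(\xi^h(x,\tilde Z^h)-K\big)_+$ is convex. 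I read (b) as the convexity of $x\mapsto\xi^h(x,\tilde Z^h)$ for the increasing convex order (the natural companion of (c)), since calls are non-decreasing and convex.

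For \ref{euller_croissance} I would argue pathwise. Fix $|z|\le s_h$ and $x\le y$; using that $\beta+c_\beta\,\mathrm{id}$ is non-decreasing and $\sigma$ is $[\sigma]_{Lip}$-Lipschitz,
\begin{align*}
\xi^h(y,z)-\xi^h(x,z)&=(y-x)+h\big(\beta(y)-\beta(x)\big)+\sqrt h\big(\sigma(y)-\sigma(x)\big)z\\
&\ge (y-x)\big(1-hc_\beta-\sqrt h\,[\sigma]_{Lip}\,|z|\big).
\end{align*}
By the choice of $s_h$ one has $\sqrt h\,[\sigma]_{Lip}\,s_h=\lambda[\sigma]_{Lip}/\sqrt{[\sigma]_{Lip}^2+a_\sigma}\le\lambda$, while $hc_\beta<\tfrac12$; as $\lambda<\tfrac1{2+\sqrt2}<\tfrac12$ the bracket is positive. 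Thus $\xi^h(\cdot,z)$ is non-decreasing for every $|z|\le s_h$ (and at $z=0$), so coupling $\xi^h(x,\tilde Z^h)$ and $\xi^h(y,\tilde Z^h)$ through the common noise gives $\xi^h(x,\tilde Z^h)\le\xi^h(y,\tilde Z^h)$ almost surely, which is the stochastic monotonicity of \ref{euller_croissance}.

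For the convexity claims I would first treat smooth data and then mollify. Taking $g\in\mathcal C^2$ convex and $\sigma\in\mathcal C^2$, differentiate twice under the expectation (licit since $\tilde Z^h$ is bounded):
$$\frac{d^2}{dx^2}\mathbb{E}g(\xi^h(x,\tilde Z^h))=\mathbb{E}\big[g''(\xi^h)(\partial_x\xi^h)^2\big]+\mathbb{E}\big[g'(\xi^h)\big(h\beta''(x)+\sqrt h\,\sigma''(x)\tilde Z^h\big)\big].$$
The first term is non-negative, and $h\beta''\ge0$ is harmless (with $g'\ge0$ for calls). Pairing $z$ with $-z$, the delicate contribution is $\tfrac{\sqrt h}{2}\sigma''(x)\,\mathbb{E}\big[\tilde Z^h\big(g'(\mu+\rho\tilde Z^h)-g'(\mu-\rho\tilde Z^h)\big)\big]$, whose sign follows $\sigma''$; it is non-negative wherever $\sigma''\ge0$. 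Where $\sigma''<0$ it must be absorbed by the curvature term, and this is where the differential form $\sigma\sigma''\ge-(a_\sigma+[\sigma]_{Lip}^2)$ of \eqref{semi_cvx_hyp} and the truncation combine: the spread $u-\ell=2\sqrt h\,\sigma(x)z$ and the factor $\sqrt h\,\sigma''(x)$ produce a term of order $h|\sigma\sigma''|\,z^2\le h(a_\sigma+[\sigma]_{Lip}^2)s_h^2=\lambda^2$ per unit of $g''$, which is dominated by the benign term (of order $(\partial_x\xi^h)^2$) precisely when $\lambda<\tfrac1{2+\sqrt2}$. Hence $x\mapsto\mathbb{E}(\xi^h(x,\tilde Z^h)-K)_+$ is convex after removing the smoothness by mollification (the Lipschitz and linear-growth bounds justify the passage to the limit). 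Claim \ref{euller_inc_cvx} follows since non-decreasing convex functions are mixtures of calls and affine maps and $\mu$ is convex; when $\beta$ is affine the mean $\mu$ is affine, so by put–call parity the call comparison upgrades to the full convex order, giving \ref{affine_cond_drift}.

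The main obstacle is exactly this last estimate: under semi-convexity of $\sigma^2$ (not of $\sigma$) the pointwise $\sigma''$ may be arbitrarily negative where $\sigma$ is small, so the domination cannot be done pointwise in $x$ but must be carried inside the expectation, using that the same factor $\sigma(x)$ appears both in the spread $\rho(x)=\sqrt h\,\sigma(x)$ and in the bound $\sigma\sigma''\ge-(a_\sigma+[\sigma]_{Lip}^2)$. Closing the constants — reconciling the truncation level $s_h$, the drift restriction $h<\tfrac1{2c_\beta}$ and the admissible range $\lambda<\tfrac1{2+\sqrt2}$ — is the quantitative crux; the mollification step is routine but must be arranged so that semi-convexity of $\sigma^2$ is preserved under regularization.
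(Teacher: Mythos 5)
Your proof of claim \ref{euller_croissance} is correct and in fact more direct than the paper's: the pathwise bound $\xi^h(y,z)-\xi^h(x,z)\ge (y-x)\big(1-hc_\beta-\sqrt h\,[\sigma]_{Lip}|z|\big)>0$ for $|z|\le s_h$, together with the common-noise coupling, gives stochastic monotonicity without any regularization of the coefficients (the paper mollifies $\beta,\sigma$, differentiates, and passes to the limit). Your reading of \ref{euller_inc_cvx} as a convexity statement for the increasing convex order is also reasonable (the paper's wording is itself inconsistent with its proof), and your overall skeleton for \ref{euller_inc_cvx}--\ref{affine_cond_drift} — smooth case, sign analysis of $\partial_{xx}\mathbb{E}g(\xi^h)$, then mollification of $g$ and of $\sigma$ preserving the semi-convexity modulus — coincides with the paper's.

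However, the crux of claim \ref{affine_cond_drift} is asserted rather than proved, and as written your estimate does not close. After symmetrizing, the delicate term is $\tfrac{\sqrt h}{2}\sigma''(x)\,\mathbb{E}\big[\tilde Z^h\big(g'(\mu+\rho\tilde Z^h)-g'(\mu-\rho\tilde Z^h)\big)\big]$, and writing $g'(\mu+\rho z)-g'(\mu-\rho z)=\int_{-z}^{z}g''(\mu+\rho u)\,du$ produces second derivatives of $g$ evaluated on the whole segment $[\mu-\rho z,\mu+\rho z]$, whereas the benign term $\mathbb{E}\big[g''(\xi^h)(\partial_x\xi^h)^2\big]$ involves $g''$ only at the points $\xi^h(x,\tilde Z^h)$. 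A comparison \q{per unit of $g''$} is therefore not meaningful: the two terms are integrals of $g''$ against \emph{different} measures, and nothing in your argument shows the required domination between them. The step that makes them comparable is precisely Stein's integration by parts on the truncated Gaussian (the paper's key step, and what your symmetrization would yield if you carried out the Fubini interchange): $\mathbb{E}\big[g'(\xi^h)\tilde Z^h\big]=\sqrt h\,\sigma(x)\,\mathbb{E}\big[g''(\xi^h)\,\mathbf 1_{\{\tilde Z^h\neq 0\}}\big(1-e^{-(s_h^2-(\tilde Z^h)^2)/2}\big)\big]$, which re-expresses the delicate term with $g''$ at the \emph{same} point $\xi^h$ and with the weight $1-e^{-(s_h^2-z^2)/2}\le\tfrac{s_h^2-z^2}{2}$. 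This weight is also where the necessary factor $\tfrac12$ comes from: combining it with $\sigma\sigma''=\tfrac12(\sigma^2)''-(\sigma')^2\ge-(a_\sigma+[\sigma]_{Lip}^2)$ gives the bound $\tfrac{\lambda^2}{2}$, and $\big(\tfrac12-\lambda\big)^2-\tfrac{\lambda^2}{2}>0$ is exactly $\lambda<\tfrac{1}{2+\sqrt2}$. Your bound $h|\sigma\sigma''|z^2\le\lambda^2$ lacks this factor, and $\big(\tfrac12-\lambda\big)^2\ge\lambda^2$ forces $\lambda\le\tfrac14<\tfrac{1}{2+\sqrt2}$, so even granting the (unjustified) pointwise comparison you would not cover the stated range of $\lambda$. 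To repair the proof, insert the Stein-type identity before estimating; the mollification you defer is then done as in the paper, with $\sigma_p:=\sqrt{1/p+\varrho_p\ast\sigma^2}$, which preserves both $[\sigma]_{Lip}$ and $a_\sigma$.
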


\begin{proof}[Proof of Proposition \ref{approx_dim1_conv}]
Our proof is divided into five steps and starts with some preliminary results. Note that to prove claim \ref{euller_inc_cvx} (resp. claim \ref{affine_cond_drift}), we need to prove that for any function $f : \mathbb{R} \to \mathbb{R}$ which is convex and non-decreasing (resp. convex) that $x \mapsto \mathbb{E}\big[f\big(\mathcal{E}^h(x, \tilde{Z}^h) \big]$ is non-decreasing (resp. convex). In our proof scheme, until \emph{Step 5}, we will prove these results for $f$ being twice continuously differentiable, convex and non-decreasing (resp. convex). At \emph{Step 5}, the $\mathcal{C}^2$ regularity will be relaxed.

Let us start with the first step that jumbles some useful results which will be used in the subsequent analysis. We assume that the volatility function is not constant. The constant case is not of interest as it implies the volatility function is convex and this has already been handled.

\vspace{0.2cm}
\item (Step 1). (\textit{Preliminaries}). Note that the volatility function $\sigma$ is not constant so that $[\sigma]_{Lip} > 0$ and $s_h < + \infty$. Let $\rho$ be a $\mathcal{C}^{\infty}$ probability density on the real line with compact support such that,
\begin{equation}
\label{hyp_int_convol}
\int_{\mathbb{R}}^{} u \rho(u) \, \mathrm{d}u = 0  \quad \text{and} \quad \int_{\mathbb{R}}^{} u^2 \rho(u) \, \mathrm{d}u = 1.
\end{equation}
$\rho$ is then associated to its sequence of modifiers $\rho_p(x) = p \cdot \rho(p x)$ for all $p \in \mathbb{N}^{*}$. We also set, for every $p \in \mathbb{N}^{*}$,
$$\sigma_p(x) := \sqrt{\frac{1}{p} + \rho_p \ast \sigma^2(x)} \quad \text{and} \quad \beta_p(x) := \rho_p \ast \beta(x).$$
The continuity of the convex function $\beta$ implies that $\beta_p$ converges pointwise to $\beta$ as $p \to + \infty$. Likewise, $\sigma_p$ converges pointwise to $\sigma$. Thus, to prove the three results of the proposition when $\sigma$ is non-constant, we start by proving them when replacing the random function $\mathcal{E}^h(x, \tilde{Z}^h)$ with, for each $p \in \mathbb{N}^{*}$,
$$ \mathbb{R} \ni x \mapsto \mathcal{E}^h_p(x, \tilde{Z}^h) := x + h \beta_p(x) + \sqrt{h} \sigma_p(x) \tilde{Z}^h.$$

It follows from triangle inequality and then Cauchy Schwartz's one that, for every $x,y \in \mathbb{R}$:
\begin{align*}
\big|\sigma_p^2(x) - \sigma_p^2(y) \big| &\le \int_{\mathbb{R}}^{} \big|\sigma(x-z) - \sigma(y-z) \big| \cdot \big(\sigma(x-z) + \sigma(y-z) \big) \rho_p(z) \, \mathrm{d}z\\
&\le [\sigma]_{Lip} \cdot |x-y| \cdot \Bigg(\int_{\mathbb{R}}^{} \sigma(x-z) \rho_p(z) \, \mathrm{d}z + \int_{\mathbb{R}}^{} \sigma(y-z) \rho_p(z) \, \mathrm{d}z \Bigg)\\
&\le [\sigma]_{Lip} \cdot |x-y| \cdot \Big(\int_{\mathbb{R}}^{} \rho_p(z) \, \mathrm{d}z \Big)^{1/2} \cdot \Bigg(\Big(\int_{\mathbb{R}}^{} \sigma^2(x-z) \rho_p(z) \, \mathrm{d}z\Big)^{1/2}\\
&\quad + \Big(\int_{\mathbb{R}}^{} \sigma^2(y-z) \rho_p(z) \, \mathrm{d}z\Big)^{1/2} \Bigg)\\
&\le [\sigma]_{Lip} |x-y| \big(\sigma_p(x) + \sigma_p(y) \big),
\end{align*}
where in the last inequality we use the fact that $\int_{\mathbb{R}}^{} \rho_p(z) \, \mathrm{d}z = 1$, since $\rho$ is a probability density and, the definition of $\sigma_p$. Thus, $\sigma_p$ is Lipschitz with $[\sigma_p]_{Lip} \le [\sigma]_{Lip}$. Moreover, since the Lipschitz continuous function $\sigma$ has at most affine growth, the non-negative function $\rho_p \ast \sigma^2$ is differentiable and so is $ \sigma_p$. Thus:
\begin{equation}
 \label{lipschtz_bound_sig}
 |\sigma^{'}_p| \le [\sigma]_{Lip}.
\end{equation}
Besides, we know that, $x \mapsto \sigma^2(x) + a_{\sigma} x^2$ is convex since the infimum defining $a_\sigma$ holds as a minimum. Hence, its convolution by $\rho_p$ is convex too and one classically checks that it is infinitely differentiable. On the other hand, using \eqref{hyp_int_convol}, one has
$$\int_{\mathbb{R}}^{} \Big(\sigma(x-y)^2 + a_{\sigma} \cdot (x-y)^2  \Big) \rho_p(y) \, \mathrm{d}y = \sigma_p^2(x) + a_{\sigma} x^2 + \frac{a_{\sigma}}{p^2} - \frac{1}{p}$$
so that $x \mapsto \sigma_p^2(x) + a_{\sigma} x^2$ is convex with $\big(\sigma_p^2)^{''} \ge -2a_{\sigma}$. We also note that $\beta_p$ is infinitely differentiable, convex and such that $x \mapsto \beta_p(x) + c_{\beta} x$ is non-decreasing.

\vspace{0.2cm}
\item (Step 2). (\textit{Differentiation}). This step aims at formally differentiating the function $ \mathbb{R} \ni x \mapsto \mathbb{E}\big[f\big(\mathcal{E}_p^h(x, \tilde{Z}^h)\big]$ which will be the key to prove claims \ref{euller_inc_cvx} and \ref{affine_cond_drift}.

\vspace{0.1cm}
Note that if $f : \mathbb{R} \mapsto \mathbb{R}$ is convex and twice continuously differentiable, one shows that the function $\mathbb{R} \ni x \mapsto \mathbb{E}\big[f\big(\mathcal{E}_p^h(x, \tilde{Z}^h)  \big] $ is twice continuously differentiable since the random variable $\tilde{Z}^h$ is bounded by $s_h < + \infty$. Its first two partial derivatives given by:
\begin{equation}
 \label{first_deriv_cont}
 \partial_x \mathbb{E}\big[f\big(\mathcal{E}_p^h(x, \tilde{Z}^h)  \big] = \mathbb{E}\Big[f^{'}\big(\mathcal{E}_p^h(x, \tilde{Z}^h  \big)\big) \cdot \big(1 + \sqrt{h} \sigma_p^{'}(x) \tilde{Z}^h + h \beta_p^{'}(x) \big)  \Big]
\end{equation}
and
\begin{equation}
 \label{snd_deriv_cont}
 \begin{split}
      \partial_{xx} \mathbb{E}\big[f\big(\mathcal{E}_p^h(x, \tilde{Z}^h)  \big] &= \mathbb{E}\Big[f^{'}\big(\mathcal{E}_p^h(x, \tilde{Z}^h  \big)\big)  \cdot \big(\sqrt{h} \sigma^{''}(x) \tilde{Z}^h + h \beta_p^{''}(x) \big)  \Big]\\
      &\quad + \mathbb{E}\Big[f^{''}\big(\mathcal{E}_p^h(x, \tilde{Z}^h  \big)\big) \cdot \big(1 + \sqrt{h} \sigma_p^{'}(x) \tilde{Z}^h + h\beta_p^{'}(x)\big)^2  \Big].
 \end{split}
\end{equation}

\vspace{0.2cm}
\noindent
\item (Step 3). (\textit{Claims \ref{euller_croissance} and \ref{euller_inc_cvx} for $\mathcal{E}_p^h(\cdot, \Tilde{Z}^h)$}). 

\vspace{0.1cm}
We start by proving \ref{euller_croissance}. When $h \le \frac{1}{2 c_{\beta}}$, since $x \mapsto \beta_p(x) + c_{\beta} x$ is non-decreasing, one has $h \beta_p^{'}(x) \ge -c_{\beta}h \ge -\frac{1}{2}$ and, by definition of the threshold $s_h$ and using \eqref{lipschtz_bound_sig}, one has
$$\partial_x \mathcal{E}_p^h(x, \tilde{Z}^h) = 1 + \sqrt{h}\sigma_p^{'}(x)\tilde{Z}^h + h \beta_p^{'}(x) \ge 1 - \sqrt{h} |\sigma_p^{'}(x)| s_h - \frac{1}{2} \ge 1 - \lambda - \frac{1}{2} > 0$$
since $\lambda < \frac{1}{2 + \sqrt{2}} < 1/2$. Therefore, $x \mapsto  \mathcal{E}_p^h(x, \tilde{Z}^h)$ is non-decreasing with respect to the non-decreasing stochastic ordering. Thus letting $p \to + \infty$ implies \ref{euller_croissance}. It remains to prove claim \ref{euller_inc_cvx}.

\vspace{0.1cm}
Having in mind that $|\sigma^{'}| \le [\sigma]_{Lip}$, one has
\begin{equation}
\label{eq_help_dim_1_refin_1}
1 + \sqrt{h} \sigma_p^{'}(x)\tilde{Z}^h + h\beta_p^{'}(x) \ge \frac{1}{2} - \sqrt{h} [\sigma]_{Lip} \cdot s_h \ge \frac{1}{2} - \lambda >0.
\end{equation}
Thus, if $f$ is convex, non-decreasing and twice continuously differentiable, then the partial derivative in \eqref{first_deriv_cont} is non-negative. Hence, the random function $\mathcal{E}^h_p\big(\cdot, \tilde{Z}^h\big)$ is non-decreasing for the increasing convex order. This partially proves \ref{euller_inc_cvx} since, as already mentioned, we need to prove the same result, but for the random function $\mathcal{E}^h\big(\cdot, \tilde{Z}^h\big)$ and without assuming that $f$ is twice differentiable.

\vspace{0.2cm}
\noindent
\item (Step 4). (\textit{Claim \ref{affine_cond_drift} for $\mathcal{E}_p^h(\cdot, \Tilde{Z}^h)$}). 

\vspace{0.1cm}
It follows from an integration by parts similar to Stein's Lemma that for a twice continuously differentiable function $f : \mathbb{R} \to \mathbb{R}$, one has
	\begin{align*}
	\mathbb{E}\Big[f^{'}\big(\mathcal{E}_p^h(x, \tilde{Z}^h \big)\big) \tilde{Z}^h  \Big] &= \int_{-s_h}^{s_h} f^{'}\big(\mathcal{E}_p^h(x, z)  \big) z e^{-\frac{z^2}{2}} \, \frac{\mathrm{d}z}{\sqrt{2\pi}}\\
	&= \int_{-s_h}^{s_h} f^{''}\big(\mathcal{E}_p^h(x, z)\big) \sqrt{h} \sigma_p(x)   \Big(e^{-\frac{z^2}{2}} - e^{-\frac{s_h^2}{2}}\Big) \, \frac{\mathrm{d}z}{\sqrt{2\pi}}\\
	&= \mathbb{E}\Bigg[f^{''}\big(\mathcal{E}_p^h(x, \tilde{Z}^h) \big) \sqrt{h} \sigma_p(x) \mathrm{1}_{\{\tilde{Z}^h \neq 0\}} \underbrace{\Big(1 - e^{- \frac{s_h^2 - (\tilde{Z}^h)^2}{2}}  \Big)}_{\ge 0}  \Bigg].
	\end{align*}
Plugging this equality in \eqref{snd_deriv_cont} yields,
\begin{equation}
\label{der_snd_transfo}
    \begin{split}
        \partial_{xx} \mathbb{E}\big[f\big(\mathcal{E}_p^h (x, \tilde{Z}^h)\big] &= \mathbb{E}\left[f^{'}\big(\mathcal{E}_p^h(x, \tilde{Z}^h) \big) h \beta_p^{''}(x) \right]\\
& \quad + \mathbb{E}\Bigg[f^{''}\big(\mathcal{E}_p^h(x, \tilde{Z}^h) \Big( \big(1 + \sqrt{h} \sigma_p^{'}(x) \tilde{Z}^h + h \beta_p^{'}(x) \big)^2 \\
&\quad + h \mathrm{1}_{\{\tilde{Z}^h \neq 0\}} \big(1 - e^{- \frac{s_h^2 - (\tilde{Z}^h)^2}{2}}  \big) \sigma_p \sigma_p^{''}(x)  \Big) \Bigg].
    \end{split}
\end{equation}
When $f$ is non-decreasing, then the first expectation in the right-hand side is non-negative by the convexity of $\beta_p$. It is still non-negative disregarding the monotonicity of $f$ when $\beta$ is affine since $\beta_p^{''}$ vanishes. Let us handle the second expectation. Using the identity $\sigma_p \sigma_p^{''}(x) = \frac{1}{2}\big(\sigma_p^2 \big)^{''} - \big(\sigma_p^{'} \big)^2$, the definition of $a_{\sigma}$ and the elementary inequality $1 - e^{-u} \le u$, one has
\begin{align}
\label{eq_help_dim_1_refin_2}
h \Big(1 - e^{-\frac{s_h^2 - (\tilde{Z}^h)^2}{2}} \Big) \sigma_p \sigma_p^{''}(x) &= \frac{h}{2} \Big(1 - e^{-\frac{s_h^2 - (\tilde{Z}^h)^2}{2}} \Big) \big((\sigma_p^2)^{''} + 2 a_{\sigma} \big)  - h \Big(1 - e^{-\frac{s_h^2 - (\tilde{Z}^h)^2}{2}} \Big) \big((\sigma_p^{'})^2 + a_{\sigma}  \big) \nonumber\\
&\ge -h \Big(1 - e^{-\frac{s_h^2 - (\tilde{Z}^h)^2}{2}} \Big) \big((\sigma_p^{'})^2 + a_{\sigma}  \big) \nonumber \\
&\ge -h \frac{s_h^2 - (\tilde{Z}^h)^2}{2} \big([\sigma]_{Lip}^2 + a_{\sigma} \big) \nonumber \\
&\ge -\frac{h s_h^2}{2} \big([\sigma]_{Lip}^2 + a_{\sigma}  \big) = - \frac{\lambda^2}{2}.
\end{align}
Inequalities \eqref{eq_help_dim_1_refin_1} and \eqref{eq_help_dim_1_refin_2} imply that, on the event $\{\tilde{Z}^h \neq 0 \}$,
$$\big(1 + \sqrt{h} \sigma_p^{'}(x) \tilde{Z}^h + h \beta_p^{'}(x) \big)^2 + h \mathrm{1}_{\{\tilde{Z}^h \neq 0\}} \Big(1 - e^{-\frac{s_h^2 - (\tilde{Z}^h)^2}{2}}  \Big) \sigma_p \sigma_p{''}(x) \ge \big(\frac{1}{2} - \lambda \big)^2 - \frac{\lambda^2}{2} > 0$$
since $\lambda < 1 - 1/\sqrt{2} = \frac{1}{2 + \sqrt{2}}$. Note that the latter expression is also positive on $\{\tilde{Z}^h = 0\}$. As $f^{''}$ is non-negative when $f$ is convex, we deduce that the second expectation in the right-hand side of \eqref{der_snd_transfo} is non-negative. Hence, when $f$ is moreover non-decreasing or $\beta$ is affine (so that $\beta_p$ is affine too), we get:
 $$\partial_{xx} \mathbb{E}\big[f\big(\mathcal{E}_p^h(x, \tilde{Z}^h)\big] \ge 0,$$
which implies that the random function $\mathcal{E}_p^h\big(\cdot, \tilde{Z}^h\big)$ is convex for the convex ordering.

\vspace{0.2cm}
\noindent
\item (Step 5). (\textit{Claims \ref{euller_inc_cvx} and \ref{affine_cond_drift}: General form}).

\vspace{0.1cm}
Let $f: \mathbb{R} \to \mathbb{R}$ be a twice continuously differentiable function. As already mentioned $\beta_p \to \beta$ and $\sigma_p \to \sigma$ pointwise and $\mathcal{E}_p^h(x, \tilde{Z}^h)$ is a bounded random variable when $h$ is fixed. When $h \in (0, \frac{1}{2 c_{\beta}})$, if $f : \mathbb{R} \to \mathbb{R}$ is $\mathcal{C}^2$, non-decreasing and convex, one has almost surely $f\big(\mathcal{E}_p^h(x, \tilde{Z}^h)\big) \to f\big(\mathcal{E}^h(x, \tilde{Z}^h)\big)$ as $p \to +\infty$ and
\begin{equation}
    \label{cvg_E_n_to_E}
    \mathbb{E} f\big(\mathcal{E}^h(x, \tilde{Z}^h) \big ) = \lim_{p\to\infty} \mathbb{E}f\big(\mathcal{E}_p^h(x, \tilde{Z}^h) \big)
\end{equation}
since $f$ is locally bounded. The same holds true for regular convex order (with $f$ convex and $\mathcal{C}^2$) when $\beta$ is affine. Thus, owing to \eqref{cvg_E_n_to_E}, claims \ref{euller_inc_cvx} and \ref{affine_cond_drift} previously established for the random function $\mathcal{E}_p^h(\cdot, \tilde{Z}^h)$ also hold for the random function $\mathcal{E}^h(\cdot, \tilde{Z}^h)$ but still assuming $f$ is twice continuously differentiable.

\vspace{0.1cm}
In order to relax the regularity of $f$, we proceed as previously by considering $f_p := f \ast \rho_p$.  Assume $f : \mathbb{R} \to \mathbb{R}$ is non-decreasing and convex. Then $f$ is continuous, $f_p$ is well-defined, $\mathcal{C}^{\infty}$, convex and $f_p \to f$ pointwise. Moreover, $\underset{p}{\sup} \hspace{0.1cm} |f_p|$ is bounded on every compact interval $[-A, A] \subset \mathbb{R}$ ($A > 0$), by $\underset{|y| \le A, u \in \supp(\rho)}{\sup} \hspace{0.1cm} |f(y-u)| < + \infty$. As the random variables $\mathcal{E}^h(x, \tilde{Z}^h), x \in [-A,A]$, have values in a fixed compact, it follows by the dominated convergence theorem and result \eqref{cvg_E_n_to_E} that, for every $x \in \mathbb{R}$,
\begin{equation}
\label{passage_lim_dim1_refin}
\mathbb{E} f\big(\mathcal{E}^h(x, \tilde{Z}^h) \big ) =  \lim_{j\to\infty} \mathbb{E}f_j\big(\mathcal{E}^h(x, \tilde{Z}^h) \big)=  \lim_{j\to\infty} \lim_{p\to\infty} \mathbb{E}f_j\big(\mathcal{E}_p^h(x, \tilde{Z}^h) \big),
\end{equation}
hence $\mathbb{R} \ni x \mapsto \mathbb{E} f\big(\mathcal{E}^h(x, \tilde{Z}^h) \big )$ is non-decreasing since we have previously shown that for all $j$, $\mathbb{R} \ni x \mapsto \mathbb{E} f_j\big(\mathcal{E}_p^h(x, \tilde{Z}^h) \big )$ is non-decreasing. This completes the proof of \ref{euller_inc_cvx}. To establish convex ordering, we can restrict to Lipschitz convex functions owing to Lemma \ref{rq_carac_ord_cvx}. Assume $f$ to be Lipschitz and convex. Then all the functions $f_p$ defined as above are well-defined and uniformly Lipschitz since $[f_p]_{Lip} \le [f]_{Lip}$. Still using that $\mathcal{E}_p^h(x, \tilde{Z}^h)$ is a bounded random variable, we derive that \eqref{passage_lim_dim1_refin} still holds true for every $x\in \mathbb{R}$ owing to the dominated convergence theorem. This allows to transfer convex ordering. Thus claim \ref{affine_cond_drift} is completely proved.
\end{proof}

We now consider the \emph{truncated BDPP}, replacing the random noise $Z$ by its truncated version $\tilde{Z}^h$ in the \emph{BDPP} \eqref{bdpp_with_markov_transition}. That is, we consider:
\begin{equation}
    \left\{
    \begin{array}{ll}
        \tilde{v}_k^{(m)}\big(x, Q \big) = \underset{q \in \mathbb{A}_c(t_k, Q)}{\sup} \hspace{0.1cm} \Big[ \Psi_k(t_k, q, x) + \tilde{\mathrm{P}}^{(mn)}_{km:(k+1)m}\big(\tilde{v}_{k+1}^{(m)}(\cdot, Q+q) \big)(x)\Big],\\
        \vspace{0.2cm}
        \tilde{v}_{n}^{(m)}(x, Q) = P_c(t_n, x, Q),
    \end{array}
\right.
\label{eq_dp_swing_trunc}
\end{equation}
where for $i < j \in \{0,\ldots,mn-1\}$ (still with $h = \frac{T}{mn}$), we used the following notations:
\begin{equation}
    \label{trunc_transition_def}
    \tilde{\mathrm{P}}^{(mn)}_{i:j} := \tilde{\mathrm{P}}^{(mn)}_i \circ \cdots \circ \tilde{\mathrm{P}}^{(mn)}_{j-1} \quad \text{with} \quad \tilde{\mathrm{P}}^{(mn)}_\ell(f)(x) := \mathbb{E}f\big(\mathcal{E}_\ell^{(mn)}(x, \tilde{Z}^h)\big) \quad \text{for} \quad 0 \le \ell  \le mn -1.
\end{equation}

The following proposition shows that the convexity propagates through this \emph{truncated BDPP}.

\begin{Proposition}[Convexity propagation: truncated BDPP]
\label{cnv_dim1_trunc}
Let $c \in \{firm, pen\}$. Under assumption $\bm{(\mathcal{H}_1^{c})}$ and if, in addition, assumptions of Proposition \ref{approx_dim1_conv} hold true, then for any $k \in \{0,\ldots,n\}$ and $Q \in \mathcal{Q}_c(t_k)$,
$$ \mathbb{R} \ni x \mapsto \tilde{v}_k^{(m)}\big(x, Q\big) \quad \text{is convex.}$$
\end{Proposition}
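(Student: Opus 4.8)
The plan is to reproduce the backward-induction scheme of the proof of Theorem \ref{propagation_cvx}, but with the convexity-propagation role of assumption $\bm{(\mathcal{H}_2^{[\sigma]})}$ (together with Proposition \ref{useful_properties_cvx_ord}) now played by the truncated Euler scheme result of Proposition \ref{approx_dim1_conv}. Concretely, I would prove by backward induction on $k$ that for every $Q \in \mathcal{Q}_c(t_k)$ the map $\mathbb{R} \ni x \mapsto \overline{v}_k^{(m)}(x, Q)$ is convex and has at most linear growth. The base case $k=n$ is immediate, since $\overline{v}_n^{(m)}(\cdot, Q) = P_c(t_n, \cdot, Q)$ is convex with linear growth by assumption $\bm{(\mathcal{H}_1^{c})}$.

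For the inductive step, suppose $x \mapsto \overline{v}_{k+1}^{(m)}(x, Q')$ is convex with linear growth for every $Q'$. Fix $q \in \mathbb{A}_c(t_k, Q)$ and put $Q' = Q+q$. The heart of the argument is to show that the $m$-fold composed transition $\overline{\mathrm{P}}^{(mn)}_{km:(k+1)m}$ maps convex functions to convex functions. This I would obtain step by step: claim \ref{affine_cond_drift} of Proposition \ref{approx_dim1_conv} (valid here because the pricing dynamics is a martingale, i.e. $\beta$ is affine/null) says that a single transition $\overline{\mathrm{P}}^{(mn)}_\ell$ sends a convex function to a convex function for the convex order. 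To iterate this over the $m$ intermediate dates $t_\ell^{(mn)}$, $\ell = (k+1)m-1, \ldots, km$, I must simultaneously preserve linear growth: since $\beta$ and $\sigma$ have at most linear growth and $\tilde{Z}^h$ is bounded, the random map $x \mapsto \xi_\ell^{(mn)}(x, \tilde{Z}^h)$ has linear growth, hence so does $\overline{\mathrm{P}}^{(mn)}_\ell(g)$ whenever $g$ does. Chaining the $m$ single-step results starting from the convex function $\overline{v}_{k+1}^{(m)}(\cdot, Q+q)$ then yields the convexity of the continuation value $x \mapsto \overline{\mathrm{P}}^{(mn)}_{km:(k+1)m}(\overline{v}_{k+1}^{(m)}(\cdot, Q+q))(x)$.

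With the continuation value convex, I conclude exactly as in Theorem \ref{propagation_cvx}. For each fixed $q$, the function $x \mapsto \Psi_k(t_k, q, x) + \overline{\mathrm{P}}^{(mn)}_{km:(k+1)m}(\overline{v}_{k+1}^{(m)}(\cdot, Q+q))(x)$ is convex, being the sum of the convex payoff (by $\bm{(\mathcal{H}_1^{c})}$) and the convex continuation value. Writing the convexity inequality for each summand at $\lambda x + (1-\lambda)y$ and using the subadditivity $\sup_q(\lambda a_q + (1-\lambda)b_q) \le \lambda \sup_q a_q + (1-\lambda)\sup_q b_q$, together with the fact that $\mathbb{A}_c(t_k, Q)$ does not depend on $x$, gives $\overline{v}_k^{(m)}(\lambda x + (1-\lambda)y, Q) \le \lambda \overline{v}_k^{(m)}(x, Q) + (1-\lambda)\overline{v}_k^{(m)}(y, Q)$, which closes the induction.

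The main obstacle is the middle step, namely propagating the convex order consistently across the $m$ sub-steps of one exercise period. Two points demand care. First, Proposition \ref{approx_dim1_conv} is stated for convex functions with controlled growth, so I must verify that both convexity and linear growth survive each intermediate node before the next transition is applied, which is precisely why the growth bound on $\overline{\mathrm{P}}^{(mn)}_\ell(g)$ is recorded as part of the induction hypothesis. Second, full convexity relies on claim \ref{affine_cond_drift}, which is only available for affine drift; for a genuinely convex, non-affine $\beta$ only the increasing convex order of claim \ref{euller_inc_cvx} would be inherited, and that is insufficient to propagate plain convexity, so the driftless (martingale) pricing setting is what makes the scheme go through.
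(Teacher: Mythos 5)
Your proof is correct and follows essentially the same route as the paper: backward induction on $k$, with the single-step convexity propagation of claim \ref{affine_cond_drift} of Proposition \ref{approx_dim1_conv} iterated across the $m$ sub-steps of each exercise period (the paper's ``straightforward induction''), and the sup-of-sums inequality to close the induction. Your additional bookkeeping --- tracking linear growth through the sub-steps and flagging explicitly that full convexity needs the affine-drift case of Proposition \ref{approx_dim1_conv}, a hypothesis the paper's own proof also invokes implicitly --- only makes the same argument more careful.
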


\begin{proof}
We proceed by backward induction on $k$. Owing to Assumption $\bm{(\mathcal{H}_1^{c})}$, $\tilde{v}_{n}^{(m)}\big(\cdot, Q\big)$ is convex for any $Q \in \mathcal{Q}_c(t_n)$. Let us assume that the proposition holds for $k+1$. Let $x_{k}, y_{k} \in \mathbb{R}$ and $\lambda \in [0,1]$ define $\overline{x_ky_k}^{\lambda} := \lambda x + (1-\lambda)y$. For any $Q \in \mathcal{Q}_c(t_k)$, we have,
$$\tilde{v}_k^{(m)}\big(\overline{x_ky_k}^{\lambda} , Q\big) = \underset{q \in \mathbb{A}_c(t_k, Q)}{\sup} \hspace{0.1cm} \Big[\Psi_k\big(t_k, q, \overline{x_ky_k}^{\lambda}\big) + \tilde{\mathrm{P}}^{(mn)}_{km:(k+1)m}\big(\tilde{v}_{k+1}^{(m)}(\cdot, Q+q) \big)\big(\overline{x_ky_k}^{\lambda}\big) \Big].$$
Since by the induction assumption, $\tilde{v}_{k+1}^{(m)}(\cdot, Q)$ is convex for any $Q \in \mathcal{Q}_c(t_{k+1})$, then using claim \ref{affine_cond_drift} (in Proposition \ref{cond_drit_dim1_refin}) and a straightforward induction one shows that
$$\mathbb{R} \ni x \mapsto \tilde{\mathrm{P}}^{(mn)}_{km:(k+1)m}\big(\tilde{v}_{k+1}^{(m)}(\cdot, Q+q) \big)(x) \quad \text{is convex}.$$
The latter, combined with the convexity Assumption $\bm{(\mathcal{H}_1^{c})}$ of the payoff function, yields:
\begin{align*}
\tilde{v}_k^{(m)}\big(\overline{x_ky_k}^{\lambda} , Q\big) &\le \lambda \underset{q \in \mathbb{A}_c(t_k, Q)}{\sup} \hspace{0.1cm} \Big[\Psi_k\big(t_{k}, q, x_{k}\big) +  \tilde{\mathrm{P}}^{(mn)}_{km:(k+1)m}\big(\tilde{v}_{k+1}^{(m)}(\cdot, Q+q) \big)(x_k) \Big]\\
& \quad + (1 - \lambda) \underset{q \in \mathbb{A}_c(t_k, Q)}{\sup} \hspace{0.1cm} \Big[\Psi_k\big(t_{k}, q, y_{k}\big) +  \tilde{\mathrm{P}}^{(mn)}_{km:(k+1)m}\big(\tilde{v}_{k+1}^{(m)}(\cdot, Q+q) \big)(y_k) \Big]\\
&= \lambda \tilde{v}_k^{(m)}\big(x_{k}, Q\big)  + (1 - \lambda) \tilde{v}_k^{(m)}\big(y_{k}, Q\big).
\end{align*}
This completes the proof.
\end{proof}

We proved that, given the semi-convexity assumption on $\sigma$, the propagation of convexity through the \emph{BDPP} holds when the involved random noise is truncated. The next step is to take the limit $m \to +\infty$ to propagate that convexity. This is stated in Proposition \ref{cnv_dim1} and relies on two ingredients: (1) the convergence of the truncated Euler scheme toward the actual one. (2) the Lipschitz continuous property of the swing value function.

We still set the threshold $s_h$ as in Proposition \ref{approx_dim1_conv}. For $i \in \{0,\ldots, mn\}$, we consider the processes $\Big(\overline{X}^{x, i}_{t_{\ell}^{(mn)}} \Big)_{i \le \ell \le mn}$ and $\Big(\widetilde{X}^{x, i}_{t_{\ell}^{(mn)}}\Big)_{i \le \ell \le mn}$ denoting the Euler scheme (still with step $h = \frac{T}{mn}$) and its truncated version starting at $x \in \mathbb{R}$ at time $t_i^{(mn)}$ respectively. Then, we have the following convergence result.

\begin{Proposition}
\label{result_trunc_euler}
For all $u \ge 1$ and $i \le \ell \in \{0,\ldots, mn\}$ we have, for any compact set $K \subset \mathbb{R}$,
\begin{equation}
\underset{x \in K}{\sup} \hspace{0.1cm} \Big\|\widetilde{X}^{x, i}_{t_{\ell}^{(mn)}} - \overline{X}^{x, i}_{t_{\ell}^{(mn)}}\Big\|_u \xrightarrow[m \to +\infty]{} 0.
\end{equation}
\end{Proposition}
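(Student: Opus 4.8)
The plan is to couple the two schemes by driving them with the \emph{same} Gaussian increments $(Z_\ell)_\ell$, the truncated scheme using $\tilde Z^h_{\ell} = Z_\ell\mathbf 1_{\{|Z_\ell|\le s_h\}}$ in place of $Z_\ell$, and to control the resulting discrepancy through a discrete Gronwall argument. Writing $\Delta_\ell := \widetilde X^{x,i}_{t_\ell^{(mn)}} - \overline X^{x,i}_{t_\ell^{(mn)}}$ and abbreviating $\beta_\ell(\cdot)=\beta(t_\ell^{(mn)},\cdot)$, $\sigma_\ell(\cdot)=\sigma(t_\ell^{(mn)},\cdot)$, one has $\Delta_i=0$ and, subtracting the two Euler recursions \eqref{euler_diffusion_X} (after adding and subtracting $\sigma_\ell(\overline X_\ell)\tilde Z^h_{\ell+1}$), the one-step identity
$$\Delta_{\ell+1} = \Delta_\ell + h\big[\beta_\ell(\widetilde X_\ell)-\beta_\ell(\overline X_\ell)\big] + \sqrt h\,\big[\sigma_\ell(\widetilde X_\ell)-\sigma_\ell(\overline X_\ell)\big]\tilde Z^h_{\ell+1} - \sqrt h\,\sigma_\ell(\overline X_\ell)\,Z_{\ell+1}\mathbf 1_{\{|Z_{\ell+1}|>s_h\}}.$$
The last term is the only genuine \emph{source} of error: it is the truncation defect, supported on the tail event $\{|Z_{\ell+1}|>s_h\}$, whereas the two middle terms merely propagate the already accumulated error in a Lipschitz fashion.

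Next I would convert this into an affine $\mathbb L^u$ recursion. Taking the $\|\cdot\|_u$ norm and using Minkowski's inequality, the Lipschitz bounds \eqref{hyp_strong_sol} (so that $[\beta_\ell]_{Lip},[\sigma_\ell]_{Lip}\le K$), the boundedness $\|\sigma\|_\infty:=\sup_{t,x}|\sigma(t,x)|<+\infty$, and the independence of $Z_{\ell+1}$ (hence of $\tilde Z^h_{\ell+1}$ and of $Z_{\ell+1}\mathbf 1_{\{|Z_{\ell+1}|>s_h\}}$) from the $\mathcal F_{t_\ell^{(mn)}}$-measurable quantities $\Delta_\ell,\widetilde X_\ell,\overline X_\ell$, every product factorises in $\mathbb L^u$. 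This yields
$$\|\Delta_{\ell+1}\|_u \le \big(1 + C_h\big)\,\|\Delta_\ell\|_u + \varepsilon_h, \qquad C_h := K\big(h + \sqrt h\,\|Z\|_u\big),\quad \varepsilon_h := \sqrt h\,\|\sigma\|_\infty\,\big(\mathbb E\big[|Z|^u\mathbf 1_{\{|Z|>s_h\}}\big]\big)^{1/u},$$
with $C_h,\varepsilon_h$ independent of $x$ and of $\ell$ (finiteness of all $\mathbb L^u$-moments of the schemes follows from the boundedness of $\beta,\sigma$ and $Z\in\mathbb L^u$, so the norms are well defined). Since $\Delta_i=0$, a discrete Gronwall iteration over the at most $mn$ steps gives, uniformly in $x\in K$ and in $i\le\ell\le mn$,
$$\sup_{x\in K}\big\|\Delta_\ell\big\|_u \le \varepsilon_h\,\frac{(1+C_h)^{mn}-1}{C_h} \le \frac{\varepsilon_h}{C_h}\,e^{C_h\, mn}.$$

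The crux — and the step I expect to be the main obstacle — is to show that this bound vanishes as $m\to+\infty$, since a priori the Gronwall factor $e^{C_h mn}$ explodes. Here the precise scaling of the threshold is decisive: recall $s_h=\lambda\big(h([\sigma]_{Lip}^2+a_\sigma)\big)^{-1/2}$ with $h=T/(mn)$, so that $s_h^2 = \lambda^2 mn/\big(T([\sigma]_{Lip}^2+a_\sigma)\big)$ grows \emph{linearly} in $mn$, whereas $C_h\le K(1+\|Z\|_u)\sqrt h$ gives $C_h\, mn \le K(1+\|Z\|_u)\sqrt T\,\sqrt{mn}$, growing only like $\sqrt{mn}$. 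The standard Gaussian tail estimate $\mathbb E\big[|Z|^u\mathbf 1_{\{|Z|>s_h\}}\big]\le C_u\,s_h^{u-1}e^{-s_h^2/2}$ (valid for $s_h\ge1$, hence eventually) then gives, after the factor $\sqrt h$ cancels between $\varepsilon_h$ and the leading term of $C_h$,
$$\frac{\varepsilon_h}{C_h}\,e^{C_h\, mn} \le C'\,s_h^{(u-1)/u}\,\exp\!\Big(-\tfrac{s_h^2}{2u} + K(1+\|Z\|_u)\sqrt T\,\sqrt{mn}\Big).$$
Because $s_h^2\propto mn$ dominates the $\sqrt{mn}$ term in the exponent and $s_h^{(u-1)/u}$ grows only polynomially in $\sqrt{mn}$, the right-hand side tends to $0$, which proves the claim (indeed uniformly over all of $\mathbb R$; the restriction to a compact $K$ is not even needed once $\sigma$ is bounded).
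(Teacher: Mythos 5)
Your proof is correct, but it takes a genuinely different route from the paper's. The paper does not argue from scratch: it simply invokes Proposition 4.1 of \cite{jourdain2023convex} to obtain a bound $\big\|\widetilde{X}^{x,i}_{t_\ell^{(mn)}} - \overline{X}^{x,i}_{t_\ell^{(mn)}}\big\|_u \le c_m(1+|x|)$ with $c_m \to 0$ not depending on $x$, and then takes the supremum over the compact set $K$. You instead give a self-contained argument: couple the two schemes through the same Gaussian innovations, isolate the truncation defect $\sqrt{h}\,\sigma_\ell(\overline{X}_\ell)Z_{\ell+1}\mathbf{1}_{\{|Z_{\ell+1}|>s_h\}}$ as the sole error source, convert Minkowski plus independence into an affine $\mathbb{L}^u$ recursion, and close with discrete Gronwall and a Gaussian tail estimate. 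Your identification of the decisive scaling is exactly right and is the heart of the matter: $s_h^2$ grows linearly in $mn$ while the Gronwall exponent $C_h\,mn$ grows only like $\sqrt{mn}$, so $e^{-s_h^2/(2u)}$ crushes $e^{C_h\,mn}$. What your route buys: an explicit exponential rate, a bound uniform in $i \le \ell \le mn$ (which the statement implicitly needs, since these indices move with $m$), and convergence uniform over all of $\mathbb{R}$ rather than over compacts; this last improvement rests on reading the standing assumption $\sup_{t\in[0,T]}|\beta(t,\cdot)|+|\sigma(t,\cdot)|<+\infty$ as uniform boundedness in $(t,x)$, which is what makes your $\varepsilon_h$ free of $x$. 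What the paper's route buys: by delegating to \cite{jourdain2023convex} it covers merely Lipschitz (linear-growth) coefficients, whence the $(1+|x|)$ factor; if you wanted to avoid leaning on boundedness of $\sigma$, you would replace $\|\sigma\|_\infty$ by $\sup_\ell\big\|\sigma_\ell\big(\overline{X}^{x,i}_{t_\ell^{(mn)}}\big)\big\|_u \le C(1+|x|)$ (standard Euler moment bounds), after which the rest of your argument goes through unchanged and recovers exactly the paper's form. Two harmless caveats you should state explicitly: the bounds $C_h\le K(1+\|Z\|_u)\sqrt{h}$ and the tail estimate hold only for $m$ large enough (i.e.\ $h\le 1$ and $s_h$ large), which suffices for a limit statement; and dividing by $C_h$ in the Gronwall step uses $K>0$, which is legitimate because the constant-volatility case gives $s_h=+\infty$, where the two schemes coincide and there is nothing to prove.
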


\begin{proof}
Let $x \in K$ with $K \subset \mathbb{R}$ being a compact set. Using the proof of Proposition 4.1 in \cite{jourdain2023convex}, one may show that there exists a sequence $(c_m)_{m \ge 1}$ (not depending on $x$) such that:
\begin{equation}
    \label{borne_error_euler}
    \Big\|\widetilde{X}^{x, i}_{t_{\ell}^{(mn)}} - \overline{X}^{x, i}_{t_{\ell}^{(mn)}}\Big\|_u \le c_m(1 + |x|) \quad \text{with} \quad c_m \xrightarrow[m \to +\infty]{} 0.
\end{equation}
The result also holds true when taking the supremum for $x$ lying in the compact set $K$. This completes the proof.
\end{proof}

Then, we prove the propagation of convexity under the semi-convexity assumption of the volatility function as a result by propagation by taking the limit.

\begin{Proposition}
\label{cnv_dim1}
Let $c \in \{firm, pen\}$. Under Assumptions $\bm{(\mathcal{H}_1^{c})}$, $\bm{(\mathcal{H}_3)}$, $\bm{(\mathcal{H}_4^c)}$ and if, in addition, assumptions of Proposition \ref{approx_dim1_conv} hold true, then for all $k \in \{0,\ldots, n\}$, $Q \in \mathcal{Q}_c(t_k)$ and any compact set $K \subset \mathbb{R}$
$$ K \ni x \mapsto v_k(x, Q) \quad \text{is convex}.$$
\end{Proposition}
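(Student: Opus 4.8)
The plan is to obtain the convexity of the continuous-time value function $v_k(\cdot, Q)$ as a pointwise limit, on the compact set $K$, of the convex functions $\overline{v}_k^{(m)}(\cdot, Q)$. Since convexity is stable under pointwise convergence, once we show that $\overline{v}_k^{(m)}(x, Q) \to v_k(x, Q)$ for every $x \in K$, the conclusion follows at once from Proposition \ref{cnv_dim1_trunc}, which asserts that each $\overline{v}_k^{(m)}(\cdot, Q)$ is convex. Because Proposition \ref{cvg_vm_vm_bar} already yields $\sup_{x \in K} \big|\overline{v}_k^{(m)}(x, Q) - v_k^{(m)}(x, Q)\big| \to 0$, the triangle inequality reduces the whole statement to the single missing convergence
$$\lim_{m \to +\infty} \sup_{x \in K} \big|v_k^{(m)}(x, Q) - v_k(x, Q)\big| = 0,$$
that is, the convergence of the Euler-scheme value function defined by \eqref{bdpp_with_markov_transition} towards the diffusion value function defined by \eqref{bdpp_ito_diff}.

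First I would establish this last convergence by a backward induction on $k$, closely mirroring the proof of Proposition \ref{cvg_vm_vm_bar}. The case $k = n$ is immediate since both value functions equal the terminal penalty $P_c$. For the induction step, the discrete volume (bang-bang) setting makes the supremum over $q \in \mathbb{A}_c(t_k, Q)$ a maximum over a finite set, so the elementary bound $\big|\sup_i a_i - \sup_i b_i\big| \le \sup_i |a_i - b_i|$ reduces the problem to controlling, for each admissible $q$,
$$\big| \mathrm{P}^{(mn)}_{km:(k+1)m}\big(v_{k+1}^{(m)}(\cdot, Q+q)\big)(x) - \mathbb{E}\big(v_{k+1}(X_{t_{k+1}}, Q+q) \,\big|\, X_{t_k} = x\big)\big|.$$
I would split this into two pieces. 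The first, obtained by replacing $v_{k+1}^{(m)}$ with $v_{k+1}$ inside the Euler transition, is handled by the induction hypothesis together with a localization on a ball $\mathcal{B}(0, R)$: the contribution inside the ball is bounded by $\big\|v_{k+1}^{(m)}(\cdot, Q+q) - v_{k+1}(\cdot, Q+q)\big\|_{\mathcal{B}(0, R)}$, while the tail contribution on $\{|\overline{X}| \ge R\}$ is controlled using the uniform-in-$m$ Lipschitz bound and the uniform second-moment estimate of the Euler scheme (Remark \ref{bounded_lip_coeff_vk}), yielding an $O(1/R)$ bound uniform in $x \in K$; letting $m \to +\infty$ and then $R \to +\infty$ kills this piece. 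The second piece compares the Euler transition of the fixed Lipschitz function $v_{k+1}(\cdot, Q+q)$ with its genuine diffusion counterpart; by the Lipschitz property of $v_{k+1}$ (Remark \ref{bounded_lip_coeff_vk}) it is dominated by $[v_{k+1}]_{Lip} \cdot \big\|\overline{X}^{x, km}_{t_{(k+1)m}^{(mn)}} - X^{x, t_k}_{t_{k+1}}\big\|_1$, which is the classical strong $\mathbb{L}^1$ error of the Euler scheme and tends to $0$ uniformly for $x$ in the compact $K$.

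Combining the two convergences through the triangle inequality then gives $\overline{v}_k^{(m)}(x, Q) \to v_k(x, Q)$ for every $x \in K$, and the stability of convexity under pointwise limits completes the argument. I expect the main obstacle to be the localization step in the first piece of the backward induction: unlike in Proposition \ref{cvg_vm_vm_bar}, here one compares the discretized scheme with the true diffusion, so one must simultaneously absorb the discretization error of the Euler transition and the tails of the (un-truncated) Euler scheme, all uniformly in $x$ over $K$ and uniformly in the finitely many admissible controls $q$. The uniform-in-$m$ Lipschitz bound of Remark \ref{bounded_lip_coeff_vk} and the uniform moment control of the Euler scheme are precisely what make the double limit ($m \to +\infty$ followed by $R \to +\infty$) tractable, and the bang-bang structure of the control set is what permits the interchange of the supremum over $q$ with these limits.
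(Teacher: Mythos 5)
Your proposal is correct and follows essentially the same route as the paper: reduce via Proposition \ref{cnv_dim1_trunc}, Proposition \ref{cvg_vm_vm_bar} and stability of convexity under pointwise limits to the uniform convergence of $v_k^{(m)}(\cdot,Q)$ to $v_k(\cdot,Q)$ on $K$, then prove that by backward induction using the same two-term decomposition (your two pieces are the paper's, merely in reversed order), with the strong $\mathbb{L}^1$ Euler error handling the diffusion-versus-scheme term and the localization on $\mathcal{B}(0,R)$ plus the uniform-in-$m$ Lipschitz bound of Remark \ref{bounded_lip_coeff_vk} handling the induction term. No gaps worth flagging.
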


\begin{proof}
Since limits propagate convexity, it suffices to combine, by the triangle inequality, results in Proposition \ref{cvg_val_func_m}, and the following result:
$$\forall K \subset \mathbb{R} \hspace{0.1cm} \text{compact}, \hspace{0.2cm} Q \in \mathcal{Q}_c(t_k), \quad  \lim_{m\to\infty} \underset{x \in K}{\sup} \hspace{0.1cm} \big|\tilde{v}_k^{(m)}(x, Q) -  v_k^{(m)}(x, Q) \big| = 0$$ 
which we are going to prove using a backward induction on $k$. The result clearly holds true for $k = n$. Assume now it holds for $k+1$. For any $x \in K$ and $Q \in \mathcal{Q}_c(t_k)$, using successively the classic inequality, $\big|\underset{i \in I}{\sup} \hspace{0.1cm} a_i - \underset{i \in I}{\sup} \hspace{0.1cm} b_i\big| \le \underset{i \in I}{\sup} \hspace{0.1cm} |a_i-b_i|$, and the triangle inequality, one has:
\begin{align*}
\big|\tilde{v}_k^{(m)}(x, Q) -  v_k^{(m)}(x, Q) \big| &\le \underset{q \in \mathbb{A}_c(t_k, Q)}{\sup} \hspace{0.1cm} \Big| \tilde{\mathrm{P}}_{km:(k+1)m}^{(mn)}\big(\tilde{v}_{k+1}^{(m)}(\cdot, Q + q)\big)(x) - \mathrm{P}_{km:(k+1)m}^{(mn)}\big(v_{k+1}^{(m)}(\cdot, Q + q)\big)(x) \Big|\\
 &\le \underset{q \in \mathbb{A}_c(t_k, Q)}{\sup} \hspace{0.1cm} \Big| \tilde{\mathrm{P}}_{km:(k+1)m}^{(mn)}\big(\tilde{v}_{k+1}^{(m)}(\cdot, Q + q)\big)(x) - \tilde{\mathrm{P}}_{km:(k+1)m}^{(mn)}\big(v_{k+1}^{(m)}(\cdot, Q + q)\big)(x) \Big|\\
 & \quad + \underset{q \in \mathbb{A}_c(t_k, Q)}{\sup} \hspace{0.1cm} \Big| \tilde{\mathrm{P}}_{km:(k+1)m}^{(mn)}\big(v_{k+1}^{(m)}(\cdot, Q + q)\big)(x) - \mathrm{P}_{km:(k+1)m}^{(mn)}\big(v_{k+1}^{(m)}(\cdot, Q + q)\big)(x) \Big|.
\end{align*}
Let us deal with the r.h.s. sum term by term. We omit the supremum for $q \in \mathbb{A}_c(t_k, Q)$ owing to assumption $\bm{\mathcal{H}}$. Then, for $R \ge 1$, it is straightforward that, for the first term, there exists a positive constant $\kappa_{\beta, \sigma, k}^{(1)}$ such that:

\begin{align*}
    \Big|\tilde{\mathrm{P}}_{km:(k+1)m}^{(mn)}\big(&\tilde{v}_{k+1}^{(m)}(\cdot, Q + q)\big)(x) - \tilde{\mathrm{P}}_{km:(k+1)m}^{(mn)}\big(v_{k+1}^{(m)}(\cdot, Q + q)\big)(x) \Big|\\
    &= \Big|\mathbb{E}\big(\tilde{v}_{k+1}^{(m)}(\widetilde{X}_{t_{k+1}}^{(mn)}, Q+q) \rvert \widetilde{X}_{t_{k}}^{(mn)} = x \big) -\mathbb{E}\big(v_{k+1}^{(m)}(\widetilde{X}_{t_{k+1}}^{(mn)}, Q+q) \rvert \widetilde{X}_{t_{k}}^{(mn)} = x \big)\Big|\\
    &\le \Big\|\tilde{v}_{k+1}^{(m)}(\cdot, Q+q) - v_{k+1}^{(m)}(\cdot, Q+q) \Big\|_{\mathcal{B}(0, R)}\\
    & \quad + \mathbb{E}\Bigg|\Big(\tilde{v}_{k+1}^{(m)}\big(\widetilde{X}_{t_{k+1}}^{x, km}, Q+q\big) -v_{k+1}^{(m)}\big(\widetilde{X}_{t_{k+1}}^{x, km}, Q+q\big)  \Big) \cdot \mathbf{1}_{\big\{\big|\widetilde{X}_{t_{k+1}}^{x, km}\big| \ge R \big\}}  \Bigg|\\
    &\le \Big\|\tilde{v}_{k+1}^{(m)}(\cdot, Q+q) - v_{k+1}^{(m)}(\cdot, Q+q) \Big\|_{\mathcal{B}(0, R)} + \kappa_{\beta, \sigma, k}^{(1)} \cdot \mathbb{E}\Bigg(\Big(1 + \big|\widetilde{X}_{t_{k+1}}^{x, km}\big| \Big) \cdot \mathbf{1}_{\big\{\big|\widetilde{X}_{t_{k+1}}^{x, km}\big| \ge R \big\}} \Bigg),
\end{align*}
where in the last inequality, we used the Lipschitz property of functions $\tilde{v}_{k+1}^{(m)}(\cdot, Q+q), v_{k+1}^{(m)}(\cdot, Q+q)$ with their Lipschitz coefficients hidden in the positive constant $\kappa_{\beta, \sigma, k}^{(1)}$. Besides, note that for $R \ge 1$, using the Lipschitz property of $\beta_t, \sigma_t$ uniformly in $t$ and standard arguments, there exists a positive constant $\kappa_{\beta, \sigma, k}^{(2)}$ such that:
$$\mathbb{E}\Bigg(\Big(1 + \big|\widetilde{X}_{t_{k+1}}^{x, km}\big| \Big) \cdot \mathbf{1}_{\big\{\big|\widetilde{X}_{t_{k+1}}^{x, km}\big| \ge R \big\}} \Bigg) \le  \frac{2}{R} \cdot \Big\|\widetilde{X}_{t_{k+1}}^{x, km} \Big\|_2^2 \le \frac{\kappa_{\beta, \sigma, k}^{(2)}}{R} \big(1 + |x|^2\big).$$

Putting all together, for any $R \ge 1$, one has:
\begin{align*}
    \underset{x \in K}{\sup} \hspace{0.1cm} \Big|&\tilde{\mathrm{P}}_{km:(k+1)m}^{(mn)}\big(\tilde{v}_{k+1}^{(m)}(\cdot, Q + q)\big)(x) - \tilde{\mathrm{P}}_{km:(k+1)m}^{(mn)}\big(v_{k+1}^{(m)}(\cdot, Q + q)\big)(x) \Big|\\
    &\le \Big\|\tilde{v}_{k+1}^{(m)}(\cdot, Q+q) - v_{k+1}^{(m)}(\cdot, Q+q) \Big\|_{\mathcal{B}(0, R)} + \frac{\kappa_{\beta, \sigma, k}^{(1)}\kappa_{\beta, \sigma, k}^{(2)}}{R} \Big(1 + \underset{x \in K}{\sup} \hspace{0.1cm} |x|^2\Big).
\end{align*}
Since $K$ is compact set, first letting $m \to + \infty$ and using the induction assumption and then letting $R \to +\infty$ yields:
$$\lim_{m \to +\infty} \hspace{0.1cm} \underset{x \in K, q \in \mathbb{A}_c(t_k, Q)}{\sup} \hspace{0.1cm} \Big|\tilde{\mathrm{P}}_{km:(k+1)m}^{(mn)}\big(\tilde{v}_{k+1}^{(m)}(\cdot, Q + q)\big)(x) - \tilde{\mathrm{P}}_{km:(k+1)m}^{(mn)}\big(v_{k+1}^{(m)}(\cdot, Q + q)\big)(x) \Big|  =0.$$

We now handle the second term. One has:
\begin{align*}
    \Big| \tilde{\mathrm{P}}_{km:(k+1)m}^{(mn)}\big(&v_{k+1}^{(m)}(\cdot, Q + q)\big)(x) - \mathrm{P}_{km:(k+1)m}^{(mn)}\big(v_{k+1}^{(m)}(\cdot, Q + q)\big)(x) \Big|\\
    &=  \Big|\mathbb{E}\big(v_{k+1}^{(m)}(\widetilde{X}_{t_{k+1}}^{(mn)}, Q+q) \rvert \widetilde{X}_{t_{k}}^{(mn)} = x \big) -\mathbb{E}\big(v_{k+1}^{(m)}(\overline{X}_{t_{k+1}}^{(mn)}, Q+q) \rvert \overline{X}_{t_{k}}^{(mn)} = x \big)\Big|\\
    &= \Big|\mathbb{E}v_{k+1}^{(m)}\Big(\widetilde{X}_{t_{(k+1)m}^{(mn)}}^{x, km} \Big) - \mathbb{E}v_{k+1}^{(m)}\Big(\overline{X}_{t_{(k+1)m}^{(mn)}}^{x, k m}  \Big) \Big|\\
    &\le [v_{k+1}^{(m)}]_{Lip} \cdot \Bigg\|\widetilde{X}_{t_{(k+1)m}^{(mn)}}^{x, k m}-  \overline{X}_{t_{(k+1)m}^{(mn)}}^{x, k m}\Bigg\|_1 \xrightarrow[m \to +\infty]{} 0,
\end{align*}
where the convergence is uniform w.r.t. $x$ lying in the compact set $K$, owing to Remark \ref{bounded_lip_coeff_vk}, Proposition \ref{result_trunc_euler} and Proposition \ref{lispc_func_val}. This completes the proof.
\end{proof}

\section{Numerical experiments}
\label{sec_appl}
This section illustrates the main results of this paper, namely the convexity result and the functional monotonicity result. To this end, we consider a 15-day swing contract with daily exercise rights and a constant strike price set at $20$. The volume constraints configuration is given by parameters: $\underline{q} = 0$, $\overline{q} = 6$, $\underline{Q} = 50$, and $\overline{Q} = 80$. The swing physical space, representing the attainable cumulative consumption at each exercise date, is illustrated in Figure \ref{vol_curve_bgm}.

\begin{figure}[!ht]
\centering
\begin{tikzpicture}
\begin{axis}[
    xlabel={Exercise dates},
    ylabel={Cumulative consumption},
    xmin=0, xmax=16,
    ymin=-5, ymax=85,
    ymajorgrids=true,
    grid style=dashed,
   	width=0.5\linewidth,
	height=0.2\paperheight,
	legend pos = north west,
]

\addplot[color=blue, mark=triangle]
    coordinates {
    (0,0.0)(1,0.0)(2,0.0)(3,0.0)(4,0.0)(5,0.0)(6,0.0)(7,2.0)(8,8.0)(9,14.0)(10,20.0)(11,26.0)(12,32.0)(13,38.0)(14,44.0)(15,50.0)
    };
    \addlegendentry{Lower bound};
    
\addplot[color=orange, mark=triangle]
    coordinates {
    (0,0.0)(1,6.0)(2,12.0)(3,18.0)(4,24.0)(5,30.0)(6,36.0)(7,42.0)(8,48.0)(9,54.0)(10,60.0)(11,66.0)(12,72.0)(13,78.0)(14,80.0)(15,80.0)
    };
    \addlegendentry{Upper bound};
\end{axis}
\end{tikzpicture}
\caption{The swing volume grid (\emph{firm constraints}).}
\label{vol_curve_bgm}
\end{figure}
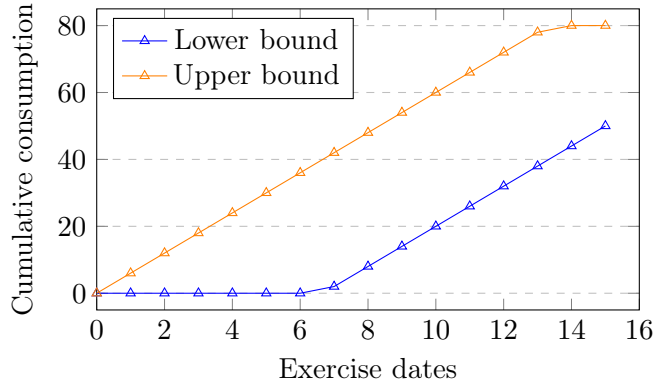

For the pricing, we use the Deep Neural Network (\emph{DNN}) approach introduced in \cite{lemaire2023swing}, and called \emph{NN strat}, which has shown competitive performance compared to state-of-the-art methods. Our neural network architecture consists of two hidden layers, of 10 units each. The Rectified Linear Unit (\emph{ReLU}) is used as activation function and a batch normalization is applied. We implement the \emph{DNN} using the \emph{PyTorch} \cite{Ketkar2017} toolbox and optimize it using the Preconditioned Stochastic Gradient Langevin Descent (PSGLD) method introduced in \cite{Li2015PreconditionedSG} which has already been used in the swing pricing context in \cite{lemaire2023swing} and generally analyzed and tested in \cite{PierreLangevin, bras:hal-03891234}. To perform the valuation, we use a Monte Carlo simulation with a sample size of $10^7$. Here, only the instantaneous price $S_t := F_{t,t}$ is considered. Generally, we will denote by $F_{t, T}$, the price of the underlying delivery contract seen at time $t$ with a commodity delivery at time $T \ge t$.

In the next two sections, we present numerical illustrations that are built upon a log-normal forward diffusion model. More precisely, we consider a $q$-factor forward diffusion model whose dynamics is given by:
\begin{equation}
\label{multi_fac_log_model}
	\frac{dF_{t, T}}{F_{t_, T}} = \sum_{i = 1}^{q}  \tilde{\sigma}_i e^{-\alpha_i (T-t)}dW_t^i, \quad t \in [0, T],
\end{equation}
where all $\big(W_t^i\big)_{t\ge 0}$, $1 \le i \le q$, are correlated Brownian motion i.e. with a square bracket:
\begin{equation*}
\langle dW_{\cdot}^i,dW_{\cdot}^j \rangle_t =  \left\{
    \begin{array}{ll}
        dt \hspace{1.4cm} \text{if} \hspace{0.2cm} i = j,\\
        \rho_{i, j} \cdot dt \hspace{0.6cm} \text{if} \hspace{0.2cm} i \neq j.
    \end{array}
\right.
\end{equation*}
Then, it follows from Itô's formula that
$$F_{t, T} = F_{0, T} \cdot \exp \Bigg(\sum_{i=1}^q  \tilde{\sigma}_i  \int_{0}^{t} e^{-\alpha_i (T-s)}\, \mathrm{d}W_s^i - \frac{1}{2} \sum_{i,j=1}^q  \tilde{\sigma}_i \tilde{\sigma}_j \int_{0}^{t} e^{-(\alpha_i + \alpha_j) (T-s)}\, \mathrm{d}s\Bigg)$$
so that, the spot price is given by:
\begin{equation}
\label{spot_price_log_n_model}
S_t = F_{0,t} \cdot \exp\Big(\langle \tilde{\sigma}, X_t \rangle - \frac{1}{2}\lambda_t^2\Big),
\end{equation}
where $\langle \tilde{\sigma}, X_t \rangle$ denotes the Euclidean inner product between vectors $\tilde{\sigma} = \big(\tilde{\sigma}_1, \ldots,  \tilde{\sigma}_q)^\top$ and $X_t = \big( X_t^1, \ldots, X_t^q \big)^\top $. Besides, for all $1 \le i \le q$:
\begin{equation*}
X_t^i = \int_{0}^{t} e^{-\alpha_i (t-s)}\, \mathrm{d}W_s^i \hspace{0.3cm} \text{and} \hspace{0.2cm} \lambda_t^2 = \sum_{i = 1}^{q} \frac{\tilde{\sigma}_i^2}{2\alpha_i}\big(1-e^{-2\alpha_i t}\big) +  \sum_{1 \le i \neq j \le q} \rho_{i, j} \frac{\tilde{\sigma}_i \tilde{\sigma}_j}{\alpha_i + \alpha_j} \Big(1 - e^{-(\alpha_i + \alpha_j)t}  \Big).
\end{equation*}

\subsubsection*{\underline{About the sensitivity: Delta computation}}
In what follows, we will represent swing prices as well as their sensitivities with respect to the forward price (see Figures \ref{one_fact_illus}, \ref{one_factor_pen}, \ref{result_3factors_model}) which we will call \emph{delta}. This is not the main purpose of the paper but we use it to, first double-check the convexity propagation, but also to emphasize the fact that the computation of this delta makes no problem in our numerical approach.

The computation of sensitivities boils down to the computation of the derivative of a certain function which, in our case, is the swing price defined by a stochastic optimal control problem. To achieve this, we rely on the \emph{envelope theorem} (see \cite{Milgrom2002EnvelopeTF}, and as used in \cite{lemaire2023swing}) which state that, under mild assumptions and in a parametric optimization problem, the sensitivity of the optimal value with respect to a parameter is given by the direct partial derivative of the objective function, evaluated at the optimal strategy, ignoring how the optimal choice itself changes. In our model \eqref{spot_price_log_n_model}, this derivative is given by (in case of a standard swing contract without penalty):
$$\nabla_{(F_{0,t_0},\ldots, F_{0,t_{n-1}})} \pi_0 = \Bigg(q_0^{*} \frac{S_{t_0}}{F_{0, t_0}}, \ldots,  q_{n-1}^{*}\frac{S_{t_{n-1}}}{F_{0, t_{n-1}}} \Bigg)^\top,$$
where $(q_0^{*},\ldots, q_{n-1}^{*})$ denotes the optimal control and $\pi_0$ is the swing initial price. For example, for a standard swing contract without penalty, with a fixed strike, and with no interest rate, this price is formally given by:
$$\pi_0 := \underset{(q_0, \ldots, q_{n-1}) \in \mathcal{A}}{\sup} \hspace{0.1cm} \mathbb{E}\Bigg[\sum_{k=0}^{n-1} q_k \big(S_{t_k} - K\big) \Bigg] = \underset{(q_0, \ldots, q_{n-1}) \in \mathcal{A}}{\sup} \hspace{0.1cm} \mathbb{E}\left[\sum_{k=0}^{n-1} q_k \Bigg(F_{0,t_k} \cdot \exp\Big(\langle \tilde{\sigma}, X_{t_k} \rangle - \frac{1}{2}\lambda_{t_k}^2\Big) - K\Bigg) \right],$$
where $\mathcal{A}$ is the set of admissible controls. Besides, in the sequel, only for the sake of presentation, we assume:
$$\forall t \ge 0, \quad F_{0, t} = F_{0} > 0.$$

\subsection{One factor model}
We start by a one-factor log-normal model, that is $q = 1$. The dynamics \eqref{multi_fac_log_model} reads,
\begin{equation}
	\frac{dF_{t, T}}{F_{t_, T}} = \tilde{\sigma} e^{-\alpha (T-t)}dW_t, \quad  t \le T,
	\label{hjm_model_one_factor}
\end{equation}
where $W$ is a standard Brownian motion and,
\begin{equation}
\label{spot_model}
S_t = F_{0,t} \cdot \exp\big(\tilde{\sigma} X_t - \frac{1}{2}\lambda_t^2\big) \hspace{0.2cm} \text{with} \hspace{0.2cm} X_t = \int_{0}^{t} e^{-\alpha (t-s)}\, \mathrm{d}W_s \hspace{0.3cm} \text{and} \hspace{0.3cm} \lambda_t^2 = \frac{\tilde{\sigma}^2}{2\alpha}\big(1-e^{-2\alpha t}\big).
\end{equation}
The Euler-Maruyama scheme of the diffusion \eqref{hjm_model_one_factor} writes,
\begin{equation*}
F_{t_{k+1}, T} = F_{t_{k}, T} + \sigma_{\tilde{\sigma}, k}\big(F_{t_{k}, T} \big) Z_{k+1} \hspace{0.4cm} \text{with} \hspace{0.4cm} \sigma_{\tilde{\sigma}, k}(x) = \tilde{\sigma} x\sqrt{\Delta t_k} e^{-\alpha (T-t_k)},
\end{equation*}
where $\Delta t_k = t_{k+1} - t_k$ and $(Z_k)_k$ are \emph{i.i.d.} copies of $Z \stackrel{\mathcal{L}}{\sim} \mathcal{N}(0,1)$. This model falls into a (martingale) \emph{ARCH} dynamics. Besides $\sigma_{\tilde{\sigma}, k}$ is affine (hence convex) and for $0 <\tilde{\sigma}_1 \le \tilde{\sigma}_2$, we clearly have $|\sigma_{\tilde{\sigma}_1, k}(x)|  \le |\sigma_{\tilde{\sigma}_2, k}(x)|$ meaning that $\sigma_{\tilde{\sigma}_1, k}(x) \preceq \sigma_{\tilde{\sigma}_2, k}(x)$. Hence assumptions for the functional monotonicity and convexity propagation hold true for this model. Our numerical results are illustrated in Figure \ref{one_fact_illus} with $\alpha = 0.4, \tilde{\sigma}_1 = 0.2, \tilde{\sigma}_2 = 0.7$. We see that the swing price is increasing with the volatility parameter $\Tilde{\sigma}$ and the first partial derivative of the swing price, confirming the convexity.
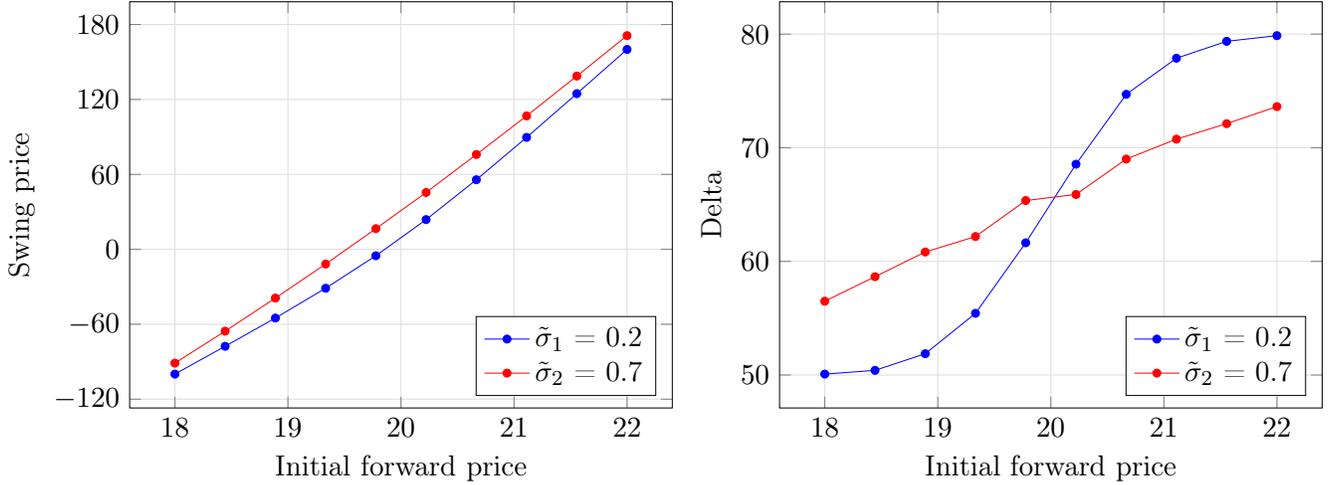
\begin{figure}[ht]
\centering

\begin{tikzpicture}
\begin{axis}[
	xlabel=Initial forward price $F_0$,
	ylabel=Swing price,
	grid=both,
        ytick distance = 60,
	minor grid style={gray!25},
	major grid style={gray!25},
	width=0.5\linewidth,
	height=0.25\paperheight,
	line width=0.2pt,
	mark size=1.5pt,
	mark options={solid},
	legend pos =  south east
	]
\addplot[color=blue, mark=*] %
	table[x=f0,y=price,col sep=comma]{datas/one_factor_price_delta_vol_0.2.csv};
\addlegendentry{$\tilde{\sigma}_1$ = 0.2};
\addplot[color=red, mark=*] %
	table[x=f0,y=price,col sep=comma]{datas/one_factor_price_delta_vol_0.7.csv};
\addlegendentry{$\tilde{\sigma}_2$ = 0.7};
\end{axis}
\end{tikzpicture}%
~
\begin{tikzpicture}
\begin{axis}[
	xlabel=Initial forward price $F_0$,
	ylabel=Delta,
	grid=both,
	minor grid style={gray!25},
	major grid style={gray!25},
	width=0.5\linewidth,
	height=0.25\paperheight,
	line width=0.2pt,
	mark size=1.5pt,
	mark options={solid},
	legend pos =  south east
	]
\addplot[color=blue, mark=*] %
	table[x=f0,y=delta,col sep=comma]{datas/one_factor_price_delta_vol_0.2.csv};
\addlegendentry{$\tilde{\sigma}_1$ = 0.2};
\addplot[color=red, mark=*] %
	table[x=f0,y=delta,col sep=comma]{datas/one_factor_price_delta_vol_0.7.csv};
\addlegendentry{$\tilde{\sigma}_2$ = 0.7};
\end{axis}
\end{tikzpicture}
\caption{Swing price and its delta in terms of initial forward price for different values of $\tilde{\sigma}$.}
\label{one_fact_illus}
\end{figure}
We then consider the \emph{call} version of the standard swing payoff (see \eqref{call_payoff}).
The swing prices are depicted in \ref{one_fact_illus_call} with $\tilde{\sigma}_1 = 0.2, \tilde{\sigma}_2 = 0.4$.

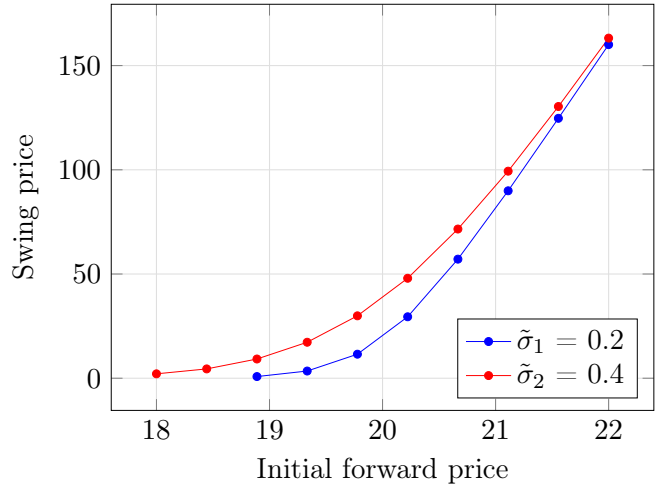
\begin{figure}[ht!]
\centering

\begin{tikzpicture}
\begin{axis}[
	xlabel=Initial forward price $F_0$,
	ylabel=Swing price,
	grid=both,
	minor grid style={gray!25},
	major grid style={gray!25},
	width=0.5\linewidth,
	height=0.25\paperheight,
	line width=0.2pt,
	mark size=1.5pt,
	mark options={solid},
	legend pos =  south east
	]
\addplot[color=blue, mark=*] %
	table[x=f0,y=price,col sep=comma]{datas/one_factor_price_delta_vol_0.2_call.csv};
\addlegendentry{$\tilde{\sigma}_1$ = 0.2};
\addplot[color=red, mark=*] %
	table[x=f0,y=price,col sep=comma]{datas/one_factor_price_delta_vol_0.4_call.csv};
\addlegendentry{$\tilde{\sigma}_2$ = 0.4};
\end{axis}
\end{tikzpicture}%
\caption{Swing price and its delta in terms of initial forward price for the call payoff \eqref{call_payoff}.}
\label{one_fact_illus_call}
\end{figure}
The results presented in Figures \ref{one_fact_illus} and \ref{one_fact_illus_call} provide numerical evidence that the swing price is convex with respect to the initial forward price $F_0$. Besides, it is also noteworthy that the convex shape of the swing price becomes more pronounced when the payoff exhibits a stronger convex shape as when we used the \emph{call} payoff \eqref{call_payoff} (see Figure \ref{one_fact_illus_call}).

The same numerical illustrations are performed for the penalty setting using $A = B = 0.2$ for the penalty function (see \eqref{penalty_func}). Results are depicted in Figure \ref{one_factor_pen} with $\tilde{\sigma}_1 = 0.3, \tilde{\sigma}_2 = 0.7$.
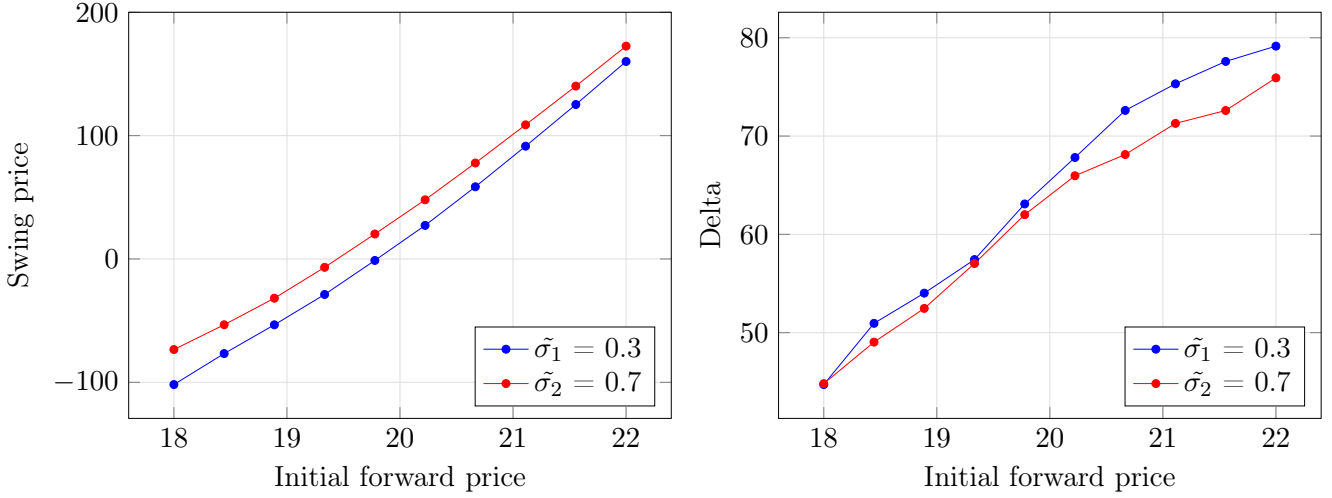
\begin{figure}[!h]
\centering
\begin{tikzpicture}
\begin{axis}[
	xlabel=Initial forward price $F_0$,
	ylabel=Swing price,
	grid=both,
	minor grid style={gray!25},
	major grid style={gray!25},
	width=0.5\linewidth,
	height=0.25\paperheight,
	line width=0.2pt,
	mark size=1.5pt,
	mark options={solid},
	legend pos =  south east
	]
\addplot[color=blue, mark=*] %
	table[x=f0,y=price,col sep=comma]{datas/penalty/one_factor_price_delta_vol_0.3_pen.csv};
\addlegendentry{$\tilde{\sigma_1}$ = 0.3};
\addplot[color=red, mark=*] %
	table[x=f0,y=price,col sep=comma]{datas/penalty/one_factor_price_delta_vol_0.7_pen.csv};
\addlegendentry{$\tilde{\sigma_2}$ = 0.7};
\end{axis}
\end{tikzpicture}%
~
\begin{tikzpicture}
\begin{axis}[
	xlabel=Initial forward price $F_0$,
	ylabel=Delta,
	grid=both,
	minor grid style={gray!25},
	major grid style={gray!25},
	width=0.5\linewidth,
	height=0.25\paperheight,
	line width=0.2pt,
	mark size=1.5pt,
	mark options={solid},
	legend pos =  south east
	]
\addplot[color=blue, mark=*] %
	table[x=f0,y=delta,col sep=comma]{datas/penalty/one_factor_price_delta_vol_0.3_pen.csv};
\addlegendentry{$\tilde{\sigma_1}$ = 0.3};
\addplot[color=red, mark=*] %
	table[x=f0,y=delta,col sep=comma]{datas/penalty/one_factor_price_delta_vol_0.7_pen.csv};
\addlegendentry{$\tilde{\sigma_2}$ = 0.7};
\end{axis}
\end{tikzpicture}
\caption{Prices and forward delta in terms of initial forward price (swing with penalty). We considered the one factor model \eqref{hjm_model_one_factor}.}
\label{one_factor_pen}
\end{figure}

\subsection{Multi-factor model}
We finally consider a three factor model (i.e., $q=3$) whose dynamics is given by,
\begin{equation}
	\label{3fac_diff}
	\frac{dF_{t, T}}{F_{t_, T}} = \sum_{i = 1}^{3}  \tilde{\sigma}_i e^{-\alpha_i (T-t)}dW_t^i.
\end{equation}
Here we set $\alpha_i = \alpha = 0.8, \tilde{\sigma}_i = 0.7, \rho_{i, j} = \rho \in [-1, 1]$. The Euler-Maruyama scheme of \eqref{3fac_diff} is,
\begin{equation*}
F_{t_{k+1}, T} = F_{t_{k}, T} +  \sigma_{\rho, k}\big(F_{t_k, T}\big) \cdot Z_{k+1},
\end{equation*}
where the matrix-valued function $\sigma_{\rho}\big(t_k, \cdot\big)$ is given by,
\begin{equation}
\label{mat_sig_3fac}
\sigma_{\rho, k}(x) = \Big(x \sqrt{\Delta t_k} \tilde{\sigma}_1 e^{-\alpha_1 (T- t_k)}, \ldots, x \sqrt{\Delta t_k} \tilde{\sigma}_q e^{-\alpha_q (T- t_k)} \Big) \cdot L(\rho) \in \mathbb{M}_{1, q}\big(\mathbb{R}\big)
\end{equation}
with $L(\rho) = \big(L_{i, j}(\rho)\big)_{1 \le i, j \le q}$ being the Cholesky decomposition of the correlation matrix $\Gamma := \big(\rho_{i, j}\big)_{1 \le i, j \le q}$ given by (assuming $\rho > -\frac{1}{q-1}$),
\begin{equation}
\label{mat_gamma}
\Gamma(\rho) = \big[\rho + (1-\rho)\mathbf{1}_{i = j}  \big]_{1 \le i,j \le q} = \begin{pmatrix} 
         1&\rho& \cdots & \cdots & \rho \\ 
         \rho &1& \ddots &  & \rho \\
         \vdots & \ddots & \ddots & \ddots & \vdots\\ 
         \vdots & & \ddots & 1 & \rho \\ 
         \rho & \cdots & \cdots & \rho & 1
\end{pmatrix}\in \mathcal{S}^{+}\big(q, \mathbb{R} \big)
\end{equation}
and $(Z_k)_k$ being \emph{i.i.d.} copies of $Z \sim \mathcal{N}(0,\mathbf{I}_3)$. For this specific matrix $\Gamma(\rho)$ \eqref{mat_gamma}, its Cholesky decomposition $L(\rho)$ has an explicit form given by Proposition \ref{chol_dec_prop}. Besides owing to Proposition \ref{mono_corr}, the matrix-valued function \eqref{mat_sig_3fac} meets the domination criterion when parameter $\rho$ varies. That is, if $\rho_1 \le \rho_2$ then $\sigma_{\rho_1, k}(x)  \preceq \sigma_{\rho_2, k}(x)$. It remains to discuss the $\preceq$-convexity of $\sigma_{\rho, k}$. It suffices to note that $\sigma_{\rho, k}$ can be written as in Remark \ref{gen_ex_cnv_matrix}. Indeed, one has
$$\sigma_{\rho, k}(x) = A \cdot \underbrace{\diag\Big(x\cdot \sqrt{\Delta t_k}, \ldots, x \cdot \sqrt{\Delta t_k} \Big)}_{\in \mathbb{M}_{q,q}\big(\mathbb{R}\big)}$$
with $A := \big(\tilde{\sigma}_1 e^{-\alpha_1 (T- t_k)}, \ldots, \tilde{\sigma}_q e^{-\alpha_q (T- t_k)} \big) \cdot L(\rho) \in \mathbb{M}_{1, q}\big(\mathbb{R}\big)$. Therefore, with the notations of Remark \ref{gen_ex_cnv_matrix}, it suffices to set $O = I_q \in \mathcal{O}(q, \mathbb{R})$. This shows that $\sigma_{\rho, k}$ is $\preceq$-convex. Thus, by the domination criterion, the swing price in model \eqref{3fac_diff} is increasing with the correlation parameter $\rho$. This claims is in line with what can be observed in the numerical illustrations in \cite{lemaire2023swing}. Results are shown in Figures \ref{result_3factors_model} and \ref{result_3factors_model_call} with $\rho_1 = 0.1, \rho_2 = 0.4$.

\begin{figure}[ht]
\centering

\begin{tikzpicture}
\begin{axis}[
	xlabel=Initial forward price $F_0$,
	ylabel=Swing price,
	grid=both,
	minor grid style={gray!25},
	major grid style={gray!25},
	width=0.5\linewidth,
	height=0.25\paperheight,
	line width=0.2pt,
	mark size=1.5pt,
	mark options={solid},
	legend pos =  south east
	]
\addplot[color=blue, mark=*] %
	table[x=f0,y=price,col sep=comma]{datas/3_factor_price_delta_corr0.1.csv};
\addlegendentry{$\rho_1$ = 0.1};
\addplot[color=red, mark=*] %
	table[x=f0,y=price,col sep=comma]{datas/3_factor_price_delta_corr0.4.csv};
\addlegendentry{$\rho_2$ = 0.4};
\end{axis}
\end{tikzpicture}%
~
\begin{tikzpicture}
\begin{axis}[
	xlabel=Initial forward price $F_0$,
	ylabel=Delta,
	grid=both,
	minor grid style={gray!25},
	major grid style={gray!25},
	width=0.5\linewidth,
	height=0.25\paperheight,
	line width=0.2pt,
	mark size=1.5pt,
	mark options={solid},
	legend pos =  south east
	]
\addplot[color=blue, mark=*] %
	table[x=f0,y=delta,col sep=comma]{datas/3_factor_price_delta_corr0.1.csv};
\addlegendentry{$\rho_1$ = 0.1};
\addplot[color=red, mark=*] %
	table[x=f0,y=delta,col sep=comma]{datas/3_factor_price_delta_corr0.4.csv};
\addlegendentry{$\rho_2$ = 0.4};
\end{axis}
\end{tikzpicture}
\caption{Swing price and its delta in terms of initial forward price for different correlation parameters $\rho$.}
\label{result_3factors_model}
\end{figure}
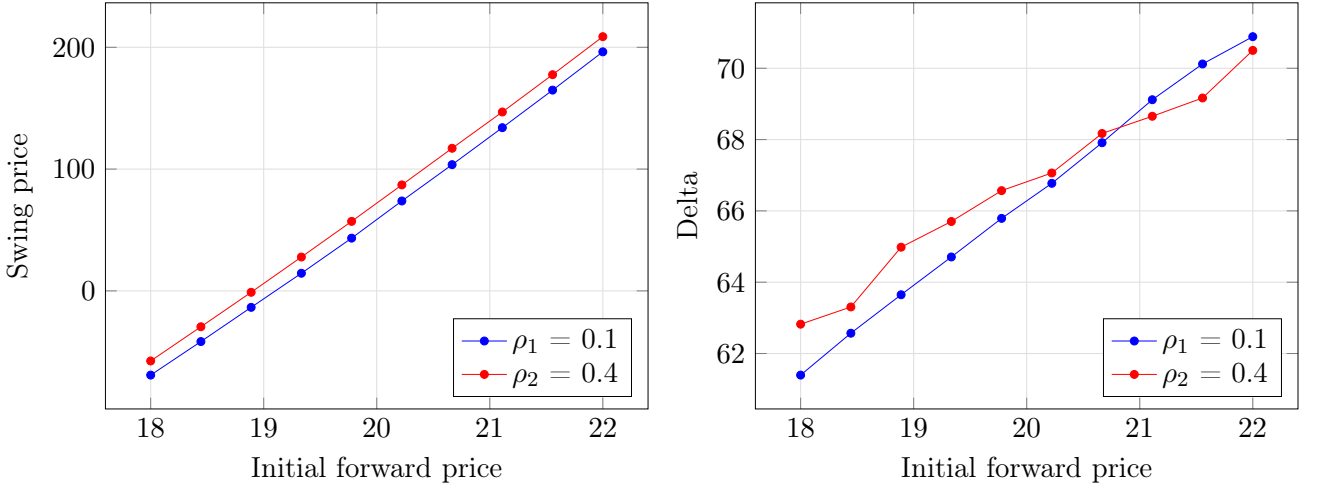

\begin{figure}[ht!]
\centering

\begin{tikzpicture}
\begin{axis}[
	xlabel=Initial forward price $F_0$,
	ylabel=Swing price,
	grid=both,
	minor grid style={gray!25},
	major grid style={gray!25},
	width=0.5\linewidth,
	height=0.25\paperheight,
	line width=0.2pt,
	mark size=1.5pt,
	mark options={solid},
	legend pos =  south east
	]
\addplot[color=blue, mark=*] %
	table[x=f0,y=price,col sep=comma]{datas/3_factor_price_delta_corr0.1_call.csv};
\addlegendentry{$\rho_1$ = 0.1};
\addplot[color=red, mark=*] %
	table[x=f0,y=price,col sep=comma]{datas/3_factor_price_delta_corr0.4_call.csv};
\addlegendentry{$\rho_2$ = 0.4};
\end{axis}
\end{tikzpicture}%
\caption{Swing price and its delta in terms of initial forward price for the call payoff \eqref{call_payoff}.}
\label{result_3factors_model_call}
\end{figure}
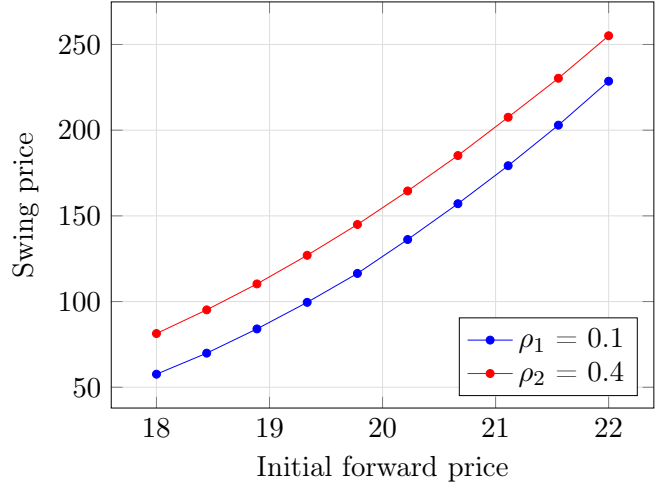
We have studied domination criterion with respect to the correlation parameter $\rho$. However, one may also prove a domination result with respect to the volatility parameters $(\tilde{\sigma}_j)_j$. Since the $\preceq$-convexity of $\sigma_{\rho, k}$ has already been shown, it remains to prove the $\preceq$-monotonicity  of the matrix-valued function $\sigma_{\rho, k}$ with respect to parameters $(\tilde{\sigma}_j)_j$ (using the pointwise order). This holds owing to Proposition \ref{vol_mono}. Indeed, keeping in mind equation \eqref{mat_sig_3fac} and setting,
$$A\big(\tilde{\sigma}_1, \ldots, \tilde{\sigma}_q \big) :=  \Big(x \sqrt{\Delta t_k} \tilde{\sigma}_1 e^{-\alpha_1 (T- t_k)}, \ldots, x \sqrt{\Delta t_k} \tilde{\sigma}_q e^{-\alpha_q (T- t_k)} \Big) \in \mathbb{M}_{1, q}\big(\mathbb{R}\big),$$
one may deduce, owing to Proposition \ref{vol_mono}, that if $\tilde{\sigma}_j \le \tilde{\sigma}_j'$ for all $1 \le j \le q$ then $A\big(\tilde{\sigma}_1, \ldots, \tilde{\sigma}_q \big) \preceq A\big(\tilde{\sigma}_1', \ldots, \tilde{\sigma}_q' \big)$. Results are illustrated in Figure \ref{result_3factors_model_sensi_vol}. We used $\rho = 0.3$.

\begin{figure}[ht]
\centering

\begin{tikzpicture}
\begin{axis}[
	xlabel=Initial forward price $F_0$,
	ylabel=Swing price,
	grid=both,
	minor grid style={gray!25},
	major grid style={gray!25},
	width=0.5\linewidth,
	height=0.25\paperheight,
	line width=0.2pt,
	mark size=1.5pt,
	mark options={solid},
	legend pos = north west,
	legend style={nodes={scale=0.7, transform shape}}
	]
\addplot[color=blue, mark=*] %
	table[x=f0,y=price,col sep=comma]{datas/3_factor_price_delta_sensi_vol_1.csv};
\addlegendentry{$\sigma_1 = 0.1$, $\sigma_2= 0.2, \sigma_3 = 0.3$};
\addplot[color=red, mark=*] %
	table[x=f0,y=price,col sep=comma]{datas/3_factor_price_delta_sensi_vol_2.csv};
\addlegendentry{$\sigma_1 = 0.4$, $\sigma_2= 0.4, \sigma_3 = 0.5$};
\end{axis}
\end{tikzpicture}%
~
\begin{tikzpicture}
\begin{axis}[
	xlabel=Initial forward price $F_0$,
	ylabel=Swing price (call payoff),
	grid=both,
	minor grid style={gray!25},
	major grid style={gray!25},
	label style={font=\small},
	width=0.5\linewidth,
	height=0.25\paperheight,
	line width=0.2pt,
	mark size=1.5pt,
	mark options={solid},
	legend pos =  north west,
	legend style={nodes={scale=0.7, transform shape}}
	]
\addplot[color=blue, mark=*] %
	table[x=f0,y=price,col sep=comma]{datas/3_factor_price_delta_sensi_vol_1_call.csv};
\addlegendentry{$\sigma_1 = 0.1$, $\sigma_2= 0.2, \sigma_3 = 0.3$};
\addplot[color=red, mark=*] %
	table[x=f0,y=price,col sep=comma]{datas/3_factor_price_delta_sensi_vol_2_call.csv};
\addlegendentry{$\sigma_1 = 0.4$, $\sigma_2= 0.4, \sigma_3 = 0.5$};
\end{axis}
\end{tikzpicture}
\caption{Swing price (left) and swing price with call payoff (right) in terms of initial forward price.}
\label{result_3factors_model_sensi_vol}
\end{figure}
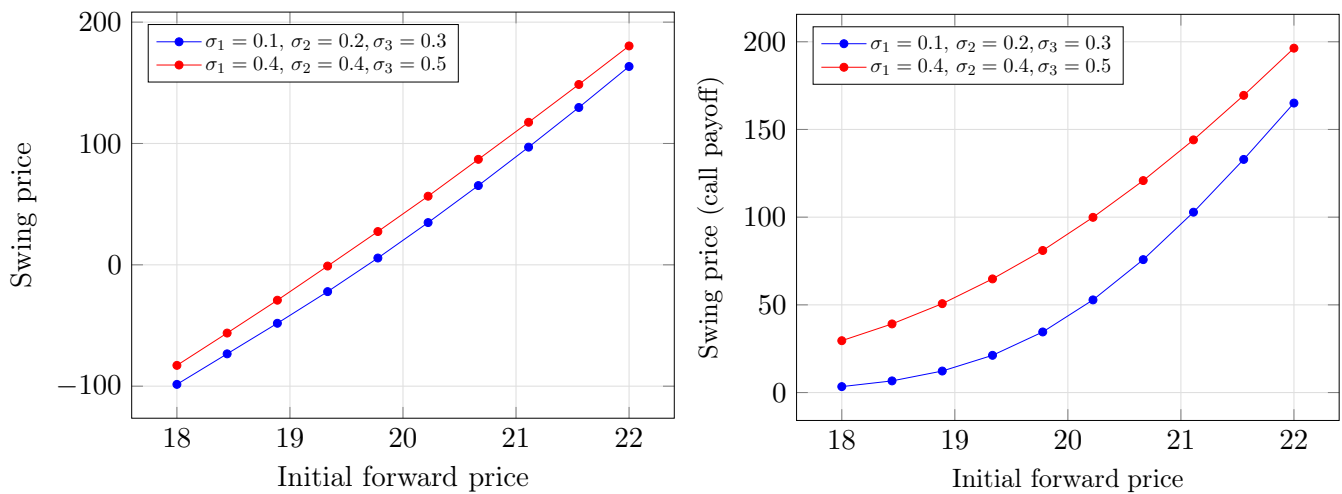

\subsection*{Acknowledgments}
The authors would like to thank Benjamin Jourdain, Vincent Lemaire, and Asma Meziou for fruitful discussions. The PhD thesis of the second author is funded by Engie Global Markets. The first author benefited from the support of the \q{Chaire Risques Financiers}, Fondation du Risque.

\bibliographystyle{plain}
\bibliography{biblio.bib}

\appendix
\section{Some useful results}
\label{useful_res_appendix}
\begin{lemma}[Consistency \cite{10.1214/19-AIHP1014}]
\label{rq_consistency}
Expectations in \eqref{cvx_ord_def} are always well-defined (in $(- \infty, + \infty]$). Indeed, for all $x \in \mathbb{R}^d$, since $f$ is convex, we have
$$f(x) \ge f(0) + \langle \nabla_s f(0), x \rangle,$$
where $\nabla_s f(0)$ denotes a subgradient of $f$ at $0$. Then applying the function $x \in \mathbb{R} \mapsto x^{-} := \max(-x, 0)$ on both sides of the last inequality yields,
$$f^{-}(x) \le \big(f(0) + \langle \nabla_s f(0), x \rangle \big)^{-} \le \big| f(0) + \langle \nabla_s f(0), x \rangle \big|\le \big| f(0) \big| + \big|\langle \nabla_s f(0), x \rangle \big| \le \big| f(0) \big| + \big| \nabla_s f(0) \big| \cdot \big| x \big|,$$
where we successively used triangle inequality and then Cauchy-Schwartz inequality. Thus for every $U \in \mathbb{L}^1_{\mathbb{R}^d}\big( \mathbb{P} \big)$ we have,
$$ \mathbb{E}f^{-}(U)\le \big| f(0) \big| + \big| \nabla_s f(0) \big| \cdot \mathbb{E}| U | < + \infty.$$
Therefore,
$$\mathbb{E}f(U)  = \underbrace{\mathbb{E}f^{+}(U)}_{\in [0, + \infty]} - \underbrace{\mathbb{E}f^{-}(U)}_{\in [0, +\infty)} \in (-\infty, +\infty],$$
where $x^{+} := \max(x, 0)$.
\end{lemma}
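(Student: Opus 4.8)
The plan is to bound the finite convex function $f$ below by an affine function and then exploit the integrability of $U$ to control only the negative part $f^-(U)$, while allowing the positive part to take the value $+\infty$. This is what makes the convex order definition \eqref{cvx_ord_def} meaningful even for unbounded convex test functions.

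First I would use that a real-valued convex function $f : \mathbb{R}^d \to \mathbb{R}$ is subdifferentiable at every point of $\mathbb{R}^d$. Choosing a subgradient $\nabla_s f(0)$ of $f$ at the origin, the subgradient (supporting hyperplane) inequality gives
$$f(x) \ge f(0) + \langle \nabla_s f(0), x\rangle, \qquad x \in \mathbb{R}^d,$$
so that $f$ dominates an affine map everywhere on $\mathbb{R}^d$.

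Next I would pass to negative parts. Because $t \mapsto t^- = \max(-t, 0)$ is non-increasing, the inequality above yields $f^-(x) \le \big(f(0) + \langle \nabla_s f(0), x\rangle\big)^-$; bounding the negative part by the modulus and applying the triangle and Cauchy-Schwarz inequalities produces the affine majorant
$$f^-(x) \le \big|f(0)\big| + \big|\nabla_s f(0)\big| \cdot |x|.$$
Evaluating at $U$ and taking expectations, the assumption $U \in \mathbb{L}^1_{\mathbb{R}^d}(\mathbb{P})$ makes the right-hand side finite, hence $\mathbb{E} f^-(U) < +\infty$.

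Finally, I would conclude by splitting $\mathbb{E} f(U) = \mathbb{E} f^+(U) - \mathbb{E} f^-(U)$: the first term lies in $[0, +\infty]$ and the second is finite, so the difference is well defined and belongs to $(-\infty, +\infty]$, which is the assertion. The only point requiring care is the existence of a subgradient at the origin --- this can fail for convex functions defined on a proper subdomain, but here it is guaranteed because $f$ is finite on all of $\mathbb{R}^d$, so the origin is an interior point of the effective domain and the subdifferential there is nonempty. Everything else is routine estimation.
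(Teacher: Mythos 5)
Your proposal is correct and follows essentially the same route as the paper: the supporting-hyperplane inequality at the origin, the monotonicity of $t \mapsto t^-$ together with the triangle and Cauchy--Schwarz inequalities to obtain an affine majorant of $f^-$, integrability of $U$ to make $\mathbb{E}f^-(U)$ finite, and the split $\mathbb{E}f(U) = \mathbb{E}f^+(U) - \mathbb{E}f^-(U)$. Your added remark that subdifferentiability at $0$ is guaranteed because $f$ is finite on all of $\mathbb{R}^d$ is a sound clarification of a point the paper leaves implicit.
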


\begin{lemma}[Stein lemma]
\label{stein_lem}
Suppose $Z \sim \mathcal{N}(\mu, \sigma^2)$. Then consider a $\mathcal{C}^1$ function $g : \mathbb{R} \to \mathbb{R}$ with at most exponential growth i.e., there exists a positive constant $C$ such that $|g(z)| \le C e^{|z|}$ for any $z \in \mathbb{R}$. Then,
$$\mathbb{E}\big(g(Z)(Z - \mu)  \big) = \sigma^2 \cdot \mathbb{E} \hspace{0.1cm}g'(Z).$$
\end{lemma}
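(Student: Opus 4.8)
The plan is to reduce the identity to a classical integration by parts against the Gaussian density. Write $\phi(z) := \frac{1}{\sigma\sqrt{2\pi}}\exp\big(-\frac{(z-\mu)^2}{2\sigma^2}\big)$ for the density of $Z$, so that $\mathbb{E}\big(g(Z)(Z-\mu)\big) = \int_{\mathbb{R}} g(z)(z-\mu)\phi(z)\,dz$. The starting point is the elementary differential identity satisfied by the Gaussian density, namely $\phi'(z) = -\frac{z-\mu}{\sigma^2}\phi(z)$, or equivalently $(z-\mu)\phi(z) = -\sigma^2\phi'(z)$. Substituting this into the integral recasts the problem as the evaluation of $-\sigma^2\int_{\mathbb{R}} g(z)\phi'(z)\,dz$, to which integration by parts applies.

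First I would work on a bounded interval $[-R,R]$ in order to sidestep any issue with improper integrals, writing
$$\int_{-R}^{R} g(z)(z-\mu)\phi(z)\,dz = -\sigma^2\big[g(z)\phi(z)\big]_{-R}^{R} + \sigma^2\int_{-R}^{R} g'(z)\phi(z)\,dz,$$
which is legitimate since $g$ is $\mathcal{C}^1$ and $\phi$ is smooth. I would then let $R\to+\infty$ and control each of the three terms separately.

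The main work, and the only delicate point, is the limiting argument, which rests entirely on the exponential growth assumption $|g(z)|\le Ce^{|z|}$. For the boundary term, the bound $|g(\pm R)\phi(\pm R)|\le \frac{C}{\sigma\sqrt{2\pi}}e^{R}e^{-(R-\mu)^2/(2\sigma^2)}$ shows it tends to $0$ as $R\to+\infty$, because the quadratic decay of the Gaussian dominates the linear exponential growth of $g$. For the left-hand integral, the integrand $g(z)(z-\mu)\phi(z)$ is absolutely integrable over $\mathbb{R}$, being a product of an exponential, a polynomial, and a Gaussian, so its truncated integral converges to $\mathbb{E}\big(g(Z)(Z-\mu)\big)$. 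Since the boundary term vanishes and the left-hand side converges, the remaining term $\sigma^2\int_{-R}^{R} g'(z)\phi(z)\,dz$ must converge as well, which identifies $\int_{\mathbb{R}} g'(z)\phi(z)\,dz = \mathbb{E}g'(Z)$ and delivers the stated identity.

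I expect the only real obstacle to be the integrability bookkeeping: the hypothesis bounds $g$ but says nothing about $g'$, so rather than invoking dominated convergence directly on the $g'$-integral, I would deduce its convergence \emph{a posteriori} from the convergence of the other two terms in the integration by parts, exactly as above. If one prefers an absolutely convergent statement, it suffices to remark further that $\mathbb{E}g'(Z)$ is automatically well-defined when $g'$ itself has at most exponential growth, a condition satisfied in all the applications of this lemma made in the present paper.
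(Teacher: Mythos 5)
Your proof is correct and takes essentially the same route as the paper's: integration by parts against the Gaussian density, with the exponential growth hypothesis forcing the boundary term to vanish. The only differences are minor refinements --- you work directly with $\mathcal{N}(\mu,\sigma^2)$ where the paper first reduces to the standard normal via $Z' = \frac{Z-\mu}{\sigma}$, and your truncation at $\pm R$ together with the \emph{a posteriori} identification of the $g'$-integral makes explicit an integrability point that the paper's proof (which integrates by parts over all of $\mathbb{R}$ in one stroke) silently glosses over.
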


\begin{proof}
Without loss of generality we may assume that $Z \sim \mathcal{N}(0, 1)$ since the case where $Z \sim \mathcal{N}(\mu, \sigma^2)$ will then be straightforward by setting $Z' = \frac{Z-\mu}{\sigma}$. Using integration by parts, we get,
$$\mathbb{E}\big(Z g(Z)\big) = \int_{-\infty}^{+\infty} zg(z) \cdot \frac{e^{-z^2/2}}{\sqrt{2 \pi}} \, \mathrm{d}z = \Big[-g(z) \frac{e^{-z^2/2}}{\sqrt{2 \pi}} \Big]_{z = -\infty}^{z = + \infty} + \int_{-\infty}^{+\infty} g'(z) \cdot \frac{e^{-z^2/2}}{\sqrt{2 \pi}} \, \mathrm{d}z.$$
Owing to the exponential growth assumption, the first term in the right hand side sum is equal to 0. So that, 
$$\mathbb{E}\big(Z g(Z)\big) = \int_{-\infty}^{+\infty} g'(z) \cdot \frac{e^{-z^2/2}}{\sqrt{2 \pi}} \, \mathrm{d}z = \mathbb{E}g'(Z).$$
\end{proof}

\begin{lemma}[See \cite{jourdain:hal-03862241}]
    \label{inf_conv_lemma}
    For any convex function $f : \mathbb{R}^d \to \mathbb{R}$ there exists a sequence $(f_n)_n$ of Lipschitz convex functions such that $f_n \uparrow f$.    
\end{lemma}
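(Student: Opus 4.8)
The plan is to regularize $f$ by \emph{inf-convolution} with a linear penalty, which is the natural construction suggested by the lemma's purpose (the Moreau-type envelope with a linear rate). Concretely, I would set, for each $n \in \mathbb{N}^{*}$,
$$f_n(x) := \inf_{y \in \mathbb{R}^d} \big( f(y) + n\,|x-y| \big), \quad x \in \mathbb{R}^d.$$
The whole argument then reduces to checking four properties of this family: that each $f_n$ is (i) finite and $n$-Lipschitz, (ii) convex, and that the sequence is (iii) non-decreasing in $n$ with $f_n \le f$, and (iv) convergent pointwise to $f$.

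For (i), the Lipschitz bound is immediate from the triangle inequality: for any $x, x'$ and any $y$ one has $f(y) + n|x-y| \le f(y) + n|x'-y| + n|x-x'|$, and taking the infimum over $y$ gives $f_n(x) \le f_n(x') + n|x-x'|$, so by symmetry $|f_n(x)-f_n(x')| \le n|x-x'|$. For (ii), I would use that the inf-convolution of two convex functions is convex: given $x_1,x_2$ and $\lambda \in [0,1]$, for arbitrary $y_1,y_2$ the convexity of $f$ and of $|\cdot|$ yields
$$\lambda\big(f(y_1)+n|x_1-y_1|\big) + (1-\lambda)\big(f(y_2)+n|x_2-y_2|\big) \ge f\big(\lambda y_1 + (1-\lambda)y_2\big) + n\big|\lambda x_1 + (1-\lambda)x_2 - \lambda y_1 - (1-\lambda)y_2\big|,$$
and taking the infimum over $y_1,y_2$ (noting $\lambda y_1 + (1-\lambda)y_2$ ranges over all of $\mathbb{R}^d$) gives $\lambda f_n(x_1)+(1-\lambda)f_n(x_2) \ge f_n(\lambda x_1 + (1-\lambda)x_2)$. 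Property (iii) is clear: increasing $n$ enlarges the penalty so $f_n \le f_{n+1}$, while choosing $y=x$ in the infimum gives $f_n(x) \le f(x)$.

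The only point requiring the full strength of convexity is (iv), and I expect it to be the main (though still mild) obstacle. Since $f$ is a \emph{finite} convex function on all of $\mathbb{R}^d$, it admits a subgradient $p$ at any fixed $x$, whence $f(y) \ge f(x) + \langle p, y-x\rangle \ge f(x) - |p|\,|y-x|$ by Cauchy--Schwarz. Consequently
$$f(y) + n|x-y| \ge f(x) + (n-|p|)\,|x-y| \ge f(x) \quad \text{whenever } n \ge |p|,$$
and taking the infimum over $y$ gives $f_n(x) \ge f(x)$ for all $n \ge |p|$. Combined with $f_n(x) \le f(x)$ from (iii), this shows $f_n(x) = f(x)$ for $n$ large enough (in particular $f_n(x) \to f(x)$), which also secures the finiteness asserted in (i). Together with the monotonicity from (iii) we conclude $f_n \uparrow f$. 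The existence of the subgradient is exactly where the hypothesis that $f$ is real-valued convex on the whole space (rather than extended-real-valued) is used.
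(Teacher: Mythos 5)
Your proposal is correct and follows essentially the same route as the paper: the same inf-convolution $f_n(x)=\inf_y\big(f(y)+n|x-y|\big)$, the same observation that $f_n\le f_{n+1}\le f$, and the same subgradient/Cauchy--Schwarz argument showing $f_n(x)=f(x)$ once $n\ge|\nabla_s f(x)|$. You additionally spell out the Lipschitz bound and the joint-convexity argument that the paper leaves as ``straightforward,'' which is a welcome but minor elaboration.
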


\begin{proof}
Let $f : \mathbb{R}^d \to \mathbb{R}$ be a convex function and consider the inf-convolution,
$$f_n(x) := \underset{y \in \mathbb{R}^d}{\inf} \big(f(y) + n|x-y| \big), \quad n \ge 1.$$
Then it is straightforward that for all $x \in \mathbb{R}^d$ and $n \ge 1$,
\begin{equation}
    \label{mono_inf_conv}
    f_n(x) \le f_{n+1}(x) \le f(x).
\end{equation}
Moreover, if we denote by $\nabla_s f(x)$ any sub-gradient of the convex function $f$ at $x$, it follows from the convexity of $f$ that for all $y \in \mathbb{R}^d$:
\begin{align*}
    f(y) + n|y-x| &\ge f(x) + \langle \nabla_s f(x), y-x\rangle + n|x-y|\\
    &\ge f(x) + \big(n - |\nabla_s f(x)|\big)|x-y|
\end{align*}
so that for all $n \ge |\nabla_s f(x)|$, one has $f_n(x) \ge f(x)$. In the second last line, we used Cauchy-Schwartz inequality. Thus combining the latter with Equation \eqref{mono_inf_conv} yields for all $n \ge |\nabla_s f(x)|$, $f_n(x) = f(x)$ so that $f_n \uparrow f$. The convexity of functions $f_n$ is straightforward.
\end{proof}

\begin{lemma}[Characterization of convex ordering \cite{jourdain:hal-03862241}]
\label{rq_carac_ord_cvx}
    In Definition \ref{cvx_ord_def}, other characterizations of convex ordering allow to restrict proofs to Lipschitz convex functions. Indeed, it suffices to consider the inf-convolution of the convex function $f$ defined on $\mathbb{R}^d$ as follows,
        $$f_n(x) := \underset{y \in \mathbb{R}^d}{\inf} \big(f(y) + n|x-y| \big), \quad n \ge 1.$$
        Then, one may show (see Lemma \ref{inf_conv_lemma}) that $f_n$ is a convex function and $f_n \uparrow f$ pointwise. Thus it suffices to check inequality \eqref{cvx_ord_def} for Lipschitz convex functions and obtain the same inequality for convex function as a straightforward application of monotone convergence theorem.
\end{lemma}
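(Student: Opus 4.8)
The plan is to prove the nontrivial implication: assuming the defining inequality $\int_{\mathbb{R}^d} f\,d\mu \le \int_{\mathbb{R}^d} f\,d\nu$ holds for every \emph{Lipschitz} convex $f$, I want to deduce it for an arbitrary convex $f : \mathbb{R}^d \to \mathbb{R}$ (the reverse implication is immediate, since Lipschitz convex functions form a subclass of convex functions, so no restriction is lost). First I would invoke Lemma \ref{inf_conv_lemma} to produce, for the given convex $f$, a sequence $(f_n)_n$ of Lipschitz convex functions with $f_n \uparrow f$ pointwise; the natural choice is precisely the inf-convolution $f_n(x) = \inf_{y \in \mathbb{R}^d}\big(f(y) + n|x-y|\big)$ appearing in the statement, whose Lipschitz and monotonicity properties are exactly what Lemma \ref{inf_conv_lemma} provides.

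Next, by hypothesis I have $\int f_n\,d\mu \le \int f_n\,d\nu$ for every $n$, and the goal is to pass to the limit $n \to \infty$. Here I would be careful about integrability: since $\mu, \nu \in \mathcal{P}_1(\mathbb{R}^d)$ and each $f_n$ is Lipschitz (hence of at most linear growth), all the integrals $\int f_n\,d\mu$ and $\int f_n\,d\nu$ are finite. To legitimately exchange limit and integral I would apply the monotone convergence theorem not to $f_n$ directly but to the nonnegative increasing sequence $f_n - f_1 \ge 0$, which converges pointwise to $f - f_1$. This yields $\int (f_n - f_1)\,d\mu \uparrow \int (f - f_1)\,d\mu$ in $[0,+\infty]$, and since $f_1$ is $\mu$-integrable, $\int f_n\,d\mu \uparrow \int f\,d\mu$ in $(-\infty,+\infty]$; the identical argument applies to $\nu$. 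The consistency result (Lemma \ref{rq_consistency}) guarantees that $\int f\,d\mu$ and $\int f\,d\nu$ are well-defined in $(-\infty,+\infty]$, so these limits make sense even when infinite. Passing to the limit in $\int f_n\,d\mu \le \int f_n\,d\nu$ then preserves the inequality and gives $\int f\,d\mu \le \int f\,d\nu$, as required.

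The main obstacle, such as it is, lies entirely in the integration-theoretic bookkeeping rather than in any conceptual difficulty: one must ensure that monotone convergence is applied to a sequence genuinely dominated below by an integrable function — the role played by subtracting $f_1$ — so that the passage to the limit is controlled and the inequality survives even in the degenerate case $\int f\,d\nu = +\infty$. Once this subtraction trick is in place, together with the linear growth of the $f_n$ and the first-moment integrability of $\mu,\nu$, the remainder of the argument is routine.
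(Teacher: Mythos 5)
Your proposal is correct and follows essentially the same route as the paper: build the Lipschitz convex approximations $f_n \uparrow f$ via the inf-convolution of Lemma \ref{inf_conv_lemma}, then pass to the limit by monotone convergence. The only difference is that you make explicit the standard bookkeeping (applying monotone convergence to $f_n - f_1 \ge 0$, using the $\mu$- and $\nu$-integrability of the Lipschitz function $f_1$) that the paper compresses into the phrase \q{straightforward application of monotone convergence theorem}; this is a welcome precision, not a different argument.
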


\begin{lemma}[Proof of Remark \ref{gen_ex_cnv_matrix}]
\label{proof_lemme_ex_gen_mat_cvx}
Let $\alpha \in [0, 1]$. For all $x \in \mathbb{R}$, set
$$O_{\alpha, x} = O^\top \cdot \diag \big(\overline{\sign}(\lambda_1(x)),\ldots,\overline{\sign}(\lambda_q(x)) \big).$$
Note that, for any $x,y \in \mathbb{R}$, matrix $O_{\alpha, x}$, $O_{\alpha, y}$ thus defined are orthogonal as a product of two orthogonal matrix. Then, by simple algebra, one has
\begin{align*}
    &\big(\alpha \sigma(x)O_{\alpha, x} + (1-\alpha)\sigma(y)O_{\alpha, y}\big)\big(\alpha \sigma(x)O_{\alpha, x} + (1-\alpha)\sigma(y)O_{\alpha, y}\big)^\top - \sigma \sigma^\top\big(\alpha x + (1-\alpha)y \big)\\
    &= A \Big[\Tilde{D}_{\alpha}(x, y) - D(\alpha x + (1-\alpha)y)^2 \Big] A^\top,
\end{align*}
where the diagonal matrix $\Tilde{D}_{\alpha}(x, y) \in \mathbb{M}_{q,q}(\mathbb{R})$ is given by,
$$\Tilde{D}_{\alpha}(x, y) := \alpha^2 D^2(x) + \alpha(1-\alpha)D(x)O O_{\alpha, x}O_{\alpha, y}^\top O^\top D(y) + \alpha(1-\alpha)D(y)O O_{\alpha, y}O_{\alpha, x}^\top O^\top D(x) + (1-\alpha)^2D^2(y).$$
But, using simple algebra and the definition of matrix $O_{\alpha, x}$, $O_{\alpha, y}$, one has
$$D(x)O O_{\alpha, x}O_{\alpha, y}^\top O^\top D(y) = \overline{D}(x) \cdot \overline{D}(y) \quad \text{and} \quad D(y)O O_{\alpha, y}O_{\alpha, x}^\top O^\top D(x) = \overline{D}(x) \cdot \overline{D}(y)$$
with
$$\overline{D}(x) := \diag\big(|\lambda_1(x)|, \ldots, |\lambda_q(x)| \big).$$
Thus the diagonal matrix $\Tilde{D}_{\alpha}(x, y)$ reads,
$$\Tilde{D}_{\alpha}(x, y) = \alpha^2 D^2(x) + 2\alpha(1-\alpha)\overline{D}(x) \cdot \overline{D}(y) + (1-\alpha)^2D^2(y)$$
and the diagonal matrix $\Tilde{D}_{\alpha}(x, y) - D(\alpha x + (1-\alpha)y)^2$ has non-negative diagonal entries owing to the convexity of all $|\lambda_i|$. This prove that $\Tilde{D}_{\alpha}(x, y) - D(\alpha x + (1-\alpha)y)^2 \in \mathcal{S}^{+}\big(d, \mathbb{R}\big)$ and as a straightforward consequence that $A \Big[\Tilde{D}_{\alpha}(x, y) - D(\alpha x + (1-\alpha)y)^2 \Big] A^\top \in \mathcal{S}^{+}\big(d, \mathbb{R}\big)$. This completes the proof.
\end{lemma}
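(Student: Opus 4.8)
The plan is to check Definition \ref{def_cvx_matrix} directly: for fixed $x, y \in \mathbb{R}$ and $\alpha \in [0,1]$ (playing the role of $\lambda$ in \eqref{cond_matrix_conv}), I must produce orthogonal matrices $O_{\alpha,x}, O_{\alpha,y} \in \mathcal{O}(q,\mathbb{R})$ for which
$$\big(\alpha\sigma(x)O_{\alpha,x} + (1-\alpha)\sigma(y)O_{\alpha,y}\big)\big(\alpha\sigma(x)O_{\alpha,x} + (1-\alpha)\sigma(y)O_{\alpha,y}\big)^\top - \sigma\sigma^\top(\alpha x + (1-\alpha)y) \in \mathcal{S}^{+}(d,\mathbb{R}).$$
The guiding idea is that the fixed right factor $O$ in $\sigma(x) = A\,\diag(\lambda_1(x),\dots,\lambda_q(x))\,O$ can be neutralised, and the remaining diagonal part realigned in sign, by a judicious choice of the free orthogonal matrices; this is exactly what reduces the matrix inequality to a scalar, entrywise convexity statement about the $|\lambda_i|$.

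First I would set $D(x) := \diag(\lambda_1(x),\dots,\lambda_q(x))$, so $\sigma(x) = A\,D(x)\,O$, and choose
$$O_{\alpha,x} := O^\top\,\diag\big(\overline{\sign}(\lambda_1(x)),\dots,\overline{\sign}(\lambda_q(x))\big), \qquad O_{\alpha,y} := O^\top\,\diag\big(\overline{\sign}(\lambda_1(y)),\dots,\overline{\sign}(\lambda_q(y))\big).$$
Each is orthogonal, being a product of the orthogonal $O^\top$ with a diagonal $\pm 1$ matrix. The key simplification is $\sigma(x)O_{\alpha,x} = A\,D(x)\,\diag(\overline{\sign}(\lambda_i(x))) = A\,\overline{D}(x)$, where $\overline{D}(x) := \diag(|\lambda_1(x)|,\dots,|\lambda_q(x)|)$, using $\lambda_i(x)\,\overline{\sign}(\lambda_i(x)) = |\lambda_i(x)|$. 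Hence the convex combination collapses to the single expression $\alpha\sigma(x)O_{\alpha,x} + (1-\alpha)\sigma(y)O_{\alpha,y} = A\big(\alpha\overline{D}(x) + (1-\alpha)\overline{D}(y)\big)$.

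Next I would exploit that all matrices in sight are now diagonal (hence symmetric). Because $O O^\top = I_q$ and $M M^\top = M^2$ for diagonal $M$, both the mixed term and $\sigma\sigma^\top(\alpha x + (1-\alpha)y) = A\,D(\alpha x + (1-\alpha)y)^2 A^\top$ can be sandwiched between $A$ and $A^\top$, so the difference above equals
$$A\Big[\big(\alpha\overline{D}(x) + (1-\alpha)\overline{D}(y)\big)^2 - D(\alpha x + (1-\alpha)y)^2\Big]A^\top.$$
The bracket is diagonal with $i$-th entry $\big(\alpha|\lambda_i(x)| + (1-\alpha)|\lambda_i(y)|\big)^2 - \lambda_i(\alpha x + (1-\alpha)y)^2$. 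Here the convexity hypothesis enters: $|\lambda_i(\alpha x + (1-\alpha)y)| \le \alpha|\lambda_i(x)| + (1-\alpha)|\lambda_i(y)|$, and since both sides are non-negative, squaring preserves the inequality, so every diagonal entry is $\ge 0$. Thus the bracketed matrix lies in $\mathcal{S}^{+}(q,\mathbb{R})$, and I would finish with the elementary observation that $A S A^\top \in \mathcal{S}^{+}(d,\mathbb{R})$ whenever $S \in \mathcal{S}^{+}(q,\mathbb{R})$ (as $z^\top A S A^\top z = (A^\top z)^\top S (A^\top z) \ge 0$), which is precisely condition \eqref{cond_matrix_conv}.

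The only subtle step is the choice of the sign-correcting orthogonal matrices. The definition of $\preceq$-convexity crucially permits $O_{\alpha,x}$ to depend on the point $x$, and it is exactly this freedom that lets one replace the possibly sign-indefinite $\lambda_i$ by $|\lambda_i|$ and so align the cross term, after which the product $O O^\top = I_q$ washes out the fixed factor $O$. Everything beyond that is routine diagonal bookkeeping together with the convexity of the scalar functions $|\lambda_i|$, so I expect no further obstacle.
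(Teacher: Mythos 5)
Your proof is correct and follows essentially the same route as the paper's: the same sign-correcting choice $O_{\alpha,x} = O^\top\diag(\overline{\sign}(\lambda_1(x)),\dots,\overline{\sign}(\lambda_q(x)))$, the same reduction to a diagonal matrix sandwiched as $A[\,\cdot\,]A^\top$, and the same entrywise use of the convexity of the $|\lambda_i|$. The only difference is cosmetic: you simplify $\sigma(x)O_{\alpha,x} = A\overline{D}(x)$ first so the combination collapses to a perfect square $\big(\alpha\overline{D}(x)+(1-\alpha)\overline{D}(y)\big)^2$, whereas the paper expands the product and then identifies the cross terms; the two computations coincide since $\overline{D}(x)^2 = D(x)^2$.
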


\section{Background on convex ordering (Proofs)}
\label{appendix_proof_some_prop}
\begin{proof}[Proof of Proposition \ref{prop_first_prop_cvx_ord}]
\label{proof_first_result_cvx_ord}
We only prove (B). First notice that if $U = \mathbb{E}(V \rvert U)$ then one has for every convex function $f : \mathbb{R}^d \to \mathbb{R}$ by applying (conditional) Jensen's inequality and then the law of iterated expectations, one has
$$\mathbb{E}f(U) = \mathbb{E}\big(f\big(\mathbb{E}\big(V \rvert U \big) \big) \big) \le \mathbb{E}\big( \mathbb{E}\big(f(V) \rvert U \big)\big) = \mathbb{E}f(V)$$
so that $U \preceq_{cvx} V$. Besides, let $Z_1, Z_2 \sim \mathcal{N}(0,\mathbf{I}_q)$ be independent random vectors. We define
$$U = AZ_1 \quad \quad \text{and} \quad \quad V = U + \big(BB^\top - AA^\top \big)^{1/2} Z_2,$$
where for some $d \times d$ real matrix $M \in \mathcal{S}^{+}\big(d, \mathbb{R}\big)$, the matrix $M^{1/2}$ is defined and satisfy $M^{1/2} \cdot \big(M^{1/2}\big)^\top = M$. We have $V \sim \mathcal{N}\big(0,AA^\top + \big(\big(BB^\top - AA^\top \big)^{1/2}\big)^2 \big) = \mathcal{N}\big(0,BB^\top \big)$ and the result follows by noticing that $U = \mathbb{E}(V \rvert U)$.
\end{proof}

\begin{proof}[Proof of Proposition \ref{useful_properties_cvx_ord}]
Let $f : \mathbb{R}^d \to \mathbb{R}$ be a convex function.
\begin{enumerate}[label=\roman*.]
\item For $x, y \in \mathbb{R}^d, \lambda \in [0, 1]$, define $\Bar{xy}^{\lambda} := \lambda x + (1-\lambda) y$. Using successively the $\preceq$-convexity of $\sigma_{t_{\ell}^{(mn)}}$, Proposition \ref{gen_radial_distri}, and the convexity of $f$, we get:
\begin{align*}
    \mathrm{P}^{(mn)}_\ell(f)\big(\Bar{xy}^{\lambda}\big) &= \mathbb{E}f\big(\Bar{xy}^{\lambda} + h \cdot \kappa(t_{\ell}^{(mn)}) (\Bar{xy}^{\lambda}-\zeta) + \sqrt{h} \cdot \sigma_{t_{\ell}^{(mn)}}(\Bar{xy}^{\lambda})Z  \big)\\
    &\le \mathbb{E}f\big(\Bar{xy}^{\lambda} + h \cdot \kappa(t_{\ell}^{(mn)}) (\Bar{xy}^{\lambda}-\zeta) + \sqrt{h} \cdot (\lambda \sigma_{t_{\ell}^{(mn)}}(x)Z + (1-\lambda)\sigma_{t_{\ell}^{(mn)}}(y) Z)  \big)\\
    &\le \lambda \mathrm{P}^{(mn)}_\ell(f)(x) + (1-\lambda) \mathrm{P}^{(mn)}_\ell(f)(y).
\end{align*}

\item This is a straightforward application of Proposition \ref{gen_radial_distri}.
\end{enumerate}
\end{proof}

\section{Some results on matrix \texorpdfstring{$\Gamma$}{}}
\label{chol_decomp}
We focus on the correlation matrix $\Gamma$ \eqref{mat_gamma}.
\begin{Proposition}[Explicit Cholesky decomposition]
\label{chol_dec_prop}
Consider the definite positive matrix $\Gamma(\rho)$ defined in \eqref{mat_gamma} with $\rho \in \big(-\frac{1}{q - 1}, 1\big)$. Its Cholesky decomposition is given by
\begin{equation}
\label{chol_decomp_mat_gamma}
L(\rho) = \begin{pmatrix} 
         d_1 &0& 0 & \cdots & 0 \\ 
         \ell_1 & d_2 & 0 & \cdots & 0\\
         \ell_1 & \ell_2 & d_3 & \cdots & 0\\ 
         \vdots & &  & \ddots &  \\ 
         \ell_1 & \ell_2 & \ell_3 & \cdots & d_q
\end{pmatrix}\in \mathbb{M}_{q,q}\big(\mathbb{R}\big),
\end{equation}
where $d_1 = 1, \ell_1 = \rho$ and for any $j \ge 2$,
$$d_j = \sqrt{d_{j-1}^2 - \ell_{j-1}^2} \hspace{0.6cm} \text{and} \hspace{0.6cm} \ell_j = \frac{\rho - 1}{d_j} + d_j.$$
\end{Proposition}

\begin{proof}
We proceed by an induction on $q$. It is straightforward that the result holds for $q = 2$. Assume it holds for some $q$. For convenience we use $\Gamma_q$ instead of $\Gamma(\rho)$ to specify that $\Gamma(\rho)$ lies in $\mathbb{M}_{q, q}\big(\mathbb{R}\big)$. By the induction assumption and using block partition of $\Gamma_q$, we have,
\begin{equation}
\label{hyp_rec_mat_chol}
\Gamma_q = \begin{pmatrix} 
         L_{q-1} &0\\ 
         v_q & d_q
\end{pmatrix} \times \begin{pmatrix} 
         L_{q-1}^\top & v_q^\top\\ 
         0 & d_q
\end{pmatrix} = \begin{pmatrix} 
         L_{q-1} \cdot L_{q-1}^\top & L_{q-1} \cdot v_q^\top\\ 
         v_q \cdot L_{q-1}^\top & v_q \cdot v_q^\top + d_q^2
\end{pmatrix},
\end{equation}
where $v_q = \big(\ell_1, \ldots, \ell_{q-1} \big)$. Then equating the bottom right element of these two matrices gives,
\begin{equation}
\label{first_res}
v_q \cdot v_q^\top + d_q^2 = 1.
\end{equation}

Using \eqref{first_res} which implies $v_q \cdot v_q^\top = 1 - d_q^2 $ and then identifying the off-diagonal elements in the final column yields, for $1 \le i < q$, $\rho = \big[L_{q-1} \cdot v_q^\top \big]_i$ so that,
\begin{equation}
\label{second_res}
L_{q-1} \cdot v_q^\top = (\rho, \ldots, \rho).
\end{equation}

Let us come back to our purpose which is to show that \eqref{hyp_rec_mat_chol} holds for $q+1$. It suffices to prove that $L_{q} \cdot v_{q+1}^\top = (\rho, \ldots, \rho)$ and $v_{q+1} \cdot v_{q+1}^\top + d_{q+1}^2 = 1$. But,
\begin{equation*}
L_{q} \cdot v_{q+1}^\top = \begin{pmatrix} 
         L_{q-1} & 0\\ 
         v_{q} & d_q
\end{pmatrix} \times \begin{pmatrix} 
         v_{q} ^\top\\ 
         \ell_{q}
\end{pmatrix} = \begin{pmatrix} 
         L_{q-1} \cdot v_{q} ^\top\\ 
         v_{q} \cdot v_{q}^\top + d_{q} \cdot \ell_q
\end{pmatrix}.
\end{equation*}
Moreover, it follows from \eqref{first_res} that $ v_{q} \cdot v_{q}^\top + d_{q} \cdot \ell_q = 1 - d_q^2 + d_q \cdot \big( \frac{\rho - 1}{d_q} + d_q \big) = \rho$. Combined with \eqref{second_res} implies that $L_{q} \cdot v_{q+1}^\top = (\rho, \ldots, \rho)$. Finally, it follows from the block partition of $v_{q+1}$ and equation \eqref{first_res},
$$v_{q+1} \cdot v_{q+1}^\top + d_{q+1}^2 = v_{q} \cdot v_{q}^\top + \ell_q ^2+ d_{q+1}^2 = 1 - d_q^2 + \ell_q^2 + d_q^2 - \ell_q^2 = 1.$$
This completes the proof.
\end{proof}

\begin{Proposition}[$\preceq$-monotony in $\rho$]
\label{mono_corr}
Let $\rho_1, \rho_2 \in \big(-\frac{1}{q-1}, 1 \big)$ such that $\rho_1 \le \rho_2$. For any $p \in \{1,2\}$, consider the matrix-valued field,
$$\sigma_{\rho_p}(x) := \big(\lambda_1(x), \ldots, \lambda_q(x) \big) \cdot L(\rho_p) \in \mathbb{M}_{1, q}\big(\mathbb{R}\big),$$
for some non-negative (real) functions $\lambda_i$ ($1 \le i \le q$) and where $L(\rho_p)$ denotes the Cholesky decomposition of the correlation matrix $\Gamma(\rho_p)$ i.e., the matrix $\Gamma$ in \eqref{mat_gamma} associated with the correlation parameter $\rho_p$. Then we have,
$$\sigma_{\rho_1}(x) \preceq \sigma_{\rho_2}(x).$$
\end{Proposition}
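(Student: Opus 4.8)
The plan is to unwind the definition of the preorder $\preceq$ in the scalar case $d=1$ and reduce the statement to a single quadratic-form inequality in the (column) vector $\lambda(x) := \big(\lambda_1(x),\ldots,\lambda_q(x)\big)^\top$. Since $\sigma_{\rho_p}(x) = \lambda(x)^\top L(\rho_p)$ is a $1\times q$ matrix, the product $\sigma_{\rho_p}(x)\sigma_{\rho_p}(x)^\top$ is a scalar, and by the preorder \eqref{special_mat_pre_order} the claim $\sigma_{\rho_1}(x)\preceq\sigma_{\rho_2}(x)$ is equivalent to the single inequality $\sigma_{\rho_2}(x)\sigma_{\rho_2}(x)^\top \ge \sigma_{\rho_1}(x)\sigma_{\rho_1}(x)^\top$.

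First I would use the Cholesky identity $L(\rho_p)L(\rho_p)^\top = \Gamma(\rho_p)$ to rewrite
$$\sigma_{\rho_p}(x)\sigma_{\rho_p}(x)^\top = \lambda(x)^\top L(\rho_p)L(\rho_p)^\top \lambda(x) = \lambda(x)^\top \Gamma(\rho_p)\lambda(x).$$
Note that the explicit form of $L(\rho_p)$ from Proposition \ref{chol_dec_prop} is not needed here; only the factorization $\Gamma = LL^\top$ matters, which turns the problem into a purely algebraic comparison of two quadratic forms associated with $\Gamma(\rho_1)$ and $\Gamma(\rho_2)$.

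Next I would exploit the affine-in-$\rho$ structure of $\Gamma(\rho)$. Writing $J$ for the all-ones $q\times q$ matrix, the entrywise formula \eqref{mat_gamma} gives $\Gamma(\rho) = I_q + \rho(J - I_q)$, hence $\Gamma(\rho_2) - \Gamma(\rho_1) = (\rho_2 - \rho_1)(J - I_q)$, so that
$$\sigma_{\rho_2}(x)\sigma_{\rho_2}(x)^\top - \sigma_{\rho_1}(x)\sigma_{\rho_1}(x)^\top = (\rho_2 - \rho_1)\,\lambda(x)^\top (J - I_q)\lambda(x).$$
Evaluating the quadratic form via $\lambda(x)^\top J \lambda(x) = \big(\sum_{i}\lambda_i(x)\big)^2$ and $\lambda(x)^\top \lambda(x) = \sum_i \lambda_i(x)^2$ yields $\lambda(x)^\top (J - I_q)\lambda(x) = \sum_{i\ne j}\lambda_i(x)\lambda_j(x)$.

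Finally I would invoke the two sign hypotheses: $\rho_1 \le \rho_2$ gives $\rho_2 - \rho_1 \ge 0$, and the assumed non-negativity of each $\lambda_i$ gives $\sum_{i\ne j}\lambda_i(x)\lambda_j(x) \ge 0$. The displayed difference is therefore non-negative, i.e. it lies in $\mathcal{S}^{+}(1,\mathbb{R})$, which is precisely $\sigma_{\rho_1}(x)\preceq\sigma_{\rho_2}(x)$. There is no genuine obstacle in this argument; the only step that truly uses the assumptions is the last one, where the non-negativity of the $\lambda_i$ is essential, since without it the cross-term $\sum_{i\ne j}\lambda_i(x)\lambda_j(x)$ could be negative and the monotonicity in $\rho$ would break down.
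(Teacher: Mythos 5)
Your proof is correct and follows essentially the same route as the paper: both reduce $\sigma_{\rho_1}(x)\preceq\sigma_{\rho_2}(x)$ to the scalar inequality $\lambda(x)^\top\Gamma(\rho_1)\lambda(x)\le\lambda(x)^\top\Gamma(\rho_2)\lambda(x)$ via $LL^\top=\Gamma$, and then observe that the difference equals $(\rho_2-\rho_1)\sum_{i\neq j}\lambda_i(x)\lambda_j(x)\ge 0$ by the non-negativity of the $\lambda_i$. Your explicit use of the decomposition $\Gamma(\rho)=I_q+\rho(J-I_q)$ is only a cosmetic repackaging of the paper's direct expansion $\lambda(x)^\top\Gamma(\rho_p)\lambda(x)=\sum_i\lambda_i(x)^2+\rho_p\sum_{i\neq j}\lambda_i(x)\lambda_j(x)$.
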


\begin{proof}
Note that, 
\begin{align*}
\sigma_{\rho_p}\sigma^\top_{\rho_p}(x) &= \big(\lambda_1(x), \ldots, \lambda_q(x) \big) \cdot \Gamma(\rho_p) \cdot \big(\lambda_1(x), \ldots, \lambda_q(x) \big)^\top\\
&= \sum_{i = 1}^{q} \sum_{j = 1}^{q} \Gamma_{i, j}(\rho_p) \lambda_i(x) \lambda_j(x) = \sum_{i = 1}^{q} \lambda_i(x)^2 + \rho_p \cdot \sum_{1 \le i \neq j \le q} \lambda_i(x) \lambda_j(x).
\end{align*}
Thus, since $\rho_1 \le \rho_2$ and $\lambda_i(x)$ are non-negative, then $\sigma_{\rho_1}\sigma^\top_{\rho_1}(x) \le \sigma_{\rho_2}\sigma^\top_{\rho_2}(x)$ so that $\sigma_{\rho_1}(\cdot) \preceq \sigma_{\rho_2}(\cdot)$. This completes the proof.
\end{proof}

\begin{Proposition}
\label{vol_mono}
Let $A, B \in \mathbb{M}_{1, q}\big(\mathbb{R}\big)$. Consider the correlation matrix $\Gamma(\rho)$ in \eqref{mat_gamma} and denote by $L(\rho)$ its Cholesky decomposition given by \eqref{chol_decomp_mat_gamma}. We have the following results.
\begin{countlist}[label={(\alph*)}]{otherlist2}
  \item \label{gen_dom_vol} If,
  \begin{equation}
  \label{eq_dom_gen_vol}
  \sum_{i = 1}^{q}\sum_{j = 1}^{q} \big(B_{1,i} B_{1,j} - A_{1,i} A_{1,j} \big) \cdot \Gamma_{i, j}(\rho) \ge 0,
  \end{equation}
  then, we have $AL(\rho) \preceq BL(\rho)$.
  
  \item In particular, if $\rho \in [0,1)$, then $ |A|^2 \le |B|^2 \implies AL \preceq BL$.
\end{countlist}
\end{Proposition}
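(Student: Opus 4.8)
The plan is to reduce both claims to a single scalar computation, exploiting that here $d=1$, so that for row vectors $U,V \in \mathbb{M}_{1,q}(\mathbb{R})$ the relation $U \preceq V$ from \eqref{special_mat_pre_order} merely asserts that the $1\times 1$ matrix $VV^\top - UU^\top$ is non-negative. The key algebraic input is the defining property of the Cholesky factor, $LL^\top = \Gamma$, which lets me rewrite the quadratic forms built from $AL$ and $BL$ directly in terms of $\Gamma$.

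For \ref{gen_dom_vol}, I would simply expand
\[
(BL)(BL)^\top - (AL)(AL)^\top = B\,LL^\top B^\top - A\,LL^\top A^\top = B\Gamma B^\top - A\Gamma A^\top = \sum_{i=1}^{q}\sum_{j=1}^{q}\big(B_{1,i}B_{1,j}-A_{1,i}A_{1,j}\big)\Gamma_{i,j}.
\]
Since this is a scalar, it belongs to $\mathcal{S}^{+}(1,\mathbb{R})$ if and only if it is non-negative, which is precisely hypothesis \eqref{eq_dom_gen_vol}. Hence $AL \preceq BL$, and \ref{gen_dom_vol} is proved with no further work.

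For the second claim, I would feed the explicit structure of $\Gamma$ into the scalar obtained above. Writing $\Gamma = (1-\rho)I_q + \rho\,\mathbf{1}\mathbf{1}^\top$ with $\mathbf{1}=(1,\ldots,1)^\top$, and setting $s_A = \sum_i A_{1,i}$ and $s_B = \sum_i B_{1,i}$, the scalar splits as
\[
B\Gamma B^\top - A\Gamma A^\top = (1-\rho)\big(|B|^2 - |A|^2\big) + \rho\big(s_B^2 - s_A^2\big).
\]
Since $\rho < 1$, the first term is non-negative under the hypothesis $|A|^2 \le |B|^2$. The only delicate point, and the real obstacle, is the sign of the cross term $\rho(s_B^2 - s_A^2)$: with $\rho \ge 0$ it is controlled as soon as $s_A^2 \le s_B^2$, which is automatic in the intended setting where the entries are non-negative and ordered componentwise (e.g. $0 \le A_{1,i}\le B_{1,i}$, as in the volatility comparison of Section \ref{sec_appl}, which forces $0\le s_A \le s_B$ and $|A|^2 \le |B|^2$ simultaneously). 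Thus I would either invoke this componentwise non-negativity explicitly or record the implication under the combined conditions $|A|^2\le |B|^2$ and $s_A^2 \le s_B^2$; under either, both terms above are non-negative and $AL\preceq BL$ follows at once from \ref{gen_dom_vol}.
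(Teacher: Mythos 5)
Your part \ref{gen_dom_vol} is exactly the paper's argument: expand $(BL)(BL)^\top-(AL)(AL)^\top=B\Gamma B^\top-A\Gamma A^\top$ via $LL^\top=\Gamma$ and note that for $1\times q$ row matrices the preorder \eqref{special_mat_pre_order} reduces to non-negativity of this scalar. Nothing to add there.

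For the second claim, your expansion is the correct one, and it diverges from the paper's proof in a way that matters. The paper asserts
\[
\sum_{i,j}\big(B_{1,i}B_{1,j}-A_{1,i}A_{1,j}\big)\Gamma_{i,j}
=\rho\Big(\sum_{i=1}^q B_{1,i}-A_{1,i}\Big)^2+(1-\rho)\sum_{i=1}^q\big(B_{1,i}^2-A_{1,i}^2\big),
\]
i.e.\ it turns the $\rho$-term into a perfect square, from which non-negativity is immediate. That identity is false: the correct coefficient of $\rho$ is $s_B^2-s_A^2$ with $s_A=\sum_i A_{1,i}$, $s_B=\sum_i B_{1,i}$, exactly as you wrote. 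The obstacle you isolate is therefore genuine, and in fact the second claim of Proposition \ref{vol_mono} is false as printed. Take $q=2$, $\rho=\tfrac12$, $A=(1,1)$, $B=(0,\sqrt2)$: the entries are non-negative and $|A|^2=|B|^2=2$, yet
\[
B\Gamma B^\top-A\Gamma A^\top=(1-\rho)\cdot 0+\rho\,(s_B^2-s_A^2)=\tfrac12\,(2-4)=-1<0,
\]
so $AL\not\preceq BL$. Thus your insistence on an extra hypothesis is a correction, not pedantry: one needs something like your combined conditions $|A|^2\le|B|^2$ and $s_A^2\le s_B^2$, or componentwise domination $0\le A_{1,i}\le B_{1,i}$; note that non-negativity of the entries alone does not rescue the printed statement, as the counterexample shows. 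Your repaired version also covers the only use the paper makes of this result (the sensitivity with respect to $(\tilde{\sigma}_j)_j$ in Section \ref{sec_appl}), since there the two row vectors have componentwise dominated entries of a common sign, so that $A_{1,i}A_{1,j}\le B_{1,i}B_{1,j}$ for all $i,j$ and the conclusion already follows from part \ref{gen_dom_vol} with $\rho\ge 0$.
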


\begin{proof}
We drop the variable $\rho$ for simplicity.

\begin{countlist}[label={(\alph*)}]{otherlist3}
  \item Since $LL^\top = \Gamma$, we have,
  $$\big(BL\big)\big(BL\big)^\top - \big(AL\big)\big(AL\big)^\top = B \Gamma  B^\top - A \Gamma A^\top = \sum_{i = 1}^{q}\sum_{j = 1}^{q} \big(B_{1,i} B_{1,j} - A_{1,i} A_{1,j} \big) \cdot \Gamma_{i, j}.$$
  Thus condition \eqref{eq_dom_gen_vol} implies that $0 \le A \Gamma A^\top \le B \Gamma B^\top$ which yields $AL \preceq BL$.
  
  \item Assume that $\rho \in [0, 1)$. Using the definition of the correlation matrix $\Gamma$, one has:
  $$\sum_{i = 1}^{q}\sum_{j = 1}^{q} \big(B_{1,i} B_{1,j} - A_{1,i} A_{1,j} \big) \cdot \Gamma_{i, j} = \rho \Big(\sum_{i = 1}^q B_{1,i} - A_{1,i}  \Big)^2 + (1-\rho) \sum_{i = 1}^q B_{1,i}^2 - A_{1,i}^2$$
  which is non negative if $\rho \in [0, 1)$ and $|A|^2 \le |B|^2$. This completes the proof.
\end{countlist}
\end{proof}
\end{document}